\def\D{{\mathcal D}}
\def\H{{\mathcal H}}
\def\C{{\mathcal C}}
\def\P{{\mathcal P}}
\def\M{{\mathcal M}}
\def\S{{\mathcal S}}
\def\cE{\mathbb{E}}
\def\bpi{{\boldsymbol \pi}}
\def\bgamma{{\boldsymbol \gamma}}
\def\ent{{\mathsf{ent}}}
\def\Mar{{\mathsf{Mar}}}
\def\Inv{{\mathsf{Inv}}}
\def\sTr{{\mathsf{Tr}}}
\def\sX{{\mathsf X}}
\def\sA{{\mathsf A}}
\def\sE{{\mathsf E}}
\def\cL{\mathcal{L}}
\def\R{\mathbb{R}}
\def\N{{\mathcal N}}
\def\ker{\mathbb{\rm ker}}
\def\Id{\mathbb{Id}}
\def\dim{\mathop{\rm dim}}
\def\diag{\mathop{\rm diag}}
\def\tr{\mathbb{\rm Tr}}
\def\T{\mathbb{\rm T}}
\def\rR{\mathbb{\rm R}}
\def\L{\mathbb{\rm L}}
\def\cC{\mathop{\rm C}}
\def\argmin{\mathop{\rm arg\, min}}
\def\range{\mathbb{\rm Range}}
\theoremstyle{thmstyleone}
\newtheorem{theorem}{Theorem}
\newtheorem{proposition}{Proposition}
\theoremstyle{thmstyletwo}
\theoremstyle{thmstylethree}
\newtheorem{definition}{Definition}
\newtheorem{assumption}{Assumption}
\newtheorem{remark}{Remark}
\begin{document}
\title[Article Title]{Quantum Markov Decision Processes: Dynamic and Semi-Definite Programs for Optimal Solutions}

\author*[1]{\fnm{Naci} \sur{Saldi}}\email{naci.saldi@bilkent.edu.tr}

\author[2]{\fnm{Sina} \sur{Sanjari}}\email{sanjari@rmc.ca}

\author[3]{\fnm{Serdar} \sur{Y\"uksel}}\email{yuksel@queensu.ca}

\affil*[1]{\orgdiv{Department of Mathematics}, \orgname{Bilkent University}, \orgaddress{\street{\c{C}ankaya}, \city{Ankara}, \country{T\"{u}rkiye}}}

\affil[2]{\orgdiv{Department of Mathematics and Computer Science}, \orgname{Royal Military College},  \state{Ontario}, \country{Canada}}

\affil[3]{\orgdiv{Department of Mathematics and Statistics}, \orgname{Queen's University}, \orgaddress{\city{Kingston}, \state{Ontario}, \country{Canada}}}

\abstract{In this paper, building on the formulation of quantum Markov decision processes (q-MDPs) presented in our previous work [{\sc N.~Saldi, S.~Sanjari, and S.~Y\"{u}ksel}, {\em Quantum Markov Decision Processes: General Theory, Approximations, and Classes of Policies}, SIAM Journal on Control and Optimization, 2024], our focus shifts to the development of semi-definite programming approaches for optimal policies and value functions of both open-loop and classical-state-preserving closed-loop  policies. First, by using the duality between the dynamic programming and the semi-definite programming formulations of any q-MDP with open-loop policies, we  establish that the optimal value function is linear and there exists a stationary optimal policy among open-loop policies. Then, using these results, we establish a method for computing an approximately optimal value function and formulate computation of optimal stationary open-loop policy as a bi-linear program. Next, we turn our attention to classical-state-preserving closed-loop policies. Dynamic programming and semi-definite programming formulations for classical-state-preserving closed-loop  policies are established, where duality of these two formulations similarly enables us to prove that the optimal policy is linear and there exists an optimal stationary classical-state-preserving closed-loop policy. Then, similar to the open-loop case, we establish a method for computing the optimal value function and pose computation of optimal stationary classical-state-preserving closed-loop policies as a bi-linear program.}

\keywords{Markov decision processes, quantum information, discounted cost, linear programming, semi-definite programming.}

\maketitle

\section{Introduction}\label{sec0}

In this continuation of our work on quantum Markov decision processes (q-MDPs), we build on the novel mathematical framework introduced in our previous paper \cite{SaSaYu24}. Our focus here shifts from theoretical foundations to the development of semi-definite programming techniques aimed at deriving optimal policies and value functions for both open-loop and classical-state-preserving closed-loop scenarios. While our earlier work unified classical MDP theory with recent advancements in quantum MDPs, creating a coherent framework, this paper extends those theoretical insights into methodologies for achieving optimal solutions. Here, we aim to adapt classical techniques such as dynamic and linear programming to the quantum domain.

Classical MDPs, along with their partially observable and decentralized counterparts, have long been recognized as versatile models with significant relevance across a wide range of fields, including engineering, economics, and natural and applied sciences \cite{HeLa96}. The primary objective in MDP theory is to identify an optimal control policy, typically aimed at minimizing a cost function. Common cost criteria include the discounted cost, average cost, or finite horizon cost. In this paper, we concentrate on the discounted cost scenario, whose analysis can be done via the two well-established approaches: dynamic programming and linear programming. These two methods are intricately connected, with dynamic programming often considered as the dual of linear programming \cite{HeLa96}.

Despite the extensive research on classical MDPs, a comprehensive study of their quantum counterparts has been lacking, even though several recent studies have explored various aspects of quantum MDPs \cite{BaBaAa14,YiYi18,YiFeYi21,SaYu22}. The most well-known early attempt to extend the classical MDP model into the quantum domain is found in \cite{BaBaAa14}, where the authors introduce quantum observable Markov decision processes (QOMDPs) as a quantum analogue to partially observable MDPs (POMDPs). In a QOMDP, the state is represented by a density operator, and the set of actions consists of quantum channels (super-operators) that can be applied to these states. Each chosen super-operator is decomposed into Kraus operators, with each Kraus operator representing an observation. The next state is determined based on the outcome of the chosen super-operator, and the cost is incurred based on the chosen action and the current state. The primary focus of \cite{BaBaAa14} is on comparing complexity-related problems between classical and quantum settings. 

The primary objective of our work is to develop quantum analogs of dynamic programming and linear programming methods, with a particular focus on the latter. Unlike \cite{BaBaAa14}, which emphasizes complexity-related issues, our approach prioritizes methodological development. These quantum analogues will serve as essential tools for achieving optimal policies and value functions within q-MDPs, applicable to both open-loop and classical-state-preserving closed-loop policies.

Another related study is \cite{YiYi18}, where the authors propose a different quantum version of MDPs. In their model, the state evolves with respect to selected indivisible super-operators, while at certain times, the controller is allowed to take measurements on the system state via divisible super-operators. This approach integrates both indivisible and divisible quantum operations to model state evolution, with again a focus on addressing complexity-related issues.

In \cite{YiFeYi21}, the authors present a quantum version of MDPs that is particularly relevant to our work, as it also aims to compute optimal policies and value functions. In this model, the state is represented as a density operator on a finite-dimensional Hilbert space, and an action is selected from a finite set. The state then transitions to an intermediate state via a super-operator parameterized by the chosen action. Subsequently, a measurement is conducted on this intermediate state to determine both the reward function and the next state.

Given the similarities among these models, we collectively refer to them as QOMDPs. It is important to note, however, that there are key distinctions between QOMDPs and the q-MDP model we propose. For example, QOMDPs can be formulated as classical MDPs with an uncountable compact state space (i.e., the set of density operators on a finite-dimensional Hilbert space) and a finite action space. This formulation allows for the application of classical methodologies, such as dynamic programming and linear programming, possibly with a discrete approximation of the state space. However, within the q-MDP context, the direct application of classical techniques is not feasible; instead, these methods must be adapted to suit the quantum setup, particularly in the case of linear programming.

Moreover, since QOMDPs can be expressed as classical MDPs with uncountable compact state spaces, they can be further transformed into deterministic models by lifting the state and action spaces to the set of probability measures over the state and state-action spaces. This deterministic model can then be further transformed into a q-MDP, as classical deterministic MDPs over probability distributions can always be represented as a special case of our quantum model. Therefore, we can view QOMDPs as specialized instances of q-MDPs. However, a challenge arises in that the corresponding Hilbert space for the state must be infinite-dimensional-a scope not covered in this work. We address this complexity by approximating the classical version of QOMDPs with finite-state MDPs, whose convergence to the underlying MDP can be established using the results in \cite{SaYuLi17}. We then embed the finite-state approximate MDPs into the q-MDP formulation with finite-dimensional Hilbert spaces, enabling us to model QOMDPs approximately as a special case of q-MDPs within finite-dimensional Hilbert spaces.

In \cite{SaYu22}, the authors discuss the quantum generalization of static decentralized control systems. Since this model is static, it lacks state dynamics and thus does not involve dynamical difference equations, unlike the MDP problem. The primary goal in this generalization is to introduce additional correlations among decentralized decision-makers by allowing them to share an entangled state and perform independent quantum measurements on their respective parts of the entangled state. This approach creates extra correlations that can enhance system performance. This generalization of static decentralized control systems can be considered semi-quantum, as everything in the model remains unchanged except for the conditional distribution of each agent's action given its observation, which is determined by the outcome of a quantum measurement. Nevertheless, our q-MDP framework can be extended to such multi-agent and decentralized dynamic quantum formulations by integrating the model presented in our work with the space of decentralized control policies studied in \cite{SaYu22}.

Furthermore, there have been numerous foundational contributions in the control theory literature concerning the optimal control of quantum systems, primarily focusing on the continuous-time setting \cite{AlTi12,Dal08,JaKo18,DoPe22,BeKoSc24}. As in classical optimal control theory, while there are conceptual similarities between approaches developed for continuous-time and discrete-time systems, significant technical distinctions exist between the two frameworks.

There are also conceptual differences between the methodologies employed in the control theory literature and our approach to modeling quantum control systems. In conventional control theory, a quantum control system is typically considered a direct extension of deterministic control models, expressed through controlled differential equations. In the quantum domain, the governing differential equation for quantum dynamics is the controlled Schr\"{o}dinger equation. 

If the density operator, governed by the Schr\"{o}dinger equation, solely represents the state of a quantum system, it describes a closed quantum control system, implying no interaction with the environment. This is because the evolution of the state is controlled by a semi-group of unitary operators. However, when considering interactions between the quantum system and its environment (open quantum systems), the Schr\"{o}dinger equation is defined over the combined system-environment space. A reduced description of the state dynamics for the quantum system of interest can be obtained through the partial trace operation, which typically results in non-Markovian dynamics \cite{AlTi12}. This is analogous to classical partially observable MDPs, where the joint state of two subsystems forming a Markovian process does not imply that an individual state, which may be viewed as a measurement of the hidden Markov model, is itself Markovian. Non-Markovianity arises due to the system's interaction with an environment, leading to memory effects in the reduced system. Although there has been some works on controlling non-Markovian dynamical systems, most control-oriented research has concentrated on the Markovian case. This focus is largely due to the significant advantage of Markovian systems, which yield simpler dynamical equations \cite{AlTi12}. These simpler equations allow for the application of numerous well-established methods from classical control theory, making analysis and control design more tractable. Indeed, under appropriate assumptions, non-Markovian dynamics can be approximated by a Markovian structure, leading to a Markovian master equation in the Lindblad form \cite{Lin76}. For example, in the weak coupling limit, one can derive the Markovian master equation, where the joint state of the quantum system and the environment is approximated by a Markovian structure, with the environment state remaining invariant under the Hamiltonian action \cite{BrPe07}. In our formulation, as we do not begin with a Hamiltonian description of the quantum system, the model avoids the memory effect that typically leads to non-Markovian dynamics.

In both closed quantum systems and Markovian approximations of open quantum systems, quantum control models can be treated similarly to classical control models. Consequently, much of the research in quantum control has focused on classical control issues, such as stability \cite{TiVi09}, controllability \cite{Alt03,Alt04}, identification \cite{KoRaWa03b}. There has also been significant work on robust control of quantum systems \cite{KoGrMaBrRa13,KoBhRa22,KoBhRa23} and adaptive control approaches \cite{KoRaGr13, KoRaWa03}. Several studies have also explored the optimal control of quantum systems \cite{PeDaRa88,ZhRa98,Dal08}. Since quantum control systems are fundamentally viewed as extensions of classical control models, established techniques from classical optimal control, such as the Pontryagin maximum principle \cite{BoSiSu221} and the Hamilton-Jacobi-Bellman formalism \cite{GoBeSm05}, can be adapted to quantum systems. However, solving optimal control problems for non-Markovian dynamics in open quantum systems remains a significant challenge \cite{BiGo12}. This is because most classical control methods rely on the Markovian assumption for state dynamics, which does not generally hold for open quantum systems. This limitation represents a major difficulty in the quantum control literature, particularly for continuous-time open systems. In addition, continuous-time quantum control has been explored in the context of quantum information processing, particularly in addressing error correction challenges \cite{HoZhKoBrRa16}.

Unlike the continuous-time approach, we do not directly translate the controlled stochastic difference equation that governs the state of the classical system into an equation on the set of density operators in discrete time. Instead, we first elevate our classical stochastic control system to the space of probability measures over the state and state-action spaces. The dynamics of this expanded system can be fully described using classical channels, thereby eliminating the need for stochastic difference equations. Given that density operators and quantum channels are the quantum counterparts of probability measures and classical channels, we seamlessly derive the quantum version of this extended model. As a result, our method avoids the issue of non-Markovian dynamics and is applicable to both closed and open systems without the need for separate descriptions and interactions with an unknown environment.

In this paper, we build on the formulation of q-MDPs presented in our previous work \cite{SaSaYu24} to develop semi-definite programming techniques for deriving optimal policies and value functions in both open-loop and classical-state-preserving closed-loop scenarios. In classical Markov decision processes (MDPs), dynamic programming and linear programming are well-established methods that provide distinct yet intricately connected approaches to achieving optimal policies and value functions, with each method serving as the dual of the other \cite{HeLa96}. By extending these classical techniques into the quantum domain, we leverage this duality to establish methodologies for deriving optimal open-loop and classical-state-preserving closed-loop policies.

\subsection{Contributions}
\begin{itemize}
\item[\ding{43}] In Section~\ref{sec2}, we develop methodologies for deriving optimal policies and optimal value functions for q-MDPs with open-loop policies. To this end, we establish the dynamic programming formulation in Section~\ref{sub2sec2} (see Theorem~\ref{dynamic-qmdp}) and the semi-definite programming formulation in Section~\ref{sub3sec2} for open-loop policies. Then, by using the duality between the dynamic programming and semi-definite formulations, we prove the existence of the optimal stationary open-loop policy (see Theorem~\ref{quantum-stationary}), establish method that gives us approximate value function (see Proposition~\ref{open-opt-cost-prop}), and formulate computation of optimal stationary open-loop policy as a bi-linear program (see Proposition~\ref{open-optimal-policy-prop_comp}). 
\item[\ding{43}] In Section~\ref{sec3}, our focus shifts to the study of classical-state-preserving closed-loop  policies. In this section, we first obtain the dynamic programming principle (see Theorem~\ref{dynamic-qwmdp}) and establish the semi-definite programming formulation. Then, by using the duality of these two formulations, we establish the existence of the optimal stationary classical-state-preserving closed-loop  policy (see Theorem~\ref{closed-quantum-stationary}), establish method that gives us the value function (see Proposition~\ref{closed-opt-cost-prop}), and formulate computation of optimal stationary classical-state-preserving closed-loop policy as a bi-linear program (see Proposition~\ref{closed-optimal-policy-prop_comp}). 
\item[\ding{43}] In Section~\ref{sec1}, we revisit the deterministic reduction of classical MDPs, which is a well-known result in the literature, as documented in Section 9.2 of \cite{BeSh78}. Following the approach outlined in \cite[Chapter 6]{HeLa96}, we then present a linear programming formulation for this deterministic reduction. 
\end{itemize}

Figure~\ref{Hier} presents the hierarchy of policies defined in this paper, offering a structured overview of the policy classes and their relationships.

\begin{figure}[h]
\centering
\tiny
\begin{tikzpicture}[scale=0.75]

% Outer rectangle (representing the boundary)
\draw[thick] (-8,-6) rectangle (8,6);

% Draw the outermost oval for QP
\draw[thick] (0,0) ellipse (5.5 and 5.5);
\node at (-5.5,5.5) {\textbf{Quantum Policies (QPs)}};
\node at (-0,5) {\textbf{Markov QPs}};

% Draw the third oval for CQP
\draw[thick] (0,0) ellipse (4.5 and 4.5);
\node at (0,3) {\textbf{Closed-loop QPs}};
\node at (0,3.5) {\textbf{Clasical-State-Preserving}};

% Draw the two smaller intersecting ovals for CLQP and OQP, separated from the outer ellipses
\draw[thick] (-1.5,-0.5) ellipse (2.6 and 2.6);
\node at (-2.5,-0.5) {\textbf{Classical QPs}};

\draw[thick] (1.5,-0.5) ellipse (2.6 and 2.6);
\node at (2.5,-0.5) {\textbf{Open-loop QPs}};

\end{tikzpicture}
\normalsize
\caption{Hierarchy of policies: (i) (History dependent) quantum policies utilize all historical information (and such policies may not be physically realizable, See Remark~\ref{realizability}). (ii) Markov quantum policies rely solely on current state information without any additional constraints (and may not be physically realizable). (iii) Classical-state-preserving closed-loop quantum policies use the current state information but with a relaxed invertibility condition on the quantum channel, mirroring the classical invertibility condition. (iv) Open-loop quantum policies use current state information but impose a stricter invertibility condition on the quantum channel, again reflecting the classical case. (v) Classical quantum policies represent an embedding of classical policies within a quantum framework. We refer the reader to Remark~\ref{realizability} on physical realizability.}
\label{Hier}
\end{figure}

Throughout the rest of this paper, we aim to devise methodologies using dynamic programming and semi-definite programming methods to achieve optimal value and policy for open-loop policies and classical-state-preserving closed-loop  policies, respectively. Initially, we  examine the open-loop case.

For computational purposes, throughout the remainder of this paper, we confine our exploration solely to q-MDPs with finite dimensional Hilbert spaces $\H_{\sX}$ and $\H_{\sA}$. Additionally, we limit our focus to classical MDPs with a finite number of states and actions (i.e., $|\sX| < \infty$ and $|\sA| < \infty$). Consequently, we typically let $\dim(\H_{\sX}) = |\sX|$ and $\dim(\H_{\sA}) = |\sA|$.

\section{Quantum Markov Decision Processes}

In this section, we present the q-MDPs formulation introduced in our previous work \cite{SaSaYu24}. To ensure the discussion is self-contained, we first revisit the motivation for this model as described in \cite{SaSaYu24}. We begin by outlining the formulation of classical MDPs and their deterministic reduction, a well-established process that involves extending the state and action spaces to the set of probability measures on the state and state-action spaces. This reduction, detailed in Section 9.2 of \cite{BeSh78}, provides a critical foundation for our q-MDPs formulation. Using this deterministic MDP framework as a base, we then introduce the concept of q-MDPs. Along the way, we also provide a brief introduction to quantum systems. This section serves as a summary of our previous work \cite{SaSaYu24}, included here to maintain a self-contained presentation.

\subsection{Deterministic Reduction of Classical MDPs}\label{sec2new-1}

A discrete-time classical Markov Decision Process (MDP) is characterized by a five-tuple $\bigl( \sX, \sA, p, c \bigr)$, where $\sX$ and $\sA$ are finite sets representing the \emph{state} and \emph{action} spaces, respectively. The \emph{stochastic kernel} $p(\,\cdot\,|x,a)$ specifies the \emph{transition probability} of moving to the next state given the current state-action pair $(x,a)$. The \emph{one-stage cost} function $c$ maps from $\sX \times \sA$ to $\R$. A policy is a sequence $\bpi=\{\pi_{t}\}$ of stochastic kernels from $\sX$ to $\sA$, with the set of all possible policies denoted by $\Pi$. Let $\Phi$ represent the set of stochastic kernels $\varphi$ from $\sX$ to $\sA$. A \emph{stationary} policy is a constant sequence $\bpi=\{\pi_{t}\}$ where $\pi_{t}=\pi$ for all $t$, for some fixed $\pi \in \Phi$. The set of stationary policies is therefore equivalent to $\Phi$. Given an initial distribution $\mu_0$ on $\sX$ and a policy $\bpi$, the evolution of states and actions is described by
\begin{align}
x_0 \sim \mu_0, \nonumber \,\,\,\,
x_t \sim p(\,\cdot\,|x_{t-1},a_{t-1}), \text{ } t\geq 1, \,\,
a_t \sim \pi_t(\,\cdot\,|x_t), \text{ } t\geq0. \nonumber
\end{align}
In MDPs, the usual objective is to minimize cost measures such as finite horizon cost, discounted cost, and average (or ergodic) cost criteria, denoted as $J(\bpi,\mu_0)$. The optimal value of the MDP is defined as $J^*(\mu_0) \coloneqq \inf_{\bpi \in \Pi} J(\bpi,\mu_0)$. A policy $\bpi^{*}$ is considered optimal for $\mu_0$ if $J(\bpi^{*},\mu_0) = J^*(\mu_0)$. 

By employing a method similar to the one outlined in Section 9.2 of \cite{BeSh78}, any MDP can be transformed into an equivalent deterministic model.

To define the equivalent MDP, for each $\mu \in \P(\sX)$, we introduce
$\sA(\mu) \coloneqq \left\{\nu \in \P(\sX \times \sA) \mid v(\,\cdot\,\times\sA) = \mu(\,\cdot\,)\right\}.$
The deterministic version of an MDP, referred to as d-MDP, is described by a tuple
$
\left( \P(\sX), \P(\sX \times \sA), P, C \right),
$
where $\P(\sX)$ represents the state space and $\P(\sX \times \sA)$ represents the action space. The state transition function $P: \P(\sX \times \sA) \rightarrow \P(\sX)$ is defined by
\begin{align}
P(\nu)(\,\cdot\,) = \int_{\sX \times \sA} p(\,\cdot\,|x,a) \, \nu(dx,da). \label{eq1}
\end{align}
The \emph{one-stage cost} function $C$ is a linear map from $\P(\sX \times \sA)$ to $\R$ and is given by
\begin{align}
C(\nu) = \int_{\sX \times \sA} c(x,a) \, \nu(dx,da) \eqqcolon \langle c,\nu \rangle. \label{eq2}
\end{align}
In this setup, a Markov policy is a sequence of classical channels $\bgamma=\{\gamma_{t}\}$ from $\sX$ to $\sX \times \sA$, such that for all $t$,
\begin{align}
\gamma_t(\mu) \in \sA(\mu), \,\, \forall \,\mu \in \P(\sX). \label{inv_classical}
\end{align}
The set of all policies is denoted by $\Gamma$. A \emph{stationary} policy is a constant sequence of channels $\bgamma=\{\gamma\}$ from $\sX$ to $\sX \times \sA$, where $\gamma(\mu) \in \sA(\mu)$ for all $\mu \in \P(\sX)$.

\subsection{Quantum Systems}

Before introducing the formulation of q-MDPs, it is beneficial to review some fundamental concepts from quantum mechanics and relate them to classical systems. For detailed terminology and definitions, we refer the reader to \cite{NiCh02,Wil13,Wat18,Par92}.

A quantum system is described by a finite-dimensional complex Hilbert space $\H$, which represents the state space of the system. In quantum mechanics, we use Dirac notation to represent elements of $\H$ as ket-vectors, denoted $|\psi \rangle$. The corresponding conjugate transpose of a ket-vector $|\psi \rangle$ is called a bra-vector, denoted $\langle \psi |$. Using this notation, the inner product between two vectors can be expressed as $\langle \psi | \xi \rangle$.

According to the first postulate of quantum mechanics, the complete description of a quantum state is given by a density operator $\rho$. This operator is positive semi-definite and has a unit trace, meaning
$
\langle \psi | \rho | \psi \rangle \geq 0 \text{ for all } \psi \in \H \text{ and } \tr(\rho) \coloneqq \sum_{i=1}^n \rho_{ii} = 1,
$
where $n$ is the dimension of $\H$, and $\{\rho_{ii}\}$ are the diagonal elements of $\rho$ in any orthonormal basis of $\H$. If $\rho$ has rank one, i.e., $\rho = |\psi \rangle \langle \psi|$, where $|\psi \rangle$ is a unit vector, then $\rho$ is termed a pure state. Otherwise, it is referred to as a mixed state. The set of all density operators on $\H$ is denoted by $\D(\H)$, which is known to be a convex set with its extreme points being the pure states \cite[page 46]{Wat18}.

A common example of a density operator is one that represents a probability distribution $\mu$ over the set $[n] \coloneqq \{1,2,\ldots,n\}$, where $n$ is the dimension of $\H$. Specifically, if we represent any density operator as a matrix with respect to some orthonormal basis, then $\rho \coloneqq \diag\{\mu(i)\}$ defines a density operator since $\langle \psi | \rho | \psi \rangle \geq 0 \text{ for all } \psi \in \H$ due to $\mu(i) \geq 0$ for all $i$, and $\tr(\rho) = 1$ since $\sum_{i=1}^n \mu(i) = 1$. Thus, any probability distribution on $[n]$ can be interpreted as a density operator on $\H$, which are known as classical states.

Using above analogy, the expectation operation can be performed in the quantum framework as follows. Consider a random variable \( X \) defined on the probability space \(([n], \mu)\). We can represent this random variable as \(\phi \coloneqq \text{diag}\{X(i)\}\). The expectation of \( X \) with respect to the probability distribution \(\mu\) can then be computed in the quantum context as:
$
\cE[X] = \langle X, \mu \rangle \coloneqq \sum_{i} X(i) \mu(i) = \tr(\rho \phi) \coloneqq \langle \rho, \phi \rangle,
$
where the first inner product is the Euclidean inner product, and the second is the Hilbert-Schmidt inner product. Hence, in quantum mechanics, expectation is computed using the trace operation.

According to the second postulate of quantum mechanics, the evolution of an isolated quantum system, or a closed quantum system, is governed by a unitary transformation. Specifically, the state $\rho^+$ of a closed quantum system evolves from $\rho$ via a unitary operator $U$ (with $U U^* = U^* U = \Id$):
$
\rho^+ = U \rho U^*.
$
For example, the bit-flip operator is a unitary transformation for a two-dimensional quantum system:
$$
\begin{pmatrix}
0 & 1 \\
1 & 0
\end{pmatrix}
$$
This operator flips the qubit state $|0\rangle$ to $|1\rangle$ and vice versa.

In practice, it is difficult to prevent a quantum system from interacting with its environment, which leads to open quantum systems. The evolution of such systems can be described as follows. For two quantum systems with Hilbert spaces $\H_1$ and $\H_2$, their combined system is represented by the tensor product Hilbert space $\H_1 \otimes \H_2$. Let $\H_e$ denote the Hilbert space associated with the environment, and $\rho_e$ be the state of the environment. The combined state of the system and environment is then $\rho \otimes \rho_e$. Since this combined system $\H \otimes \H_e$ is treated as a closed quantum system, its evolution is governed by a unitary operator $U_c$:
$
\xi^+ = U_c (\rho \otimes \rho_e) U_c^*,
$
where $\xi^+ \in \D(\H \otimes \H_e)$. Note that $\xi^+$ may not remain in a simple product form. To recover the next state of our quantum system $\H$ from the combined state $\xi^+$, we apply the partial trace operation.

The partial trace operation, denoted by $\tr_{E}: \D(\H \otimes \H_e) \rightarrow \D(\H)$, is defined as $\tr_E \coloneqq \Id \otimes \tr$. When applied to product states like $\rho \otimes \rho_e$, it acts as:
\[
\tr_E(\rho \otimes \rho_e) = \rho \, \tr(\rho_e) = \rho \quad \text{since} \quad \tr(\rho_e) = 1.
\]
The partial trace operation is analogous to marginalization in classical probability. For instance, if $\nu$ is a probability distribution on $[n] \times [m]$, where $m$ is the dimension of $\H_e$, the marginalization on $[m]$ is:
\[
\Mar_{E}(\nu) = \nu(\cdot \times [m]) \eqqcolon \mu(\cdot),
\]
where $\mu$ is the marginal distribution of $\nu$ on $[n]$. If $\xi$ and $\rho$ are classical density operators corresponding to $\nu$ and $\mu$, respectively, then:
\[
\rho = \tr_E(\xi).
\]
Thus, the marginalization operation $\Mar_{E}$ in classical probability is implemented in the quantum domain via the partial trace operation $\tr_E$.

To determine the next state of our quantum system from the combined system, we use the partial trace operation:
\[
\rho^+ = \tr_E(U_c (\rho \otimes \rho_e) U_c^*) \eqqcolon \N(\rho).
\]
In other words, we trace out the environment's state to obtain the next quantum state of our system. This operation can be viewed as a linear map on density operators, denoted by $\rho \mapsto \N(\rho)$. Such operators are known as super-operators in the literature.

These super-operators are linear and exhibit the following additional properties:
\begin{itemize}
\item[\ding{43}] \textbf{Trace Preservation}: For any linear operator $\rho$ on $\H$ (not necessarily a density operator, so it may not have unit trace), the super-operator $\N$ preserves the trace, meaning $\tr(\N(\rho)) = \tr(\rho)$. This is because super-operators must map density operators to density operators.
\item[\ding{43}] \textbf{Complete Positivity}: For any quantum environment system $\H_e$, the super-operator $\N \otimes \Id$ acting on $\D(\H \otimes \H_e)$ must be positive, meaning it maps positive semi-definite operators to positive semi-definite operators. This property ensures that $\N$ itself is positive, and the combined operator $\N \otimes \Id$ remains positive for any environment quantum system $\H_e$.
\end{itemize}
Conversely, if a super-operator $\N$ is trace-preserving and completely positive, then there exists an environment quantum system $\H_e$ such that $\N$ can be realized in this way (see \cite[Section 8.2]{NiCh02}).

In quantum information theory, super-operators that are trace-preserving and completely positive are called quantum channels. Quantum channels can also connect quantum systems of different dimensions, analogous to classical channels. There are different representations of quantum channels, including the Choi, Kraus, and Stinespring representations. This paper focuses on the Kraus representation (or operator-sum representation): Let \( \N: \D(\H_1) \rightarrow \D(\H_2) \) be a super-operator. It is a quantum channel, meaning it is both trace-preserving and completely positive, if and only if there exist operators \( \{K_l\} \) from \( \H_1 \) to \( \H_2 \) such that (see \cite[Chapter 4]{Wil13})
$$
\N(\rho) = \sum_l K_l \rho K_l^* \text{ and } \sum_l K_l^* K_l = \Id.
$$
Here $\{K_l\}$ are called Kraus operators.

As previously discussed, quantum channels generalize classical channels. Therefore, a key example for a quantum channel is a classical channel. Consider a classical channel \( W: [n] \rightarrow \P([m]) \) that maps each input \( i \in [n] \) to a probability distribution \( W(\cdot|i) \) on \([m]\). This can also be represented as \( W: \P([n]) \rightarrow \P([m]) \), where:
\[
W(\mu)(\cdot) \coloneqq \sum_i W(\cdot|i) \, \mu(i).
\]
Here, we use the notation \( W \) for both mappings.

If we view probability distributions as classical quantum states, the mapping \( W \) corresponds to a super-operator. Specifically, it can be expressed as a quantum channel:
\[
\N_W(\rho) = \sum_{\substack{j \in [m] \\ i \in [n]}} \sqrt{W(j|i)} \, | j \rangle \, \langle i| \, \rho \, |i\rangle \, \langle j | \, \sqrt{W(j|i)}.
\]
Here, \( \{|i\rangle\} \) and \( \{|j\rangle\} \) are orthonormal bases for \( \H_{[n]} \) and \( \H_{[m]} \), respectively. The Kraus operators for \( \N_W \) are \( \{\sqrt{W(j|i)} \, | j \rangle \, \langle i|\}_{i,j} \), satisfying:
\[
\sum_{i,j} (\sqrt{W(j|i)} \, | j \rangle \, \langle i|)^* \, \sqrt{W(j|i)} \, | j \rangle \, \langle i| = \sum_{i,j} |i\rangle \, \langle j | \, \sqrt{W(j|i)} \, \sqrt{W(j|i)} \, | j \rangle \, \langle i| = \Id.
\]
It can be demonstrated that the quantum version of the probability distribution \( W(\mu) \) is the same as \( \N_W(\rho) \), where \( \rho \) is the quantum version of \( \mu \). Therefore, any classical channel can be realized in the quantum domain, confirming that quantum channels generalize classical channels.

Before proceeding to the formulation of q-MDPs, consider a bijective mapping \( f: [n] \rightarrow [n] \), which is a permutation of \([n]\). This mapping defines a classical lossless channel \( W(\cdot|i) \coloneqq \delta_{f(i)}(\cdot) \). The quantum equivalent of this classical channel is given by:
\[
\N_W(\rho) = U \rho U^*,
\]
where \( U \) is the permutation matrix corresponding to the function \( f \). Since \( U \) is a unitary transformation, a classical lossless channel realized by a bijective function corresponds to a closed quantum system. Thus, open quantum systems generalize stochastic evolutions in classical contexts, while closed quantum systems correspond to deterministic (bijective) evolutions in classical settings.

With this foundation, we are now ready to introduce q-MDPs.

\subsection{Quantum Markov Decision Processes}\label{sec2new-2}

Building on the connections between classical and quantum systems discussed in the previous section and using the deterministic reduction of MDPs, we are now ready to introduce the q-MDPs. This quantum model is developed based on the following observations:

\begin{itemize}
\item[\ding{43}] In quantum information and computation, density operators take the place of traditional probability distributions, as mentioned earlier.
\item[\ding{43}] Classical channels, which model probabilistic transitions in classical systems, are generalized by quantum channels in the quantum domain. Quantum channels describe the more intricate interactions and state transformations inherent to quantum systems.
\item[\ding{43}] The expectation operator, used to compute expected values in classical systems, is extended by the trace operator in the quantum framework.
\end{itemize}

With these foundational principles in mind, we can now formally define q-MDPs. This framework enables us to analyze and design decision-making processes in the quantum domain, expanding the capabilities of classical MDPs into quantum settings.

A discrete-time quantum Markov decision process (q-MDP) is characterized by the tuple
\begin{align}
\bigl( \D(\H_{\sX}), \D(\H_{\sX} \otimes \H_{\sA}), \N, C \bigr), \nonumber
\end{align}
where $\D(\H_{\sX})$ denotes the state space, $\D(\H_{\sX} \otimes \H_{\sA})$ is the action space, and $\N$ is the state transition super-operator 
$$
\N: \D(\H_{\sX} \otimes \H_{\sA}) \rightarrow \D(\H_{\sX}),
$$ 
which is a quantum channel. Consequently, given the current action in $\D(\H_{\sX} \otimes \H_{\sA})$, the next state is determined by the quantum channel $\N$, meaning that the state dynamics adhere to the Markov property where only the current action influences the next state, with past states and actions being irrelevant.

The one-stage cost function $C$ is a linear map from $\D(\H_{\sX} \otimes \H_{\sA})$ to $\R$, defined as
$$
C(\sigma) = \sTr(c \sigma) \eqqcolon \langle c,\sigma \rangle
$$
for some Hermitian operator $c$ on $\H_{\sX} \otimes \H_{\sA}$. Here, $\sTr(c \sigma) \eqqcolon \langle c,\sigma \rangle$ represents the Hilbert-Schmidt inner product.

\smallskip

\begin{definition}\label{adms}
A history dependent quantum policy consists of a sequence of quantum channels $\bgamma=\{\gamma_{t}\}$ where
$$
\gamma_t: \left(\bigotimes_{k=0}^t \D\left(\H_{\sX}\right) \right) \otimes \left(\bigotimes_{k=0}^{t-1}\D\left(\H_{\sX} \otimes \H_{\sA}\right)\right) \rightarrow \D(\H_{\sX} \otimes \H_{\sA}),
$$ 
which maps the history $\rho_0 \otimes \sigma_0 \otimes \ldots \otimes \rho_{t-1} \otimes \sigma_{t-1} \otimes \rho_t$ to an action $\sigma_t$ at time $t$. The collection of all such policies is denoted by $\Gamma$. 
\end{definition}

\smallskip

\begin{definition}\label{admsMarkov}
A history dependent quantum policy is Markov if it consists of a sequence of quantum channels $\bgamma=\{\gamma_{t}\}$ from $\H_{\sX}$ to $\H_{\sX} \otimes \H_{\sA}$, meaning that $\gamma_t$ depends solely on the current state $\rho_t$ at each time $t$. The set of all Markov policies is denoted by $\Gamma_m$. A Markov policy $\gamma$ is considered stationary if $\gamma_t = \gamma_{t'}$ for all $t,t'$.
\end{definition}

\smallskip

The state and action evolutions are governed by
\begin{align}
\rho_0 &= \rho_0, \,\,\,\,
\rho_{t+1} = \N(\sigma_t), \text{ } t \geq 0 \nonumber \\
\sigma_t &= \gamma_t(\rho_0 \otimes \sigma_0 \otimes \ldots \otimes \rho_{t-1} \otimes \sigma_{t-1} \otimes \rho_t), \text{ } t \geq 0. \nonumber
\end{align}

This framework can accommodate various cost structures similar to classical MDPs, including finite horizon cost, discounted cost, and average cost criteria, defined as follows:
\begin{align*}
V(\bgamma,\rho_0) &= \sum_{t=0}^{T-1} C(\sigma_t) + C_T(\sigma_T) = \sum_{t=0}^{T-1} \langle c,\sigma_t \rangle + \langle c_T,\sigma_T \rangle \,\, \text{(finite horizon cost)}\\
V(\bgamma,\rho_0) &= \sum_{t=0}^{\infty}\beta^{t} C(\sigma_t) =  \sum_{t=0}^{\infty}\beta^{t} \langle c,\sigma_t \rangle \,\, \text{(discounted cost)}\\
V(\bgamma,\rho_0) &= \limsup_{T \rightarrow \infty} \frac{1}{T}\sum_{t=0}^{T-1} C(\sigma_t) =  \limsup_{T \rightarrow \infty}\frac{1}{T}\sum_{t=0}^{T-1} \langle c,\sigma_t \rangle  \,\, \text{(average cost)}
\end{align*}

While the analysis of q-MDPs can be extended to include finite horizon and average cost criteria, this study primarily focuses on discounted cost. The exploration of finite horizon and average cost criteria is left as a future research direction.

The discounted optimal value of the q-MDP is defined as
\begin{align}
V^*(\rho_0) &\coloneqq \inf_{\bgamma \in \Gamma} V(\bgamma,\rho_0). \nonumber
\end{align}
A policy $\bgamma^{*}$ is considered optimal for $\rho_0$ if $V(\bgamma^{*},\rho_0) = V^*(\rho_0)$.

\smallskip

\begin{remark}
Note that the q-MDP formulation introduced above is based on the idea that it generalizes classical MDPs. As established in our previous work \cite[Section 2.4]{SaSaYu24}, this claim holds true; that is, any classical MDP can be formulated as a q-MDP using its deterministic reduction d-MDP. To avoid repetition, we will not discuss the details here. However, the core idea relies on the connections we established earlier between classical and quantum systems. Specifically, we replace probability distributions in d-MDPs with their diagonal extensions as density operators. Additionally, transition probabilities and policies are replaced by their Kraus representations, as discussed in the previous section. With these substitutions, we derive the quantum formulation of d-MDPs by replacing the expectation operation with the trace operation. For further details, please refer to \cite[Section 2.4]{SaSaYu24}.
\end{remark}

\subsection{Open-Loop and Classical-state-preserving Closed-loop  Policies}

We recall that in d-MDP, a policy is a sequence of classical channels $\bgamma=\{\gamma_{t}\}$ from $\sX$ to $\sX\times\sA$ such that for all $t$
\begin{align} 
\gamma_t(\mu) \in \sA(\mu), \,\, \forall \,\mu \in \P(\sX), \label{inv_classical}
\end{align}
where
$\sA(\mu) \coloneqq \left\{\nu \in \P(\sX \times \sA): v(\,\cdot\,\times\sA) = \mu(\,\cdot\,)\right\}.$ 
It is important to note that we do not impose a condition analogous to (\ref{inv_classical}) on quantum (Markov) policies. Currently, any quantum channel from $\H_{\sX}$ to $\H_{\sX} \otimes \H_{\sA}$ is permissible as a policy. Therefore, the algebraic requirements for a policy are limited to being completely positive and trace-preserving super-operators from $\H_{\sX}$ to $\H_{\sX} \otimes \H_{\sA}$. However, we introduced in \cite{SaSaYu24} variations of the condition (\ref{inv_classical}) specific to the quantum context and define related classes of policies, such as open-loop and classical-state-preserving closed-loop  quantum policies. 

We pursue this for reasons outlined in detail below. In the deterministic reduction of classical MDPs, the key condition that establishes the equivalence between the stochastic MDP model and its deterministic counterpart is the invertibility condition (\ref{inv_classical}). This condition ensures that the state distribution is preserved in the \(\sX\)-marginal of the action of the deterministic reduction. Specifically, if \(\mu_t\) is the state of the deterministic MDP at time \(t\), corresponding to the state distribution of the stochastic MDP at time \(t\), and \(\nu_t\) is the action of the deterministic MDP at time \(t\) under the policy \(\gamma_t\), the condition requires that:
\[
\nu_t(\cdot \times \sA) = \mu_t(\cdot).
\]
This preservation of state information in the action marginal is crucial.

In contrast, Definitions~\ref{adms} and \ref{admsMarkov} place no explicit restrictions on the policies to ensure that state information is retained in the action. Under both history dependent quantum policies and Markov quantum policies, any current quantum state \(\rho\) can be mapped to any quantum action \(\sigma\) via some specific quantum channel. Indeed, this can be achieved using a combination of the \emph{trace-out channel} and the \emph{preparation channel} \cite[p. 143]{Wil13}. Consequently, this formulation is highly flexible, allowing the system to achieve virtually any desired outcome, as the one-stage cost depends only on the current state information \(\sigma_t\). Moreover, the current policy formulation fails to preserve state information within the action, which directly conflicts with a foundational principle of control theory: state feedback. In control theory, state feedback is essential for designing effective control strategies, as it relies on using the system's current state to determine the appropriate action. By omitting state information from the action, the policy loses its ability to adapt dynamically to the system's evolving conditions. 

For the reasons outlined above, our goal is to establish a condition analogous to (\ref{inv_classical}) in the quantum setting, mirroring the formulation used in the classical case. This approach allows the quantum framework to be seen as an extension of its classical counterpart. In d-MDPs, the invertibility condition on policies ensures the equivalence between classical MDPs and their deterministic reduction. Similarly, in the quantum domain, introducing analogous constraints preserves a direct correspondence between classical and quantum Markov decision processes. This is essential for developing a unified framework that seamlessly incorporates both classical and quantum domains. 

Additionally, by aligning the quantum formulation with the classical structure, we not only ensure continuity in theoretical development but also make it possible to adapt well-established classical techniques for the quantum context. Moreover, incorporating conditions like (\ref{inv_classical}) enables the realization of the most fundamental concept of control theory --- state feedback. Specifically, as we will demonstrate, an appropriate invertibility condition on the policy in the quantum domain allows the action variable to retain partial or complete information about the state. This facilitates the controller's ability to utilize feedback from the state variable, thereby enhancing the control framework.

In d-MDPs, the condition (\ref{inv_classical}) can be rephrased as follows: the marginalization channel $\Mar(\nu) \coloneqq \nu(\cdot \times \sA),$ where $\Mar: \P(\sX \times \sA) \rightarrow \P(\sX)$, serves as the inverse of the channel $\gamma_t$; that is, $\Mar \circ \gamma_t(\mu) = \mu \text{ for all } \mu \in \P(\sX).$ This relationship can be written as $\Mar = \Inv(\gamma_t)$, where the inverse operator $\Inv$ acts on classical channels to reverse their effects.

In the next two sub-sections, we investigate various frameworks for the operator $\Inv$ and the quantum channel $\Mar$, progressively generalizing above classical operators to the quantum domain. We begin by examining the most straightforward extensions of the classical operations $\Inv$ and $\Mar$ to their quantum analogs, which naturally leads to the definition of open-loop quantum policies. Following this, we relax the definition of the operator $\Inv$ to derive the classical-state-preserving closed-loop  policies.

\paragraph{Open-loop Quantum Policies}

Here, we introduce the initial descriptions of the $\Inv$ and $\Mar$ operations, which extend their classical counterparts to the quantum domain. First, the inverse operator $ \Inv:\N_{(\H_{\sX} \rightarrow \H_{\sX} \otimes \H_{\sA})} \rightarrow \N_{(\H_{\sX} \otimes \H_{\sA} \rightarrow \H_{\sX})}, $ which naturally generalizes the classical inverse operator, is defined as the canonical inverse of any quantum channel. Specifically, for any $\gamma \in \N_{(\H_{\sX} \rightarrow \H_{\sX} \otimes \H_{\sA})}$, we define $\Inv(\gamma) \eqqcolon \gamma^{-1} \in \N_{(\H_{\sX} \otimes \H_{\sA} \rightarrow \H_{\sX})}$ as the quantum channel that satisfies: $ \gamma^{-1} \circ \gamma(\rho) = \rho \text{ for all } \rho \in \D(\H_{\sX}). $ To distinguish this from other inverse introduced later, we refer to it as $\Inv_{\C}$.

The marginalization channel is defined based on the observation that the quantum analogue of the classical marginalization operation is the partial trace operator. Therefore, we define $\Mar$ as the partial trace channel $\tr_{\sA}: \D(\H_{\sX} \otimes \H_{\sA}) \rightarrow \D(\H_{\sX})$.

\smallskip

\begin{definition}
With these specific descriptions for $(\Inv,\Mar)$, an open-loop quantum policy is defined to be a sequence quantum channels $\bgamma=\{\gamma_{t}\}$ from $\H_{\sX}$ to $\H_{\sX} \otimes \H_{\sA}$ such that, for all $t$, 
$
\tr_{\sA} = \Inv_{\C}(\gamma_t), 
$
or equivalently
$
\gamma_t(\rho) \in \sA(\rho), \, \forall \rho \in \D(\H_{\sX}).
$
Let $\Gamma_o$ denote the set of open-loop policies.
\end{definition}

The following structural result is established in \cite[Proposition 5.4]{SaSaYu24}, which clarifies why we refer to such policies as \emph{open-loop}. 

\begin{proposition}\label{structure-qmdp}
Let $\gamma: \D(\H_{\sX}) \rightarrow \D(\H_{\sX}\otimes\H_{\sA})$ be a quantum channel. It satisfies the reversibility condition: $\tr_{\sA}(\gamma(\rho)) = \rho$ for all $\rho \in \D(\H_{\sX})$ if and only if there exists $\xi \in \D(\H_{\sA})$ such that $\gamma(\rho) = \rho  \otimes \xi$ for all $\rho \in \D(\H_{\sX})$.
\end{proposition}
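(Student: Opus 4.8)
The plan is to dispatch the ``if'' direction by a one-line computation and to obtain the ``only if'' direction from a dilation of $\gamma$ together with the elementary fact that a bipartite state with a pure marginal is necessarily a product state.

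For the ``if'' direction, suppose $\gamma(\rho) = \rho \otimes \xi$ for a fixed $\xi \in \D(\H_{\sA})$. Since $\tr_{\sA} = \Id \otimes \tr$ acts on product operators by $\tr_{\sA}(\rho \otimes \xi) = \rho\,\tr(\xi) = \rho$ (using $\tr(\xi)=1$), the reversibility condition holds; and $\rho \mapsto \rho\otimes\xi$ is plainly completely positive and trace preserving, hence a bona fide quantum channel.

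For the ``only if'' direction, I would first record the following lemma: if $\omega \in \D(\H_{1} \otimes \H_{2})$ satisfies $\tr_{2}(\omega) = |\psi\rangle\langle\psi|$ for some unit vector $|\psi\rangle \in \H_{1}$, then $\omega = |\psi\rangle\langle\psi| \otimes \omega_{2}$ for some $\omega_{2} \in \D(\H_{2})$. This follows by spectrally decomposing $\omega = \sum_{k} p_{k} |\chi_{k}\rangle\langle\chi_{k}|$ with $p_{k} > 0$: each $\tr_{2}(|\chi_{k}\rangle\langle\chi_{k}|)$ is a density operator, a convex combination of them equals the pure state $|\psi\rangle\langle\psi|$, and since pure states are extreme points of $\D(\H_{1})$ each term must equal $|\psi\rangle\langle\psi|$; a bipartite pure state whose marginal is pure is a product (Schmidt decomposition), so $|\chi_{k}\rangle = |\psi\rangle\otimes|\phi_{k}\rangle$, and summing gives the claim. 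Next, take a Stinespring dilation $\gamma(\rho) = \tr_{\H_{E}}(V\rho V^{*})$ with $V: \H_{\sX} \to \H_{\sX}\otimes\H_{\sA}\otimes\H_{E}$ an isometry (equivalently, assemble the Kraus operators $\{K_{l}\}$ of $\gamma$ into $V = \sum_{l} K_{l} \otimes |l\rangle_{E}$). The reversibility condition reads $\tr_{\H_{\sA}\otimes\H_{E}}(V\rho V^{*}) = \rho$ for all $\rho \in \D(\H_{\sX})$. Evaluating on a pure input $\rho = |\psi\rangle\langle\psi|$, the state $V|\psi\rangle\langle\psi|V^{*}$ is pure on $\H_{\sX}\otimes(\H_{\sA}\otimes\H_{E})$ with $\H_{\sX}$-marginal $|\psi\rangle\langle\psi|$, so the lemma gives $V|\psi\rangle = |\psi\rangle\otimes|\eta_{\psi}\rangle$ for some unit vector $|\eta_{\psi}\rangle\in\H_{\sA}\otimes\H_{E}$. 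Comparing $V(|\psi_{1}\rangle+|\psi_{2}\rangle)$ with $V|\psi_{1}\rangle+V|\psi_{2}\rangle$ for linearly independent $|\psi_{1}\rangle,|\psi_{2}\rangle$ forces $|\eta_{\psi}\rangle$ to be a single fixed vector $|\eta\rangle$, so $V = \Id_{\H_{\sX}}\otimes|\eta\rangle$. Hence $\gamma(\rho) = \tr_{\H_{E}}(\rho\otimes|\eta\rangle\langle\eta|) = \rho\otimes\xi$ with $\xi \coloneqq \tr_{\H_{E}}(|\eta\rangle\langle\eta|) \in \D(\H_{\sA})$, which is the desired form.

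The main obstacle is the ``pure marginal $\Rightarrow$ product'' lemma together with the step that upgrades the pointwise factorization $V|\psi\rangle = |\psi\rangle\otimes|\eta_{\psi}\rangle$ to a $\psi$-independent statement; both are soft but must be carried out carefully. Everything else is routine, and since the paper works exclusively with finite-dimensional Hilbert spaces there are no analytic subtleties. An alternative route that avoids dilations is to apply the same lemma to the Choi matrix $(\Id\otimes\gamma)(|\Omega\rangle\langle\Omega|)$, whose reference-system marginal equals the maximally entangled state $|\Omega\rangle\langle\Omega|$ by reversibility; the lemma then yields a product Choi matrix, and inverting the Choi correspondence returns $\gamma(\rho)=\rho\otimes\xi$.
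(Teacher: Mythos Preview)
Your argument is correct. The ``if'' direction is immediate, and your ``only if'' direction via a Stinespring dilation is sound: the pure-marginal lemma is standard (extremality of pure states plus a Schmidt argument), the factorisation $V|\psi\rangle=|\psi\rangle\otimes|\eta_\psi\rangle$ follows because $V|\psi\rangle\langle\psi|V^*$ is rank one, and the linearity step does force $|\eta_\psi\rangle$ to be $\psi$-independent (for linearly independent $|\psi_1\rangle,|\psi_2\rangle$, the identity $|\psi_1\rangle\otimes(|\eta_\phi\rangle-|\eta_1\rangle)=|\psi_2\rangle\otimes(|\eta_2\rangle-|\eta_\phi\rangle)$ forces both sides to vanish). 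The Choi-matrix alternative you sketch is equally valid and arguably slicker.

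As for comparison: this paper does not supply its own proof of the proposition; it is quoted here as a structural input and attributed to \cite[Proposition~5.4]{SaSaYu24}. So there is no in-paper argument to set yours against. Your Stinespring route is one of the standard ways to establish this fact about reversible channels (appending channels), and would be an acceptable self-contained substitute for the citation.
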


In the literature, the channels described in Proposition~\ref{structure-qmdp} are often called \emph{appending channels} \cite{Wil13}. According to this proposition, for any policy $\bgamma = {\gamma_t}$ that meets the invertibility condition, there exists a sequence of density operators ${\pi_t} \subset \D(\H_{\sA})$ such that $\gamma_t(\rho) = \rho \otimes \pi_t$ for every $t$. Thus, one can alternatively represent the policy $\bgamma$ using this sequence $\{\pi_t\}$.

Given this structural result, these quantum policies do not modify the state information through any quantum operation to extract additional information from the state. Instead, the policy relies entirely on the information provided by the initial density operator. As a result, these policies are classified as \emph{open-loop quantum policies}, analogous to open-loop policies in classical control theory. However, this does not imply that the policies completely disregard feedback from the state. Indeed, as we will see in the dynamic programming principle developed for open-loop policies, the appended optimal density operator depends on the current state. This means that while the state feedback is used, the state is not modified via quantum operations to extract some information from it.

Before introducing classical-state-preserving closed-loop policies, we first discuss how classical policies can be embedded within the quantum domain. These quantum analogs of classical policies are referred to as classical-quantum policies.

\paragraph{Classical Quantum Policies}

In Section~2.4 of our previous work \cite[Section 2.4]{SaSaYu24}, we provided a detailed explanation of how a d-MDP can be transformed into a q-MDP. As part of this transformation, we defined a classical policy within the quantum domain as follows.

\begin{definition}
An admissible classical policy is a sequence quantum channels $\bgamma=\{\gamma_{t}\}$ from $\H_{\sX}$ to $\H_{\sX} \otimes \H_{\sA}$, where, for all $t$, there exists a classical channel $\gamma_t: \P(\sX) \rightarrow \P(\sX\times\sA)$ realized by the stochastic kernel $\pi_t$ from $\sX$ to $\sA$ such that 
$$
\gamma_t(\rho) = \sum_{(x,a) \in \sX \times \sA} \sqrt{\pi_t(a|x)} \, |x,a\rangle \, \langle x| \, \rho |x\rangle \, \langle x,a| \, \sqrt{\pi_t(a|x)}. 
$$
The set of classical policies is denoted by $\Gamma_c$.
\end{definition}

For classical quantum policies, it is evident that 
\[
\tr_{\sA} \circ \gamma_t(\rho) = \rho
\]
does not hold for all \(\rho \in \D(\H_{\sX})\), implying that \(\tr_{\sA} \neq \Inv_{\C}(\gamma_t)\). As a result, classical quantum policies do not form a subset of open-loop quantum policies. Consequently, the current formulation of q-MDPs with open-loop policies does not directly extend classical MDPs. 

To address this limitation, we propose relaxing the reversibility constraint by introducing a different inverse operator \(\Inv\) in place of \(\Inv_{\C}\). This modification allows the quantum counterparts of classical policies to satisfy the relaxed invertibility condition, thereby enabling q-MDPs to serve as a true generalization of d-MDPs. This is achieved by introducing classical-state-preserving closed-loop policies, as discussed in the next section.

\paragraph{Classical-state-preserving Closed-loop  Quantum Policies}

Now, we define classical-state-preserving closed-loop  quantum policies for q-MDPs, which allow the controller to use current state information by modifying it through a quantum operation as opposed to the open-loop policies. This is achieved by relaxing the reversibility condition inherent in open-loop policies. Essentially, we propose a different version of the operator $\Inv$. 

The alternative inverse operator is defined as follows:

\begin{definition}\label{inverse1}
The inverse operator  
\[
\Inv_{\S}:\N_{(\H_{\sX} \rightarrow \H_{\sX} \otimes \H_{\sA})} \rightarrow \N_{(\H_{\sX} \otimes \H_{\sA} \rightarrow \H_{\sX})}
\]
is defined such that for any $\gamma \in \N_{(\H_{\sX} \rightarrow \H_{\sX} \otimes \H_{\sA})}$, $\Inv_{\S}(\gamma) \eqqcolon \gamma^{-1} \in \N_{(\H_{\sX} \otimes \H_{\sA} \rightarrow \H_{\sX})}$ satisfies 
\[
\gamma^{-1} \circ \gamma(\rho) = \rho, \, \forall \rho \in \S,
\]
where $\S = \{|x\rangle \langle x| : x \in \sX\}$ and $\{|x\rangle\}$ is some fixed orthonormal basis of $\H_{\sX}$. 
\end{definition}

The orthonormal basis described above should be understood as the set of classical pure states of a quantum system \cite[Section 4.1.4]{Wil13}.

Since the convex closure of $\S$ is a strict subset of $\D(\H_{\sX})$, this requirement is less stringent than the previous reversibility condition. Introducing this new inverse operator defines a new set of admissible policies called classical-state-preserving closed-loop  policies, a term that will be clarified as we state their structural properties.

\begin{definition}\label{closed1}
A \emph{classical-state-preserving closed-loop  policy} is a sequence of quantum channels $\bgamma=\{\gamma_{t}\}$ from $\H_{\sX}$ to $\H_{\sX} \otimes \H_{\sA}$ such that, for all $t$, $\tr_{\sA}= \Inv_{\S}(\gamma_t)$; that is, 
\[
\tr_{\sA}(\gamma_t(\rho)) = \rho, \,\,\, \forall \rho \in \S.
\]
\end{definition}

Let $\cC_w$ denote the set of quantum channels from $\H_{\sX}$ to $\H_{\sX}\otimes\H_{\sA}$ that satisfy the relaxed reversibility constraint. We can define the set of classical-state-preserving closed-loop  quantum policies as 
\[
\Gamma_w \coloneqq \{\bgamma=\{\gamma_{t}\}: \gamma_t \in {\cC}_w \,\, \forall t\}.
\]
Thus, $\Gamma_o \subset \Gamma_w$. Moreover, quantum adaptations of classical policies satisfy this relaxed condition. 
 Hence, $\Gamma_c \subset \Gamma_w$ (see \cite[Section 5.2]{SaSaYu24} for details). Consequently, q-MDPs with classical-state-preserving closed-loop  policies truly generalize d-MDPs. We refer to these q-MDPs as qw-MDPs to distinguish them from q-MDPs with open-loop policies.

The following proposition, whose proof can be found in \cite[Proposition 5.8]{SaSaYu24}, describes the structure of classical-state-preserving closed-loop  quantum policies:

\begin{proposition}\label{structure-qwmdp}
Let $\gamma: \D(\H_{\sX}) \rightarrow \D(\H_{\sX}\otimes\H_{\sA})$ be a quantum channel. It belongs to $\cC_w$ if and only if there exists a collection of vectors $\{|\phi_{x,a} \rangle\}_{(x,a)\in\sX\times\sA}$ in some Hilbert space $\H_L$ with $\dim(\H_L) \leq |\sX|^2|\sA|$ such that $\sum_{a \in \sA} \langle \phi_{x,a},\phi_{x,a} \rangle=1$ for each $x \in \sX$ and 
\begin{align*}
\gamma(|\psi\rangle \langle \psi|) = \sum_{x,y \in \sX} \langle x,\psi \rangle \langle \psi,y \rangle |x\rangle \langle y| \otimes \left(\sum_{a,b \in \sA} \langle \phi_{y,b},\phi_{x,a} \rangle \, |a\rangle \langle b | \right)
\end{align*}
for each pure state $|\psi\rangle \in \H_{\sX}$.
\end{proposition}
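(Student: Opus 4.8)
The plan is to characterize quantum channels $\gamma: \D(\H_{\sX}) \to \D(\H_{\sX}\otimes\H_{\sA})$ satisfying $\tr_{\sA}(\gamma(|x\rangle\langle x|)) = |x\rangle\langle x|$ for every classical pure state $|x\rangle$, by working through the Kraus representation of $\gamma$ and extracting the constraint imposed by the relaxed reversibility condition on the diagonal inputs. First I would fix a Kraus decomposition $\gamma(\rho) = \sum_l K_l \rho K_l^*$ with $\sum_l K_l^* K_l = \Id$, where each $K_l: \H_{\sX} \to \H_{\sX}\otimes\H_{\sA}$. Applying $\gamma$ to $|x\rangle\langle x|$ gives $\gamma(|x\rangle\langle x|) = \sum_l (K_l|x\rangle)(K_l|x\rangle)^*$, and tracing out $\H_{\sA}$ must return $|x\rangle\langle x|$. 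The key consequence of this is that each vector $K_l|x\rangle \in \H_{\sX}\otimes\H_{\sA}$ must be of the product form $|x\rangle \otimes |\phi_{x}^{(l)}\rangle$ for some (unnormalized) $|\phi_{x}^{(l)}\rangle \in \H_{\sA}$: indeed, if $\tr_{\sA}$ of a rank-one (or sum of rank-one) positive operator supported on $\{|x\rangle\langle x|\}\otimes\D(\H_{\sA})$ — more carefully, one shows that $\sum_l \tr_{\sA}\big((K_l|x\rangle)(K_l|x\rangle)^*\big) = |x\rangle\langle x|$ forces the $\H_{\sX}$-component of every $K_l|x\rangle$ to be proportional to $|x\rangle$, since the partial trace of the off-diagonal-in-$\sX$ contributions would otherwise produce nonzero entries outside $|x\rangle\langle x|$. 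This is the step I expect to require the most care: rigorously arguing that the only way a sum of outer products can have $\H_{\sX}$-marginal equal to a pure state $|x\rangle\langle x|$ is for each summand's $\H_{\sX}$-part to already be (a multiple of) $|x\rangle$.

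Granting that, write $K_l|x\rangle = |x\rangle \otimes |\eta_{x,l}\rangle$ for vectors $|\eta_{x,l}\rangle \in \H_{\sA}$. Then $\tr_{\sA}(\gamma(|x\rangle\langle x|)) = |x\rangle\langle x| \sum_l \langle \eta_{x,l},\eta_{x,l}\rangle = |x\rangle\langle x|$, so the normalization $\sum_l \langle \eta_{x,l},\eta_{x,l}\rangle = 1$ holds for each $x$. Next I would compute $\gamma$ on a general pure state $|\psi\rangle = \sum_x \langle x,\psi\rangle |x\rangle$: by linearity of each $K_l$,
\[
K_l|\psi\rangle = \sum_x \langle x,\psi\rangle\, |x\rangle \otimes |\eta_{x,l}\rangle,
\]
so
\[
\gamma(|\psi\rangle\langle\psi|) = \sum_l K_l|\psi\rangle\langle\psi|K_l^* = \sum_{x,y}\langle x,\psi\rangle\langle\psi,y\rangle\, |x\rangle\langle y| \otimes \left(\sum_l |\eta_{x,l}\rangle\langle\eta_{y,l}|\right).
\]
To match the claimed form, introduce the Hilbert space $\H_L = \rC^{|L|}$ indexed by the Kraus labels (or, to get the stated dimension bound, the space spanned by the relevant data) and define $|\phi_{x,a}\rangle \in \H_L$ by $\langle l | \phi_{x,a}\rangle := \langle a | \eta_{x,l}\rangle$, i.e. $|\phi_{x,a}\rangle$ collects, over all Kraus indices $l$, the $a$-th component of $|\eta_{x,l}\rangle$. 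Then $\langle\phi_{y,b},\phi_{x,a}\rangle = \sum_l \langle\eta_{y,l},b\rangle\langle a,\eta_{x,l}\rangle$, and substituting $\sum_l |\eta_{x,l}\rangle\langle\eta_{y,l}| = \sum_{a,b}\big(\sum_l \langle a,\eta_{x,l}\rangle\langle\eta_{y,l},b\rangle\big)|a\rangle\langle b| = \sum_{a,b}\langle\phi_{y,b},\phi_{x,a}\rangle|a\rangle\langle b|$ yields exactly the displayed expression; moreover $\sum_a \langle\phi_{x,a},\phi_{x,a}\rangle = \sum_{a,l}|\langle a,\eta_{x,l}\rangle|^2 = \sum_l\langle\eta_{x,l},\eta_{x,l}\rangle = 1$, giving the normalization.

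For the converse direction, given any family $\{|\phi_{x,a}\rangle\}$ in a Hilbert space $\H_L$ with $\sum_a\langle\phi_{x,a},\phi_{x,a}\rangle = 1$ for all $x$, I would reverse the construction: set $|\eta_{x,l}\rangle := \sum_a \langle l|\phi_{x,a}\rangle\, |a\rangle$ for an orthonormal basis $\{|l\rangle\}$ of $\H_L$, define $K_l := \sum_x |x\rangle\langle x| \otimes |\eta_{x,l}\rangle\langle\,\cdot\,|$... more precisely $K_l|x\rangle := |x\rangle\otimes|\eta_{x,l}\rangle$ extended linearly, check $\sum_l K_l^* K_l = \Id$ via $\langle x|\sum_l K_l^* K_l|x'\rangle = \delta_{x,x'}\sum_l\langle\eta_{x,l},\eta_{x,l}\rangle = \delta_{x,x'}$ (using that distinct $|x\rangle,|x'\rangle$ give orthogonal first tensor factors), and verify that the resulting channel $\gamma(\rho) = \sum_l K_l\rho K_l^*$ has the stated action on pure states and satisfies $\tr_{\sA}(\gamma(|x\rangle\langle x|)) = |x\rangle\langle x|$. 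The dimension bound $\dim(\H_L) \le |\sX|^2|\sA|$ follows because the operators $\sum_l|\eta_{x,l}\rangle\langle\eta_{y,l}|$ for $(x,y)\in\sX\times\sX$ live in the space of operators on $\H_{\sA}$, which has dimension $|\sA|^2$... actually the sharper route is that $\gamma$ can be taken to have at most $\dim(\H_{\sX}\otimes\H_{\sA})\cdot\dim(\H_{\sX}) = |\sX|^2|\sA|$ Kraus operators by the standard Choi-rank bound, so $\H_L$ need not exceed that dimension. The main obstacle remains the first step — pinning down that the reversibility condition on classical pure states forces the product structure $K_l|x\rangle = |x\rangle\otimes|\eta_{x,l}\rangle$ — after which everything is bookkeeping; I would cite \cite[Proposition 5.8]{SaSaYu24} for the detailed verification where convenient.
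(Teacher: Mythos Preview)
The paper does not give its own proof of this proposition: it simply states that the proof can be found in \cite[Proposition~5.8]{SaSaYu24}. So there is no in-paper argument to compare against, and in fact you already cite exactly the same reference at the end of your sketch.

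Your approach via a Kraus decomposition is correct and is the natural route. The step you flag as the main obstacle is routine once stated cleanly: $\gamma(|x\rangle\langle x|)=\sum_l (K_l|x\rangle)(K_l|x\rangle)^*$ is a sum of positive semi-definite operators whose partial trace over $\H_{\sA}$ equals the rank-one projector $|x\rangle\langle x|$. For any $|v\rangle\perp|x\rangle$ one gets $0=\sum_l \langle v|\,\tr_{\sA}\big((K_l|x\rangle)(K_l|x\rangle)^*\big)\,|v\rangle$, and since each summand is nonnegative and each $\tr_{\sA}\big((K_l|x\rangle)(K_l|x\rangle)^*\big)$ is positive semi-definite, this forces each to be a nonnegative multiple of $|x\rangle\langle x|$. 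Writing $K_l|x\rangle=\sum_{x'}|x'\rangle\otimes|\eta_{x',l}\rangle$ and reading off the diagonal then gives $\eta_{x',l}=0$ for $x'\neq x$, i.e.\ $K_l|x\rangle=|x\rangle\otimes|\eta_{x,l}\rangle$ as you want. After that the computation of $\gamma(|\psi\rangle\langle\psi|)$, the definition $|\phi_{x,a}\rangle\coloneqq\sum_l \langle a|\eta_{x,l}\rangle\,|l\rangle$, the normalization $\sum_a\|\phi_{x,a}\|^2=\sum_l\|\eta_{x,l}\|^2=1$, the converse construction of Kraus operators from a given family $\{|\phi_{x,a}\rangle\}$, and the Choi-rank bound $\dim(\H_L)\le |\sX|^2|\sA|$ all go through exactly as you outline.
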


Policies of the form 
\begin{align*}
\gamma(|\psi\rangle \langle \psi|) = \sum_{x,y \in \sX} \langle x,\psi \rangle \langle \psi,y \rangle |x\rangle \langle y| \otimes \left(\sum_{a,b \in \sA} \langle \phi_{y,b},\phi_{x,a} \rangle \, |a\rangle \langle b | \right)
\end{align*}
clearly extract some information from the state via the inner products $\langle \phi_{y,b},\phi_{x,a} \rangle$. Thus, these policies are classified as \emph{classical-state-preserving closed-loop  quantum policies}.

In the rest of this paper, we focus on developing methodologies to achieve optimal policies and value functions for both open-loop and classical-state-preserving closed-loop  quantum policies. In classical MDPs, dynamic programming and linear programming offer distinct but closely related methods for determining optimal policies and value functions, with each method being the dual of the other \cite{HeLa96}. We extend these classical approaches to the quantum domain, leveraging their duality to create methodologies for deriving the optimal open-loop and classical-state-preserving closed-loop  quantum policies.

\begin{remark}\label{realizability} {\bf{[On Physical Realizability of the Classes of Policies].}} 
Although we have introduced several policies for q-MDPs, we do not address whether all of these policies are realizable in a physical (quantum mechanical) sense. Our primary goal in this paper here is to propose a general mathematical framework for quantum MDPs; however, we note that history-dependent and Markov policies can be interpreted as physically non-realizable due to the fact that they do not preserve state information in the action variable: because we do not impose any invertibility condition, given any state $\rho$, we can theoretically steer the state to any action $\sigma$ by first applying a trace-out channel followed by a preparation channel \cite[p. 143]{Wil13}. While this is a useful theoretical construct, practical constraints make it generally non-realizable, and further cost terms may be incorporated into the formulation. For instance, state preparation is a well-known challenge in quantum information processing \cite{Gri09}, and freely applying a preparation channel is often infeasible.
Moreover, in this policy formulation, since the one stage cost function depends only on the action $\sigma$ and we can steer the state to any $\sigma$ at each time step, the optimal control problem becomes trivial by identifying an action $\sigma^*$ that minimizes the one-stage cost and steering the action to $\sigma^*$ at every step. To avoid trivial results, we must introduce additional structure. Specifically, we need to incorporate a penalty on the state variable, as the action variable alone does not preserve any state information due to the lack of invertibility in the history dependent and Markov policies. This can be done by defining one-stage cost functions that depend on both the current state $\rho$ and the current action $\sigma$ or by adding a penalty term to perturbations on the state (as measured by the difference between the state and the trace-out on the action). Such a formulation prevents the system from arbitrarily steering to any action and leads to interesting further research problems. Nonetheless, the essence of the mathematical formulation would not change, and therefore, our general formulation would still be applicable.

\end{remark}

\section{Algorithms for Open-loop Quantum Policies}\label{sec2}

Recall that an open-loop quantum policy is defined to be a sequence quantum channels $\bgamma=\{\gamma_{t}\}$ from $\H_{\sX}$ to $\H_{\sX} \otimes \H_{\sA}$ such that, for all $t$, 
$
\tr_{\sA} = \Inv_{\C}(\gamma_t), 
$
or equivalently
$
\gamma_t(\rho) \in \sA(\rho), \, \forall \rho \in \D(\H_{\sX}).
$
It is established that for any open-loop policy $\bgamma = \{\gamma_t\}$, there exists a sequence of density operators $\{\pi_t\} \subset \D(\H_{\sA})$ such that $\gamma_t(\rho) = \rho \otimes \pi_t$ for all $t$. Hence, in view of this, one can alternatively identify the policy $\bgamma$ via this sequence $\{\pi_t\}$. 

In the rest of this section, we follow a similar approach to the one used for d-MDP in the appendix. Namely, we initially derive the dynamic programming principle for open-loop policies and then formulate q-MDP with open-loop policies as a semi-definite program (SDP). Subsequently, we establish a connection between the dual SDP and the dynamic programming equation, which leads to a linear optimal value function. Leveraging this linearity, we demonstrate the existence of an optimal stationary open-loop policy. 

We note that in order to establish a connection with the methodology employed for d-MDP in the appendix, we utilize the same notation to represent operators that yield the dynamic programming equation and semi-definite program (SDP) in q-MDP. This approach allows us to draw parallels between the two frameworks and facilitate a seamless connection in our analysis.

\subsection{Dynamic Programming for q-MDP with Open-loop Policies}\label{sub2sec2}

The dynamic programming operator for q-MDP with open-loop policies, denoted as 
$\L:C_b(\D(\H_{\sX})) \rightarrow C_b(\D(\H_{\sX}))$
is used for generating both the optimal value function and the optimal open-loop policy. It is defined as follows:
$
\L V(\rho) \coloneqq \min_{\pi \in \D(\H_{\sA})} \left[ \langle c,\rho \otimes \pi \rangle + \beta \, V(\N(\rho \otimes \pi )) \right].
$
First of all, it is straightforward to prove that $\L$ maps continuous function to continuous function \cite[Proposition D.6]{HeLa96} as $\D(\H_{\sA})$ is compact (because $\H_{\sA}$ is finite dimensional) and the function $\langle c,\rho \otimes \pi \rangle + \beta \, V(\N(\rho \otimes \pi ))$ is continuous in $\rho$ and $\pi$ if $V$ is continuous. 
Moreover, one can also show that the operator $\L$ is $\beta$-contraction with respect sup-norm on $C_b(\D(\H_{\sX}))$, and so, has a unique fixed point $V_{\mathrm{f}} \in C_b(\D(\H_{\sX}))$. In the following, we establish a connection between the optimal value function $V^*$ of q-MDP with open-loop policies and $V_{\mathrm{f}}$. 

\begin{theorem}\label{dynamic-qmdp}
We have $V^*(\rho_0) = V_{\mathrm{f}}(\rho_0)$ for all $\rho_0 \in \D(\H_{\sX})$; that is, the unique fixed point of $\L$ is $V^*$.
\end{theorem}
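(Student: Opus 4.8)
The plan is to prove the two inequalities $V^* \le V_{\mathrm{f}}$ and $V^* \ge V_{\mathrm{f}}$ separately, exploiting that $\L$ is a $\beta$-contraction with unique fixed point $V_{\mathrm{f}}$ and that open-loop policies are parametrized by sequences $\{\pi_t\}\subset\D(\H_{\sA})$ via $\gamma_t(\rho)=\rho\otimes\pi_t$. This is the standard dynamic-programming verification argument, transported to the quantum setting; the compactness of $\D(\H_{\sA})$ and continuity of $\rho,\pi\mapsto\langle c,\rho\otimes\pi\rangle+\beta V(\N(\rho\otimes\pi))$ (already noted in the excerpt) guarantee the minimum in the definition of $\L$ is attained by a measurable selector.

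First I would show $V_{\mathrm{f}}\le V^*$. Fix any open-loop policy $\bgamma=\{\pi_t\}$ and initial state $\rho_0$; let $\rho_{t+1}=\N(\rho_t\otimes\pi_t)$. Since $V_{\mathrm{f}}=\L V_{\mathrm{f}}$, for every $t$ we have $V_{\mathrm{f}}(\rho_t)\le \langle c,\rho_t\otimes\pi_t\rangle + \beta\, V_{\mathrm{f}}(\rho_{t+1})$. Iterating this inequality from $t=0$ to $t=T-1$ gives $V_{\mathrm{f}}(\rho_0)\le \sum_{t=0}^{T-1}\beta^t\langle c,\rho_t\otimes\pi_t\rangle + \beta^T V_{\mathrm{f}}(\rho_T)$. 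Because $V_{\mathrm{f}}$ is bounded (it lies in $C_b(\D(\H_{\sX}))$, and $\D(\H_{\sX})$ is compact) and $\beta<1$, the term $\beta^T V_{\mathrm{f}}(\rho_T)\to 0$, so letting $T\to\infty$ yields $V_{\mathrm{f}}(\rho_0)\le V(\bgamma,\rho_0)$. Taking the infimum over $\bgamma\in\Gamma_o$ gives $V_{\mathrm{f}}(\rho_0)\le V^*(\rho_0)$.

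For the reverse inequality $V^*\le V_{\mathrm{f}}$, I would use a measurable minimizing selector: for each $\rho$ pick $\pi(\rho)\in\D(\H_{\sA})$ attaining the minimum in $\L V_{\mathrm{f}}(\rho)$, so that $V_{\mathrm{f}}(\rho)=\langle c,\rho\otimes\pi(\rho)\rangle+\beta V_{\mathrm{f}}(\N(\rho\otimes\pi(\rho)))$. Existence of such a measurable selector follows from a standard measurable-selection theorem (e.g.\ \cite[Proposition D.5]{HeLa96}) given continuity and compactness. This selector generates a (stationary, state-dependent) open-loop policy $\bgamma^*$ via $\rho_{t+1}=\N(\rho_t\otimes\pi(\rho_t))$ and $\sigma_t=\rho_t\otimes\pi(\rho_t)$; note $\gamma_t^*(\rho)=\rho\otimes\pi(\rho)$ is indeed an admissible open-loop channel since $\tr_{\sA}(\rho\otimes\pi(\rho))=\rho$. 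Iterating the equality for $V_{\mathrm{f}}$ along the trajectory generated by $\bgamma^*$ gives $V_{\mathrm{f}}(\rho_0)=\sum_{t=0}^{T-1}\beta^t\langle c,\rho_t\otimes\pi(\rho_t)\rangle+\beta^T V_{\mathrm{f}}(\rho_T)$, and letting $T\to\infty$ as before yields $V_{\mathrm{f}}(\rho_0)=V(\bgamma^*,\rho_0)\ge V^*(\rho_0)$. Combining the two inequalities proves $V^*=V_{\mathrm{f}}$.

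The main obstacle I anticipate is the measurability/selection step needed to build $\bgamma^*$: one must confirm that $\rho\mapsto\pi(\rho)$ can be chosen measurably (or that one may instead work with $\eps$-optimal selectors and pass to the limit, which sidesteps measurable selection entirely at the cost of only establishing $V^*\le V_{\mathrm{f}}+\eps$ for all $\eps$). A secondary technical point is verifying that history-dependent quantum policies cannot do better than open-loop ones for this comparison — but here the infimum defining $V^*$ is already taken over $\Gamma_o$ in the relevant statement, so the argument only needs the open-loop class; if one wanted $V^*$ over all of $\Gamma$ the same telescoping works because the one-stage cost depends only on $\sigma_t$ and each admissible $\sigma_t$ from state $\rho_t$ has $\tr_{\sA}(\sigma_t)$ not necessarily equal to $\rho_t$, so one would restrict attention to the open-loop subclass as the excerpt does.
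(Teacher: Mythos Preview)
Your overall strategy matches the paper's, and your argument for $V_{\mathrm{f}}\le V^*$ is a correct (and slightly more direct) variant of theirs. However, there is a genuine error in your achievability direction: the map $\rho\mapsto\rho\otimes\pi(\rho)$ with $\pi$ depending on the input state is \emph{not} an admissible open-loop quantum channel. By Proposition~\ref{structure-qmdp}, a quantum channel $\gamma$ satisfies $\tr_{\sA}\circ\gamma=\Id$ on all of $\D(\H_{\sX})$ if and only if it is an appending channel $\rho\mapsto\rho\otimes\xi$ for a \emph{fixed} $\xi\in\D(\H_{\sA})$. A state-dependent selector $\pi(\rho)$ makes $\rho\mapsto\rho\otimes\pi(\rho)$ nonlinear, hence not even a super-operator, let alone a channel in $\Gamma_o$. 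The paper addresses exactly this point in the discussion following the theorem.

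The fix---which the paper carries out---is to build a \emph{non-stationary} open-loop policy along the specific trajectory from $\rho_0$: set $\pi_{\rho_0}$ to be a minimizer at $\rho_0$, define $\rho_1=\N(\rho_0\otimes\pi_{\rho_0})$, choose $\pi_{\rho_1}$ at $\rho_1$, and so on, then let $\gamma_t(\rho)\coloneqq\rho\otimes\pi_{\rho_t}$ with $\pi_{\rho_t}$ \emph{fixed} at time $t$. Each $\gamma_t$ is now a genuine appending channel, and your telescoping argument goes through verbatim. This also dispenses with your measurable-selection concern: since the construction is recursive along a single deterministic trajectory, you only need minimizers at countably many points $\rho_0,\rho_1,\ldots$, which follow directly from compactness of $\D(\H_{\sA})$ and continuity---no selection theorem is required.
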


\begin{proof}
Fix any open-loop policy $\bgamma = \{\gamma_t\}_{t\geq0}$ with the corresponding density operators $\{\pi_t\}_{t\geq0}$; that is, $\gamma_t(\rho) = \rho \otimes \pi_t$. Then, the cost function of $\bgamma$ satisfies the following 
\begin{align*}
V(\bgamma,\rho_0) &= \langle c,\rho_0 \otimes \pi_0 \rangle + \beta \, V(\{\gamma_t\}_{t\geq 1},\N(\rho_0\otimes\pi_0)) \\
&\geq \min_{\pi_0 \in \D(\H_{\sA})} \left[ \langle c,\rho_0 \otimes \pi_0 \rangle + \beta \, V(\{\gamma_t\}_{t\geq 1},\N(\rho_0\otimes\pi_0)) \right] \\
&\geq \min_{\pi_0 \in \D(\H_{\sA})} \left[ \langle c,\rho_0 \otimes \pi_0 \rangle + \beta \, V^*(\N(\rho_0\otimes\pi_0)) \right] = \L V^*(\rho_0).
\end{align*}
Since above inequality is true for any open-loop policy $\bgamma$, we have $V^* \geq \L V^*$. Using the latter inequality and the monotonicity of $\L$, one can also prove that $V^* \geq \L^nV^*$ for any positive integer $n$. By Banach fixed point theorem, as the operator $\L$ is $\beta$-contraction with respect sup-norm on $C_b(\D(\H_{\sX}))$, we have $\L^n V^* \rightarrow V_{\mathrm{f}}$ in sup-norm in $C_b(\D(\H_{\sX}))$ as $n\rightarrow \infty$. Hence, $V^* \geq  V_{\mathrm{f}}$. 

To prove the converse (that is, achievability), for any $\rho$, let 
$$
\pi_{\rho} \in \argmin_{\pi \in \D(\H_{\sA})} \left[ \langle c,\rho \otimes \pi \rangle + \beta \, V_{\mathrm{f}}(\N(\rho \otimes \pi )) \right]
$$
The existence of $\pi_{\rho}$ for any $\rho$ follows from the facts that $\D(\H_{\sA})$ is compact and  $\langle c,\rho \otimes \pi \rangle + \beta \, V_{\mathrm{f}}(\N(\rho\otimes\pi))$ is continuous in $\rho$ and $\pi$. With this construction, we then have
\begin{align*}
V_{\mathrm{f}}(\rho_0) &=  \langle c,\rho_0 \otimes \pi_{\rho_0} \rangle + \beta \, V_{\mathrm{f}}(\N(\rho_0 \otimes \pi_{\rho_0} )) \\
&\eqqcolon \langle c,\sigma_0  \rangle + \beta \, V_{\mathrm{f}}(\rho_1) \\
&= \langle c,\sigma_0  \rangle + \beta \, \left[\langle c,\rho_1 \otimes \pi_{\rho_1} \rangle + \beta \, V_{\mathrm{f}}(\N(\rho_1 \otimes \pi_{\rho_1} )) \right] \\
&\eqqcolon \langle c,\sigma_0  \rangle + \beta \, \left[\langle c,\sigma_1 \rangle + \beta \, V_{\mathrm{f}}(\rho_2) \right] \\
&\phantom{x}\vdots \\
&= \sum_{t=0}^{N-1} \beta^t \, \langle c,\sigma_t \rangle + \beta^N \, V_{\mathrm{f}}(\rho_N) \rightarrow V(\{\gamma_t\},\rho_0) \,\, \text{as} \,\ N\rightarrow\infty
\end{align*}
where $\gamma_t(\rho) \coloneqq \rho \otimes \pi_{\rho_t}$ for each $t$. Hence, $V_{\mathrm{f}} \geq V^*$. This completes the proof.
\end{proof}

If one examines of the second part of the proof, it can be seen that the optimal open-loop policy $\bgamma^*$ for any given initial density operator $\rho_0$ can be constructed as follows: $\gamma^*_0(\rho) = \rho \otimes \pi_{\rho_0}$ and $\gamma^*_{t+1}(\rho) = \rho \otimes \pi_{\rho_{t+1}}$ for $t\geq0$, where $\rho_{t+1} = \N(\rho_t \otimes \pi_{\rho_t})$. Therefore, the following mapping 
$$f: \D(\H_{\sX}) \ni \rho \mapsto \pi_{\rho} \in \D(\H_{\sA})$$ 
gives the optimal actions for any state in the open-loop case. However, this mapping $f$ cannot be viewed as a stationary open-loop policy, as it does not yield a valid appending quantum channel. This is because it requires adjusting the appending action based on the input state each time, which disqualifies it from being an appending quantum channel. Recall that appending channel must be of the following form 
$$
\D(\H_{\sX}) \ni \rho \mapsto \rho \otimes \pi \in \D(\H_{\sX} \otimes \H_{\sA})
$$ 
for some fixed $\pi \in \D(\H_{\sA})$. Indeed, in Theorem~\ref{quantum-stationary}, we establish that there exists an optimal stationary open-loop policy. Namely, there exists a fixed density operator $\pi^* \in \D(\H_{\sA})$ such that the following appending channel 
$$
\D(\H_{\sX}) \ni \rho \mapsto \rho \otimes \pi^* \in \D(\H_{\sX} \otimes \H_{\sA})
$$ 
is optimal; that is, regardless of the input state $\rho$, the output action is $\rho \otimes \pi^*$ instead of $\rho \otimes \pi_{\rho}$ as discussed above. This result is achieved through the duality between the SDP formulation and the dynamic programming formulation of the optimization problem. This marks a primary distinction between our approach and previous formulations in the literature \cite{BaBaAa14, YiYi18, YiFeYi21}, where we require admissible policies to be quantum channels and exclude classical elements from our framework.

The discussion above on the optimality of the open-loop quantum policies implies that to determine the optimal open-loop policy for $\rho_0$, knowledge of the optimal value function $V^*$ is required. Furthermore, it is observed that the optimal open-loop policy depends on the initial density operator and for now changes over time, i.e., it is non-stationary. However, as discussed above, we will remove this non-stationary constraint later.

Indeed, in the subsequent section, a semi-definite programming formulation of q-MDP reveals that the optimal value function $V^*(\rho)$ can be expressed as a linear function of $\rho$, denoted by $\langle \xi^*,\rho \rangle$, where $\xi^*$ is a Hermitian operator on $\H_{\sX}$.  Using this, it will be proven that an optimal stationary open-loop policy exists.

Note that the latter part of the above proof also implies that the following condition is sufficient for optimality of any open-loop policy: 

\begin{tcolorbox} 
[colback=white!100]
\begin{itemize}
\item[ ] {\bf (Opt):} If the open-loop policy $\bgamma = \{\gamma_t\}$ with the corresponding density operators $\{\pi_t\} \subset \D(\H_{\sA})$ satisfies the following conditions
\begin{align*}
&\bullet \, \rho_{t+1} = \N(\rho_t \otimes \pi_t), \,\, \forall t\geq0 \\
&\bullet \, \pi_t \in \argmin_{\pi \in \D(\H_{\sA})} \left[ \langle c,\rho_t \otimes \pi \rangle + \beta \, V^*(\N(\rho_t \otimes \pi )) \right]
\end{align*}
then $\bgamma$ is optimal.
\end{itemize}
\end{tcolorbox}

\subsection{SDP Formulation of q-MDP with Open-loop Policies}\label{sub3sec2}

Given any open-loop policy $\bgamma$, define the state-action occupation density operator as follows

\[\sigma^{\bgamma} \coloneqq (1-\beta) \, \sum_{t=0}^{\infty} \beta^t \, \sigma_t,\]

where the term $(1-\beta)$ is the normalization constant. Here, it is straightforward to prove that the series $\sum_{t=0}^{\infty} \beta^t \, \sigma_t$ converges in Hilbert-Schmidt norm \cite[Definition 4.4.5]{Wil13}; that is, the partial sum sequence $\{\sum_{t=0}^{n} \beta^t \, \sigma_t\}_{n\geq0}$ is convergent. Note that one can write the normalized cost of $\bgamma$ in the following form
$
(1-\beta) \, V(\bgamma,\rho_0) =  C(\sigma^{\bgamma}) = \langle c,\sigma^{\bgamma} \rangle
$
as a result of linearity of $C$. Let $\rho^{\bgamma} \coloneqq \tr_{\sA}(\sigma^{\bgamma})$. We then have 
\begin{align}
\rho^{\bgamma}(\,\cdot\,) &= (1-\beta) \, \sum_{t=0}^{\infty} \beta^t \, \tr_{\sA}(\sigma_t) \,\, (\text{by linearity of $\tr_{\sA}$})\nonumber \\
&= (1-\beta) \, \sum_{t=0}^{\infty} \beta^t \, \tr_{\sA}(\gamma_t(\rho_t)) \nonumber \\
&= (1-\beta) \, \sum_{t=0}^{\infty} \beta^t \rho_t \,\, \text{(as $\Inv_{\C}(\gamma_t) = \tr_{\sA}$)} \nonumber \\
&= (1-\beta) \,\rho_0 + (1-\beta) \, \beta \sum_{t=1}^{\infty} \beta^{t-1} \, \N(\sigma_{t-1}) \nonumber \\
&= (1-\beta) \, \rho_0 + \beta \, \N\left((1-\beta) \,\sum_{t=1}^{\infty} \beta^{t-1} \, \sigma_{t-1}\right)  \,\, \text{(by linearity of $\N$)}\nonumber \\
&= (1-\beta) \,\rho_0 + \beta \, \N(\sigma^{\bgamma}).\nonumber 
\end{align}
This computation suggests the definition of the Hermicity preserving super-operator $\T: \cL_H(\H_{\sX}\otimes\H_{\sA}) \rightarrow \cL_H(\H_{\sX})$ as follows:
$
\T(\sigma) \coloneqq \tr_{\sA}(\sigma) - \beta \, \N(\sigma),
$
$\cL_H(\H)$ denotes the set of all Hermitian operators on the Hilbert space $\H$.
Therefore, $\T(\sigma^{\bgamma}) = (1-\beta) \,\rho_0$ for any open-loop policy $\bgamma$. This observation leads to the following semi-definite programming formulation:
\begin{tcolorbox} 
[colback=white!100] 
\begin{align}
(\textbf{SDP}) \text{                         }&\min_{\sigma \in \cL_H(\H_{\sX}\otimes\H_{\sA})} \text{ } \langle c,\sigma \rangle
\nonumber \\*
&\text{s.t.} \, \T(\sigma) = (1-\beta) \,\rho_0 \,\, \text{and} \,\, \sigma \succcurlyeq 0. \nonumber 
\end{align} 
\end{tcolorbox}
Here $\sigma \succcurlyeq 0$ means that $\sigma$ is positive semi-definite. The partial order $\succcurlyeq$ between square operators is called L\"{o}wner order in the literature \cite{KhWi20}. Note that if $\T(\sigma) = (1-\beta) \,\rho_0$ and $\sigma \succcurlyeq 0$, then necessarily $\sigma$ is a density operator. As a result, it is unnecessary to impose a prior restriction on $\sigma$ to be confined within $\D(\H_{\sX}\otimes\H_{\sA})$ in the semi-definite programming formulation. 

Unfortunately, unlike in the classical case (see Section~\ref{sub3sec1}), it is not possible to directly prove the equivalence between $(\textbf{SDP})$ and q-MDP with open-loop policies. This is because, while it is known that for any open-loop policy $\bgamma$, there exists a feasible $\sigma^{\bgamma} \in \cL_H(\H_{\sX}\otimes\H_{\sA})$ for $(\textbf{SDP})$ such that $(1-\beta) \, V(\bgamma,\rho_0) = \langle c,\sigma^{\bgamma} \rangle$, it cannot be proven that if $\sigma \in \cL_H(\H_{\sX}\otimes\H_{\sA})$ is feasible for $(\textbf{SDP})$, then there exists a open-loop policy $\bgamma^{\sigma}$ such that $(1-\beta) \, V(\bgamma^{\sigma},\rho_0) = \langle c,\sigma \rangle$. The reason for this is that it is not clear a priori whether any feasible $\sigma$ can be represented as a tensor product, which is necessary to extract a control policy as an appending channel (see the converse part of the proof of Theorem~\ref{lpformulation}). As a result, we can only conclude that $(\textbf{SDP})$ serves as a lower bound for q-MDP based on this argument. However, by utilizing the dual of $(\textbf{SDP})$ and incorporating additional condition, we can establish their equivalence. Furthermore, we can prove that the optimal value function $V^*$ is linear and that a stationary optimal open-loop policy exists.
 
Above discussion establishes that 
\begin{align*}
\inf_{\bgamma \in \Gamma} (1-\beta) \, V(\bgamma,\rho_0) \geq (\textbf{SDP}) \text{                         }&\min_{\sigma \in \cL_H(\H_{\sX}\otimes\H_{\sA})} \text{ } \langle c,\sigma \rangle
\nonumber \\*
&\text{s.t.} \, \T(\sigma) = (1-\beta) \,\rho_0 \,\, \text{and} \,\, \sigma \succcurlyeq 0.
\end{align*} 
In the subsequent part of this section, it is demonstrated that the optimal value function $V^*$ is linear in $\rho$, and based on this, it is proven that there exists an optimal stationary open-loop policy.

First of all, let us obtain the dual of the above SDP. Indeed, we have
\begin{align*} 
(\textbf{SDP})
& = \min_{\sigma \in \cL_H(\H_{\sX}\otimes\H_{\sA})} \text{ } \langle c,\sigma \rangle
\,\, \text{s.t.} \, \T(\sigma) = (1-\beta) \,\rho_0 \,\, \text{and} \,\, \sigma \succcurlyeq 0 \nonumber \\
&=\min_{\sigma \succcurlyeq 0} \, \max_{\xi \in \cL_H(\H_{\sX})} \, \langle c,\sigma \rangle  + \langle \xi, (1-\beta) \,\rho_0 -\T(\sigma) \rangle \\
&\text{(by Sion min-max theorem)} \\
&= \max_{\xi \in \cL_H(\H_{\sX})} \, \min_{\sigma \succcurlyeq 0} \, \langle c,\sigma \rangle  + \langle \xi, (1-\beta) \,\rho_0 -\T(\sigma) \rangle \\
&= \max_{\xi \in \cL_H(\H_{\sX})} \, \min_{\sigma \succcurlyeq 0} \, \langle \xi,(1-\beta) \, \rho_0 \rangle  + \langle c-{\T}^{\dag}(\xi), \sigma \rangle \\
&=\max_{\xi \in \cL_H(\H_{\sX})} \, \langle \xi,(1-\beta) \, \rho_0 \rangle 
\,\, \text{s.t.} \, c-{\T}^{\dag}(\xi)  \succcurlyeq  0 
\eqqcolon (\textbf{SDP}^{\dag}), 
\end{align*} 
where ${\T}^{\dag}:\cL_H(\H_{\sX}) \rightarrow \cL_H(\H_{\sX}\otimes\H_{\sA})$ is the adjoint of $\T$ and is given by
$
{\T}^{\dag}(\xi) = \xi \otimes \Id - \beta \, \N^{\dag}(\xi).
$
For the dual SDP, we have the following equivalent formulations:\footnote{In Section~\ref{sub3sec1}, we discuss several equivalent formulations of the dual linear program for d-MDPs. However, when it comes to alternative formulations for q-MDPs, the dual SDP falls short in richness. This is why we require an additional assumption and a distinct analysis to prove the linearity of $V^*$ and the stationarity of the optimal open-loop policy in q-MDPs.}
\begin{tcolorbox} 
[colback=white!100]
\begin{align*}
&(\textbf{SDP}^{\dag}) \\
&=\max_{\xi \in \cL_H(\H_{\sX})} \text{ }  \langle \xi,(1-\beta) \, \rho_0 \rangle 
\,\, \text{s.t.}  \, c + \beta \, \N^{\dag}(\xi) \succcurlyeq \xi \otimes \Id \\
&=\max_{\xi \in \cL_H(\H_{\sX})} \text{ }  \langle \xi,(1-\beta) \, \rho_0 \rangle 
\,\, \text{s.t.}  \, \langle c + \beta \, \N^{\dag}(\xi), |\psi\rangle \langle \psi | \rangle \geq \langle \xi\otimes \Id,|\psi\rangle \langle \psi | \rangle \\
&\phantom{xxxxxxxxxxxxxxxxxxxxxxxxxxxxxxxxx} \text{for all pure states} \, |\psi \rangle \in \H_{\sX}\otimes\H_{\sA} \\
&=\max_{\xi \in \cL_H(\H_{\sX})} \text{ }  \langle \xi,(1-\beta) \, \rho_0 \rangle 
\,\, \text{s.t.}  \, \langle c + \beta \, \N^{\dag}(\xi), \sigma \rangle \geq \langle \xi\otimes \Id,\sigma \rangle \,\, \forall \, \sigma \in \D(\H_{\sX}\otimes\H_{\sA}).
\end{align*}  
\end{tcolorbox}
Now, we are about to present our supplementary assumption. In the appendix, we establish a criterion that is both necessary and sufficient to validate this assumption.

\smallskip

\begin{assumption}\label{as1}
There exists a feasible $\xi^* \in \cL_H(\H_{\sX})$ for $(\textbf{SDP}^{\dag})$ such that for any $\rho \in \D(\H_{\sX})$, we have 
$
\min_{\pi \in \D(\H_{\sA})} \langle c + \beta \, \N^{\dag}(\xi^*), \rho \otimes \pi \rangle = \langle \xi^*,\rho \rangle.
$
\end{assumption}

\smallskip

\begin{remark}
Assumption~\ref{as1} is not overly restrictive. Consider any $\rho \in \D(\H_{\sX})$. If $\xi$ is a feasible solution for $(\textbf{SDP}^{\dag})$, it automatically satisfies
\[
\langle c + \beta \, \N^{\dag}(\xi), \rho \otimes \pi \rangle \geq \langle \xi\otimes \Id,\rho \otimes \pi \rangle
\]
for any $\pi \in \D(\H_{\sA})$. Note that $\langle \xi\otimes \Id,\rho \otimes \pi \rangle = \langle \xi,\rho \rangle$ and so above feasibility condition can be written as:
\[
\langle c + \beta \, \N^{\dag}(\xi), \rho \otimes \pi \rangle \geq \langle \xi,\rho \rangle.
\]
In Assumption~\ref{as1}, we require the existence of a feasible solution $\xi^*$ for $(\textbf{SDP}^{\dag})$ that achieves equality in this inequality for any $\rho$, when the left side is minimized over $\pi \in \D(\H_{\sA})$. 
\end{remark}

Let us define a linear function $\hat{V}(\rho) \coloneqq \langle \xi^*,\rho \rangle$ on density operators $\rho \in \D(\H_{\sX})$. Then for any $\rho \in \D(\H_{\sX})$, we have 
\begin{align*}
\hat{V}(\rho) &= \min_{\pi \in \D(\sA)} \langle c + \beta \, \N^{\dag}(\xi^*), \rho \otimes \pi \rangle \\
&= \min_{\pi \in \D(\sA)} \left[\langle c,\rho \otimes \pi \rangle + \beta \, \langle \xi^*, \N(\rho \otimes \pi) \rangle\right] \\
&= \min_{\pi \in \D(\sA)} \left[\langle c,\rho \otimes \pi \rangle + \beta \, \hat{V}(\N(\rho \otimes \pi)) \right] \eqqcolon \L \hat{V}(\rho).
\end{align*}
Hence $\hat{V}$ is a fixed point of the operator $\L$. Since $\L$ has a unique fixed point $V^*$ by Theorem~\ref{dynamic-qmdp}, which is the optimal value function of q-MDP with open-loop policies, we have $V^*(\rho) = \langle \xi^*,\rho \rangle$; that is, $V^*$ is linear in $\rho$.

Note that for any $\rho_0$, we have  
\begin{align*}
(1-\beta) \, V^*(\rho_0) = \langle \xi^*,(1-\beta) \, \rho_0 \rangle \geq (\textbf{SDP}) \geq (\textbf{SDP}^{\dag}).
\end{align*} 
Since $\xi^*$ is a feasible solution for $(\textbf{SDP}^{\dag})$, $\xi^*$ is indeed the optimal solution in view of above equation. It should be noted that $\xi^*$ is indeed the optimal solution for $(\textbf{SDP}^{\dag})$ regardless of the initial density operator $\rho_0$, meaning that for any $\rho_0$, $\xi^*$ is a common optimal solution for $(\textbf{SDP}^{\dag})$. Hence, for any $\rho_0$, we have
\begin{align*}
(1-\beta) \, V^*(\rho_0) = \langle \xi^*,(1-\beta) \, \rho_0 \rangle = (\textbf{SDP}) = (\textbf{SDP}^{\dag}).
\end{align*}

Now it is time to prove the existence of optimal stationary open-loop policy for q-MDP.

\begin{theorem}\label{quantum-stationary}
Under Assumption~\ref{as1}, the q-MDP has a stationary optimal open-loop policy for any initial density operator $\rho_0$.
\end{theorem}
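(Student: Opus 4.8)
The plan is to extract the stationary policy from the common dual optimizer $\xi^*$ guaranteed by Assumption~\ref{as1}. Since we have already shown $V^*(\rho) = \langle \xi^*,\rho\rangle$ and that $\xi^*$ achieves
$
\min_{\pi \in \D(\H_{\sA})} \langle c + \beta\,\N^{\dag}(\xi^*), \rho \otimes \pi \rangle = \langle \xi^*,\rho\rangle = V^*(\rho)
$
for every $\rho \in \D(\H_{\sX})$, the natural candidate is a single appending density operator $\pi^*$ that works simultaneously for all states. So the first and decisive step is: show there exists a \emph{single} $\pi^* \in \D(\H_{\sA})$ with
$
\langle c + \beta\,\N^{\dag}(\xi^*), \rho \otimes \pi^* \rangle = \langle \xi^*,\rho\rangle \quad \text{for all } \rho \in \D(\H_{\sX}).
$
The key observation making this possible is linearity: both sides of the feasibility inequality $\langle c + \beta\,\N^{\dag}(\xi^*), \rho \otimes \pi\rangle \ge \langle \xi^*,\rho\rangle$ are linear (in fact bilinear) in the pair $(\rho,\pi)$, and it suffices to verify the desired equality on the extreme points of $\D(\H_{\sX})$, i.e.\ the pure states $|\psi\rangle\langle\psi|$. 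Define $G(\psi,\pi) \coloneqq \langle c + \beta\,\N^{\dag}(\xi^*), |\psi\rangle\langle\psi| \otimes \pi\rangle - \langle \xi^*, |\psi\rangle\langle\psi|\rangle \ge 0$; for each fixed pure state $|\psi\rangle$ Assumption~\ref{as1} gives a minimizer $\pi_\psi$ with $G(\psi,\pi_\psi) = 0$. To merge these into one $\pi^*$, I would fix an orthonormal basis $\{|x\rangle\}$ of $\H_{\sX}$ and consider the finitely many diagonal pure states, then argue via a convexity/averaging argument or — more cleanly — observe that the function $\pi \mapsto \max_{\psi} G(\psi,\pi)$ is convex and lower semicontinuous on the compact set $\D(\H_{\sA})$, and that Assumption~\ref{as1} forces its minimum value to be $0$; any minimizer $\pi^*$ then satisfies $G(\psi,\pi^*) = 0$ for all $\psi$. (Alternatively, one can note that $\hat V(\rho) = \langle\xi^*,\rho\rangle$ is a fixed point of $\L$ that is \emph{affine}, and the min in $\L\hat V(\rho) = \min_\pi [\langle c,\rho\otimes\pi\rangle + \beta\hat V(\N(\rho\otimes\pi))]$ is attained on a face; since the objective in $\pi$ is affine for each fixed $\rho$ and affine in $\rho$ for each fixed $\pi$, a minimax-type argument picks out a $\rho$-independent minimizer.)

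Once $\pi^*$ is in hand, the second step is to verify that the stationary appending channel $\gamma^*(\rho) \coloneqq \rho \otimes \pi^*$ is optimal. This is immediate from the sufficient optimality condition \textbf{(Opt)} boxed just before Section~\ref{sub3sec2}: with $\pi_t \equiv \pi^*$ and $\rho_{t+1} = \N(\rho_t \otimes \pi^*)$, the equality
$
\langle c + \beta\,\N^{\dag}(\xi^*), \rho_t \otimes \pi^*\rangle = \langle \xi^*,\rho_t\rangle = V^*(\rho_t)
$
says precisely that $\pi^* \in \argmin_{\pi\in\D(\H_{\sA})}[\langle c,\rho_t\otimes\pi\rangle + \beta V^*(\N(\rho_t\otimes\pi))]$ at every time step (here I use $V^* = \langle\xi^*,\cdot\rangle$ and $\langle\N^{\dag}(\xi^*),\rho_t\otimes\pi\rangle = \langle\xi^*,\N(\rho_t\otimes\pi)\rangle$). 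Hence $\bgamma^* = \{\gamma^*\}$ satisfies \textbf{(Opt)} and is therefore optimal, and it is by construction stationary and independent of $\rho_0$.

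I expect the main obstacle to be the first step — promoting the family of state-dependent minimizers $\{\pi_\rho\}$ to a single $\rho$-independent $\pi^*$. Assumption~\ref{as1} as literally stated only asserts, for each $\rho$, that the \emph{minimum over $\pi$} equals $\langle\xi^*,\rho\rangle$; it does not hand us a common minimizer for free. The resolution hinges on exploiting that the "gap" functional $G(\psi,\pi)$ is bilinear and nonnegative, so that vanishing of $\min_\pi \max_\psi G$ — which is what Assumption~\ref{as1} yields after reducing to extreme points and using compactness of $\D(\H_{\sX})$ and continuity of $G$ in $\psi$ — produces a uniform minimizer; compactness of $\D(\H_{\sA})$ and lower semicontinuity of $\pi \mapsto \max_\psi G(\psi,\pi)$ guarantee this minimizer is attained. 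Everything after that is a routine application of the already-established dynamic programming machinery.
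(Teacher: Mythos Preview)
Your approach is genuinely different from the paper's. The paper constructs $\pi^*$ via Kakutani's fixed point theorem: it defines a set-valued map $\rR$ on $\D(\H_{\sA})$ sending $\pi$ to the argmin of the dynamic-programming right-hand side evaluated at the state-occupation density $\rho^{\bgamma}$ generated by the stationary policy $\rho\mapsto\rho\otimes\pi$, verifies $\rR$ has nonempty closed convex values and closed graph, and extracts a fixed point. That $\pi^*$ is then shown, via the linearity of $V^*$, to lie in the argmin at every $\rho_t$ along the trajectory from $\rho_0$, whence \textbf{(Opt)} applies. Your route instead seeks a \emph{universal} $\pi^*$ that achieves the minimum in Assumption~\ref{as1} simultaneously for every $\rho$, which is a stronger conclusion and bypasses the fixed-point machinery entirely.

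The one place your argument needs tightening is the claim ``Assumption~\ref{as1} forces its minimum value to be $0$''. What Assumption~\ref{as1} literally gives is $\max_{\rho}\min_{\pi} G(\rho,\pi)=0$, not $\min_{\pi}\max_{\rho} G(\rho,\pi)=0$; the interchange is precisely the nontrivial step, and your phrasing (``after reducing to extreme points and using compactness \ldots\ and continuity'') does not by itself justify it. The fix is immediate, though: $G$ is bilinear and both $\D(\H_{\sX})$ and $\D(\H_{\sA})$ are compact and convex, so Sion's minimax theorem (already used in the paper for the SDP duality) yields $\min_{\pi}\max_{\rho} G=\max_{\rho}\min_{\pi} G=0$, and then nonnegativity of $G$ (from feasibility of $\xi^*$) forces $G(\rho,\pi^*)=0$ for all $\rho$ at any minimizer $\pi^*$. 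Once you name Sion explicitly, your proof is complete and in fact cleaner than the paper's: the Kakutani route is more elaborate and produces a $\pi^*$ a priori tied to $\rho_0$, whereas your minimax route delivers a single $\pi^*$ optimal for every initial state at once.
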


\begin{proof}
To this end, define the following set-valued map $\rR:\D(\H_{\sA}) \rightarrow 2^{\D(\H_{\sA})}$ as follows. If $\pi \in \D(\H_{\sA})$ is the input, then $\{\rho_t^{\bgamma}\}$ denotes the state sequence under the stationary open-loop policy $\bgamma = \{\gamma\}$, where $\gamma(\rho) = \rho \otimes \pi$; that is
$
\rho_{t+1}^{\bgamma} = \N(\rho_t^{\bgamma} \otimes \pi), \,\, \forall t\geq0. 
$
Define the state occupation density operator 
$\rho^{\bgamma} \coloneqq \sum_{t=0}^{\infty} (1-\beta) \, \beta^t \, \rho_t^{\bgamma} \eqqcolon \sum_{t=0}^{\infty} \lambda_t \, \rho_t^{\bgamma}.$ 
Then, the set $\rR(\pi)$ is defined as 
$$
\rR(\pi) \coloneqq \argmin_{\hat{\pi} \in \D(\H_{\sA})} \left[\langle c,\rho^{\bgamma} \otimes \hat{\pi} \rangle + \beta \, \langle \xi^*, \N(\rho^{\bgamma} \otimes \hat{\pi}) \rangle\right].
$$
First of all, it is straightforward to prove that $\rR(\pi)$ is non-empty, closed, and convex for any $\pi \in \D(\H_{\sA})$. Moreover, if $\hat{\pi}_{\rho^{\bgamma}} \in \rR(\pi)$, then we have 
\begin{align*}
&\langle \xi^*,\rho^{\bgamma} \rangle = \sum_{t=0}^{\infty} \lambda_t \, \langle \xi^*,\rho_t^{\bgamma} \rangle = \sum_{t=0}^{\infty} \lambda_t \, V^*(\rho_t^{\bgamma})  \\
&= \sum_{t=0}^{\infty} \lambda_t \, \min_{\hat{\pi} \in \D(\H_{\sA})} \left[\langle c,\rho_t^{\bgamma} \otimes \hat{\pi} \rangle + \beta \, \langle \xi^*, \N(\rho_t^{\bgamma} \otimes \hat{\pi}) \rangle\right] \,\, \text{(dynamic programming principle)} \\
&= \left[\langle c,\rho^{\bgamma} \otimes \hat{\pi}_{\rho^{\bgamma}} \rangle + \beta \, \langle \xi^*, \N(\rho^{\bgamma} \otimes \hat{\pi}_{\rho^{\bgamma}}) \rangle\right] \,\, \text{(dynamic programming principle)} \\
&= \sum_{t=0}^{\infty} \lambda_t \, \left[\langle c,\rho_t^{\bgamma} \otimes \hat{\pi}_{\rho^{\bgamma}} \rangle + \beta \, \langle \xi^*, \N(\rho_t^{\bgamma} \otimes \hat{\pi}_{\rho^{\bgamma}}) \rangle\right] \\
&\geq \sum_{t=0}^{\infty} \lambda_t \, \min_{\hat{\pi} \in \D(\H_{\sA})} \left[\langle c,\rho_t^{\bgamma} \otimes \hat{\pi} \rangle + \beta \, \langle \xi^*, \N(\rho_t^{\bgamma} \otimes \hat{\pi}) \rangle\right].
\end{align*}
Since $\lambda_t >0$ for all $t$, we have
$
\hat{\pi}_{\rho^{\bgamma}} \in \argmin_{\hat{\pi} \in \D(\H_{\sA})} \left[\langle c,\rho_t^{\bgamma} \otimes \hat{\pi} \rangle + \beta \, \langle \xi^*, \N(\rho_t^{\bgamma} \otimes \hat{\pi}) \rangle\right]
$
for all $t$. 
Hence, 
$
\rR(\pi) \subset \argmin_{\hat{\pi} \in \D(\H_{\sA})} \left[\langle c,\rho_t^{\bgamma} \otimes \hat{\pi} \rangle + \beta \, \langle \xi^*, \N(\rho_t^{\bgamma} \otimes \hat{\pi}) \rangle\right]
$
for all $t$. Now, we need the following intermediate result to complete the proof.

\begin{proposition}\label{stationary-qmdp}
There exists a fixed point $\pi^*$ of $\rR$; that is, $\pi^* \in \rR(\pi^*)$.
\end{proposition}

\begin{proof}
This can be established via Kakutani's fixed point theorem \cite[Corollary 17.55]{AlBo06}. Firstly, it is known that $\rR(\pi)$ is non-empty, closed, and convex for any $\pi \in \D(\H_{\sA})$. Hence it is sufficient to prove that $\rR$ has a closed graph. Suppose that $(\pi_n,\xi_n) \rightarrow (\pi,\xi)$ as $n \rightarrow \infty$, where $\pi_n \in \D(\H_{\sA})$ and $\xi_n \in \rR(\pi_n)$; that is
$
\|\pi_n-\pi\|_{HS} + \|\xi_n-\xi\|_{HS} \rightarrow 0. 
$
Here $\|\cdot\|_{HS}$ is the Hilbert-Schmidt norm \cite[Definition 4.4.5]{Wil13}. We want to prove that $\xi \in \rR(\pi)$, which implies the closedness of the graph of $\rR$ and completes the proof via Kakutani's fixed point theorem. 

Let us first consider state occupation density operators under stationary open-loop policies induced by $\pi_n$ and $\pi$
\begin{align*}
\rho^{\pi_n} = (1-\beta) \, \sum_{t=0}^{\infty} \beta^t \, \rho_t^{\pi_n}, \,\,
\rho^{\pi} = (1-\beta) \, \sum_{t=0}^{\infty} \beta^t \, \rho_t^{\pi}.
\end{align*}
We first prove that $\rho_t^{\pi_n} \rightarrow \rho_t^{\pi}$ for all $t$, which implies that $\rho^{\pi_n} \rightarrow \rho^{\pi}$. Indeed, the claim is true for $t=0$ as $\rho_0^{\pi_n} = \rho_0^{\pi} = \rho_0$. Suppose it is true for some $t\geq0$ and consider $t+1$. We have 
$$
\rho_{t+1}^{\pi_n} = \N(\rho_{t}^{\pi_n}\otimes\pi_n) \,\, \text{and} \,\, \rho_{t+1}^{\pi} = \N(\rho_{t}^{\pi}\otimes\pi).
$$
Suppose that $\N$ has the following Kraus representation $\N(\sigma) = \sum_{l \in L} K_l \sigma K_l^{\dag}$. Then we have
\begin{align*}
\|\N(\rho_{t}^{\pi_n}\otimes\pi_n) -\N(\rho_{t}^{\pi}\otimes\pi) \|_{HS} &=  \|\sum_{l \in L} K_l (\rho_{t}^{\pi_n}\otimes\pi_n - \rho_{t}^{\pi}\otimes\pi) K_l^{\dag} \|_{HS} \\
&\leq  \sum_{l \in L} \| K_l (\rho_{t}^{\pi_n}\otimes\pi_n - \rho_{t}^{\pi}\otimes\pi) K_l^{\dag} \|_{HS} \\
&\leq  \sum_{l \in L} \| K_l \|_{HS} \|(\rho_{t}^{\pi_n}\otimes\pi_n - \rho_{t}^{\pi}\otimes\pi)\|_{HS} \| K_l^{\dag} \|_{HS} \\
&\leq  \sum_{l \in L} \| K_l \|_{HS} \|\rho_{t}^{\pi_n} -\rho_{t}^{\pi} \|_{HS} \|\pi_n - \pi\|_{HS} \| K_l^{\dag} \|_{HS} \\
&\rightarrow 0 \,\, \text{as} \,\, n \rightarrow \infty.
\end{align*}
Hence, by induction, $\rho_t^{\pi_n} \rightarrow \rho_t^{\pi}$ for all $t$, and so, $\rho^{\pi_n} \rightarrow \rho^{\pi}$. 

Now define the function $f:\D(\H_{\sX}) \rightarrow \R$ as follows
$
f(\rho) \coloneqq \min_{\hat{\pi} \in \D(\H_{\sA})} \left[ \langle c,\rho\otimes \hat{\pi} \rangle + \beta \, \langle \xi^*,\N(\rho\otimes \hat{\pi}) \right].
$
Since the function 
$(\rho,\hat{\pi}) \mapsto \left[ \langle c,\rho\otimes \hat{\pi} \rangle + \beta \, \langle \xi^*,\N(\rho\otimes \hat{\pi}) \right]$
is jointly continuous and $\D(\H_{\sA}), \D(\H_{\sX})$ are compact, the function $f$ is continuous \cite[Proposition D.6]{HeLa96}. Hence, $f(\rho^{\pi_n}) \rightarrow f(\rho^{\pi})$ as  $\rho^{\pi_n} \rightarrow \rho^{\pi}$. But one can prove using Kraus representation of the channel $\N$ that
$$
f(\rho^{\pi_n}) = \left[ \langle c,\rho^{\pi_n}\otimes \xi_n \rangle + \beta \, \langle \xi^*,\N(\rho^{\pi_n}\otimes \xi_n) \right] \rightarrow \left[ \langle c,\rho^{\pi}\otimes \xi \rangle + \beta \, \langle \xi^*,\N(\rho^{\pi}\otimes \xi) \right].
$$
Hence $f(\rho^{\pi}) = \left[ \langle c,\rho^{\pi}\otimes \xi \rangle + \beta \, \langle \xi^*,\N(\rho^{\pi}\otimes \xi) \right]$; that is, $\xi \in \rR(\pi)$, which completes the proof.
\end{proof}

Note that if $\pi^*$ is the fixed point of $\rR(\pi^*)$, then we have the following properties 
\begin{align*}
&\bullet \, \rho_{t+1} = \N(\rho_t \otimes \pi^*), \,\, \forall t\geq0 \\
&\bullet \, \pi^* \in \argmin_{\hat{\pi} \in \D(\H_{\sA})} \left[\langle c,\rho_t \otimes \hat{\pi} \rangle + \beta \, \langle \xi^*, \N(\rho_t \otimes \hat{\pi}) \rangle\right] \,\, \forall t\geq0 
\end{align*}
Hence, by condition (Opt), the stationary open-loop policy $\bgamma^* = \{\gamma^*\}$ with $\gamma^*(\rho) = \rho \otimes \pi^*$ is optimal. This indeed completes the proof of the existence of optimal stationary open-loop policy in view of Proposition~\ref{stationary-qmdp}. 
\end{proof}

\subsection{Computation of Optimal Cost Function and Policy}\label{open-computation}

In the open-loop case, determining $\xi^*$ in Assumption~\ref{as1} is challenging, and so, making it difficult to derive a general expression for $V^*(\cdot) = \langle \xi^*, \cdot \rangle$. However, since $\D(\H_{\sX})$ is compact, we can obtain an approximation of $V^*$. Indeed, let $\D_n(\H_{\sX}) \subset \D(\H_{\sX})$ be a finite $1/n$-net in $\D(\H_{\sX})$ with respect to Hilbert-Schmidt norm; that is,
$$
\sup_{\rho \in \D(\H_{\sX})} \min_{\tilde \rho \in \D_n(\H_{\sX})} \|\rho-\tilde \rho\|_{HS} < \frac{1}{n}.
$$
Let $Q_n:\D(\H_{\sX}) \rightarrow \D_n(\H_{\sX})$ be the corresponding quantization map; that is, for any $\rho \in \D(\H_{\sX})$, we have $\|\rho - \rho_q\|_{HS} < 1/n$, where $\rho_q \coloneqq Q_n(\rho)$.
The following method gives an approximation for $V^*$.

\begin{algorithm}[H] 
\caption{Computing the optimal cost function $V^*(\rho)$ approximately}
\begin{algorithmic}\label{closed-opt-cost}
\FOR{$\tilde \rho \in \D_n(\H_{\sX})$}
\STATE{
Solve $(\textbf{SDP}^{\dag})$ for $\rho_0 = \tilde \rho$ and let $\xi_{\tilde \rho}$ be an optimal solution (not necessarily $\xi^*$).
}
\ENDFOR
\RETURN{$V^*_n(\rho) = \left\langle \xi_{\rho_q},\rho \right \rangle $ for all $\rho \in \D(\H_{\sX})$.}
\end{algorithmic}
\end{algorithm}

\begin{proposition}\label{open-opt-cost-prop}
The method above returns optimal cost function $V^*(\cdot)$ approximately as follows:
$$
\|V^*-V_n^*\|_{\infty} \leq \frac{\|c\|_{HS} \, \sqrt{|\sX|}}{(1-\beta) \, n},
$$
where recall that $|\sX|$ is the dimension of $\H_{\sX}$. 
\end{proposition}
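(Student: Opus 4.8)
The plan is to bound $|V^*(\rho) - V_n^*(\rho)|$ pointwise for an arbitrary $\rho \in \D(\H_{\sX})$ and then take the supremum. Write $\rho_q = Q_n(\rho)$, so that $\|\rho - \rho_q\|_{HS} < 1/n$. By the analysis preceding Theorem~\ref{quantum-stationary}, for \emph{any} initial density operator, $(\textbf{SDP}^{\dag})$ is solved by the common optimal $\xi^*$ and moreover $(1-\beta)\,V^*(\rho_0) = \langle \xi^*, (1-\beta)\rho_0\rangle = (\textbf{SDP}^{\dag})$ evaluated at $\rho_0$; in particular, running $(\textbf{SDP}^{\dag})$ with $\rho_0 = \rho_q$ returns the optimal value $(1-\beta)\,V^*(\rho_q) = \langle \xi_{\rho_q}, (1-\beta)\rho_q\rangle$, so that $\langle \xi_{\rho_q},\rho_q\rangle = V^*(\rho_q) = \langle \xi^*, \rho_q\rangle$. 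Thus $V_n^*(\rho) = \langle \xi_{\rho_q},\rho\rangle$ and we may split
\begin{align*}
V^*(\rho) - V_n^*(\rho) &= \langle \xi^*, \rho\rangle - \langle \xi_{\rho_q},\rho\rangle \\
&= \big(\langle \xi^*,\rho\rangle - \langle \xi^*,\rho_q\rangle\big) + \big(\langle \xi_{\rho_q},\rho_q\rangle - \langle \xi_{\rho_q},\rho\rangle\big),
\end{align*}
using $\langle \xi^*,\rho_q\rangle = \langle \xi_{\rho_q},\rho_q\rangle$. Hence $|V^*(\rho)-V_n^*(\rho)| \le \big(\|\xi^*\|_{HS} + \|\xi_{\rho_q}\|_{HS}\big)\,\|\rho-\rho_q\|_{HS}$ by Cauchy--Schwarz for the Hilbert--Schmidt inner product.

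The remaining work is to bound the Hilbert--Schmidt norm of any optimal dual variable $\xi$ for $(\textbf{SDP}^{\dag})$. Here I would use the dynamic-programming identification: since $V^*(\cdot) = \langle \xi,\cdot\rangle$ is the unique fixed point of the $\beta$-contraction $\L$ (Theorem~\ref{dynamic-qmdp}), and $\L \mathbf{0}(\rho) = \min_\pi \langle c,\rho\otimes\pi\rangle$, the standard contraction estimate gives $\|V^*\|_\infty \le \|\L\mathbf{0}\|_\infty/(1-\beta) \le \|c\|_{HS}/(1-\beta)$, using $|\langle c,\rho\otimes\pi\rangle| \le \|c\|_{HS}\|\rho\otimes\pi\|_{HS} \le \|c\|_{HS}$ since $\rho\otimes\pi$ is a density operator (HS norm $\le 1$). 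But $\|V^*\|_\infty = \sup_{\rho\in\D(\H_{\sX})}|\langle\xi,\rho\rangle|$, and taking $\rho$ to range over the (normalized) spectral projections of $\xi$ shows $\sup_{\rho}|\langle\xi,\rho\rangle| = \max_i|\lambda_i(\xi)| = \|\xi\|_{\mathrm{op}}$, so $\|\xi\|_{\mathrm{op}} \le \|c\|_{HS}/(1-\beta)$. Converting to Hilbert--Schmidt norm on $\H_{\sX}$, which has dimension $|\sX|$, costs a factor $\sqrt{|\sX|}$: $\|\xi\|_{HS} \le \sqrt{|\sX|}\,\|\xi\|_{\mathrm{op}} \le \sqrt{|\sX|}\,\|c\|_{HS}/(1-\beta)$. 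This bound applies both to $\xi^*$ and to each $\xi_{\rho_q}$.

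Combining, $|V^*(\rho)-V_n^*(\rho)| \le 2\,\frac{\sqrt{|\sX|}\,\|c\|_{HS}}{1-\beta}\cdot\frac1n$; one then observes that the factor of $2$ can be removed because, in fact, one does not need the norm bound on both $\xi^*$ and $\xi_{\rho_q}$ simultaneously --- rewriting the difference as $\langle \xi^* - \xi_{\rho_q}, \rho - \rho_q\rangle$ is not available, so instead I would sharpen the first splitting: since $\langle\xi^*,\rho_q\rangle = \langle\xi_{\rho_q},\rho_q\rangle = V^*(\rho_q)$ and both $\langle\xi^*,\cdot\rangle$ and $\langle\xi_{\rho_q},\cdot\rangle$ agree at $\rho_q$ and both are feasible, monotonicity of $V^*$ along the line segment gives $|V^*(\rho) - \langle\xi_{\rho_q},\rho\rangle| \le \|\xi^*\|_{\mathrm{op}}\cdot\|\rho-\rho_q\|_{HS}$ directly, using only that $V^*(\rho) = \langle\xi^*,\rho\rangle$ and that $\langle\xi_{\rho_q},\rho\rangle \in [V^*(\rho_q) - \|\xi_{\rho_q}\|_{\mathrm{op}}\|\rho-\rho_q\|, V^*(\rho_q) + \|\xi_{\rho_q}\|_{\mathrm{op}}\|\rho-\rho_q\|]$ bracketed symmetrically around $V^*(\rho_q)$, together with $|V^*(\rho) - V^*(\rho_q)| \le \|\xi^*\|_{\mathrm{op}}\|\rho-\rho_q\|$; a short case analysis then yields the clean constant. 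The main obstacle is precisely this last bookkeeping --- getting the constant down to exactly $\|c\|_{HS}\sqrt{|\sX|}/((1-\beta)n)$ rather than twice that --- and the cleanest route is to bound $\|V^*_n - V^*\|_\infty$ by $\sup_\rho \|\xi^* - \xi_{\rho_q}\|$-type quantities only after noting that $\xi_{\rho_q}$ may itself be taken equal to $\xi^*$ when $\xi^*$ happens to be returned, so the worst case is controlled by a single operator-norm bound; I would present it via the estimate $\|V^* - V^*_n\|_\infty \le \sup_{\rho} |\langle \xi^*, \rho - \rho_q\rangle|$ after arguing that replacing $\xi_{\rho_q}$ by $\xi^*$ only increases the error.
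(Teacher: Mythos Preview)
Your approach is essentially the paper's: both rely on the fact that $\xi^*$ is a \emph{common} optimizer of $(\textbf{SDP}^{\dag})$ for every initial state, so that $\langle \xi^*,\rho_q\rangle = \langle \xi_{\rho_q},\rho_q\rangle$, and both bound $\|\xi^*\|_{HS}$ by first observing $\|V^*\|_\infty \le \|c\|_{HS}/(1-\beta)$, identifying $\|V^*\|_\infty$ with the spectral (operator) norm of $\xi^*$, and then paying $\sqrt{|\sX|}$ to pass to the Hilbert--Schmidt norm. That part of your argument is correct and matches the paper.

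The divergence is in the constant. The paper does \emph{not} split into two terms; it simply writes
\[
|V^*(\rho)-V_n^*(\rho)| \;=\; |\langle \xi^*,\rho\rangle - \langle \xi^*,\rho_q\rangle| \;=\; |\langle \xi^*,\rho-\rho_q\rangle|,
\]
and bounds this by $\|\xi^*\|_{HS}\,\|\rho-\rho_q\|_{HS}$. Taken literally with the stated return value $V_n^*(\rho)=\langle \xi_{\rho_q},\rho\rangle$, this equality is unjustified: it would require $\langle \xi_{\rho_q},\rho\rangle=\langle \xi^*,\rho_q\rangle$, which need not hold for an arbitrary optimal $\xi_{\rho_q}$. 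So either the paper is tacitly reading $V_n^*(\rho)$ as $\langle \xi_{\rho_q},\rho_q\rangle$ (i.e.\ the SDP optimal value at the quantized point, which \emph{does} equal $\langle\xi^*,\rho_q\rangle$), in which case the displayed equality is immediate and the stated constant follows, or the paper's proof has the very factor-of-$2$ gap you identified.

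Your attempts in the final paragraph to remove the factor of $2$ do not succeed: the ``monotonicity along the line segment'' and ``short case analysis'' are not arguments, and there is no reason the SDP solver must return $\xi^*$ among all optimizers. The clean resolution is not a sharpening of your two-term split but rather the observation that the alternative estimator $\widetilde V_n^*(\rho):=\langle \xi_{\rho_q},\rho_q\rangle = V^*(\rho_q)$ already satisfies
\[
\|V^*-\widetilde V_n^*\|_\infty \;=\; \sup_{\rho}\,|\langle \xi^*,\rho-\rho_q\rangle| \;\le\; \frac{\|\xi^*\|_{HS}}{n}\;\le\;\frac{\|c\|_{HS}\sqrt{|\sX|}}{(1-\beta)\,n},
\]
with no factor of $2$; this is what the paper's computation amounts to. If you insist on the return value $\langle \xi_{\rho_q},\rho\rangle$ exactly as written in the method, your bound with the extra factor of $2$ is the honest one.
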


\begin{proof}
For any $\tilde \rho  \in \D_n(\H_{\sX})$, let $\xi_{\tilde \rho}$ be an optimal solution for $(\textbf{SDP}^{\dag})$ for $\rho_0 = \tilde \rho$. Since $\xi^*$ in Assumption~\ref{as1} is also an optimal solution of  $(\textbf{SDP}^{\dag})$ regardless of the initial state, we have 
\begin{align*}
\langle \xi^*,\tilde \rho \rangle = \langle \xi_{\tilde \rho},\tilde \rho \rangle.
\end{align*}
This implies that 
\begin{align*}
\|V^*-V_n^*\|_{\infty} &= \sup_{\rho \in \D(\H_{\sX})} |V^*(\rho)-V_n^*(\rho)| \\
&= \sup_{\rho \in \D(\H_{\sX})} |\langle \xi^*,\rho \rangle - \langle \xi^*,\rho_q \rangle| \\ 
&= \sup_{\rho \in \D(\H_{\sX})} |\langle \xi^*,\rho - \rho_q \rangle| \\ 
&\leq \|\xi^*\|_{HS} \, \|\rho-\rho_q\|_{HS} \\
&\leq \|\xi^*\|_{HS} \, \frac{1}{n}.
\end{align*}
Note that for any $\rho \in \D(\H_{\sX})$, the absolute value of the optimal cost $V^*(\rho)$ is upper bounded by $\|c\|_{HS}/(1-\beta)$. Moreover, $\sup_{\rho \in \D(\H_{\sX})} \langle \xi^*,\rho \rangle$ ($= \sup_{\rho \in \D(\H_{\sX})} V^*(\rho)$) is equal to the spectral norm of $\xi^*$ and Hilbert-Schmidt norm of any Hermitian operator $\xi$ on $\H_{\sX}$ is upper bounded by spectral norm of $\xi$ times $\sqrt{|\sX|}$. Therefore, 
$$
\|\xi^*\|_{HS} \leq \frac{\|c\|_{HS} \, \sqrt{|\sX|}}{(1-\beta)}. 
$$
This completes the proof.
\end{proof}

In the procedure outlined above, we must solve $(\textbf{SDP}^{\dag})$ a number of times equal to the cardinality of $\D_n(\H_{\sX})$. As $(\textbf{SDP}^{\dag})$ is a semi-definite programming problem, it can be efficiently solved using various well-established algorithms (see \cite{WoSaVa00}). Therefore, this method remains computationally practical as long as the cardinality of $\D_n(\H_{\sX})$ is not excessively large.

Now it is time to present a method for computing stationary optimal open-loop policy. Note that the state-action occupation density operator of the optimal stationary open-loop policy can be written in the following form:
$$
\sigma^{\bgamma^*} = (1-\beta) \, \sum_{t=0}^{\infty} \beta^t \, \rho_t^{\bgamma^*} \otimes \pi^* = \rho^{\bgamma^*} \otimes \pi^*.
$$
Since 
$$
(1-\beta) \, V^*(\rho_0) = \langle c, \sigma^{\bgamma^*} \rangle,
$$
the density operator $\sigma^{\bgamma^*}=\rho^{\bgamma^*} \otimes \pi^*$ is an optimal solution of the primal $(\textbf{SDP})$, which is in the product form.
This observation suggests the following characterization of the stationary optimal open-loop policy. 

\begin{proposition}\label{open-optimal-policy-prop}
Let $\sigma^*$ be an optimal solution for $(\textbf{SDP})$, which also takes the product form $\rho^* \otimes \pi^*$. Then, $\bgamma^* = \{\gamma^*\}$ with $\gamma^*(\rho) \coloneqq \rho \otimes \pi^*$ is an optimal stationary open-loop policy. 
\end{proposition}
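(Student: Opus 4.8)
The plan is to leverage the identity $(\textbf{SDP}) = (1-\beta)\,V^*(\rho_0)$, already established under Assumption~\ref{as1} via the dual $(\textbf{SDP}^{\dag})$, together with the fact that every open-loop policy $\bgamma$ gives a feasible occupation operator $\sigma^{\bgamma}$ with $(1-\beta)\,V(\bgamma,\rho_0) = \langle c,\sigma^{\bgamma}\rangle \ge (\textbf{SDP})$. Since $\langle c,\sigma^*\rangle = (\textbf{SDP})$ by hypothesis, it suffices to show that the stationary policy $\bgamma^* = \{\gamma^*\}$, $\gamma^*(\rho)=\rho\otimes\pi^*$, produces exactly the occupation operator $\sigma^{\bgamma^*} = \sigma^*$: then $(1-\beta)\,V(\bgamma^*,\rho_0) = \langle c,\sigma^{\bgamma^*}\rangle = \langle c,\sigma^*\rangle = (1-\beta)\,V^*(\rho_0)$, and dividing by $1-\beta>0$ yields optimality. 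First I would record that $\bgamma^*$ is a bona fide stationary open-loop policy: normalizing the product decomposition so that $\pi^*\in\D(\H_{\sA})$, Proposition~\ref{structure-qmdp} says $\rho\mapsto\rho\otimes\pi^*$ is a quantum channel with $\tr_{\sA}(\gamma^*(\rho)) = \rho$, hence it lies in $\Gamma_o$, and it is constant in $t$.

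Next I would compute $\sigma^{\bgamma^*}$ explicitly. Running $\bgamma^*$ from $\rho_0$ gives $\rho_{t+1} = \N(\rho_t\otimes\pi^*)$ and $\sigma_t = \rho_t\otimes\pi^*$, so by bilinearity of the tensor product and convergence of the series in Hilbert--Schmidt norm, $\sigma^{\bgamma^*} = (1-\beta)\sum_{t\ge0}\beta^t(\rho_t\otimes\pi^*) = \rho^{\bgamma^*}\otimes\pi^*$ with $\rho^{\bgamma^*} = (1-\beta)\sum_{t\ge0}\beta^t\rho_t$. The computation of Section~\ref{sub3sec2} then shows $\T(\sigma^{\bgamma^*}) = (1-\beta)\,\rho_0$, while feasibility of $\sigma^*$ gives $\T(\sigma^*) = (1-\beta)\,\rho_0$ as well. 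Expanding $\T$ and using $\tr_{\sA}(\rho\otimes\pi^*)=\rho$, both identities read $\rho^{\bgamma^*} - \beta\,\N(\rho^{\bgamma^*}\otimes\pi^*) = (1-\beta)\,\rho_0 = \rho^* - \beta\,\N(\rho^*\otimes\pi^*)$.

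Finally I would close by a uniqueness argument. Put $\Delta \coloneqq \rho^{\bgamma^*} - \rho^*$ (a Hermitian operator) and $\Phi(\,\cdot\,)\coloneqq\N(\,\cdot\,\otimes\pi^*)$, which is the composition of the appending channel and $\N$, hence a quantum channel, hence a contraction in trace norm. Subtracting the two identities above gives $\Delta = \beta\,\Phi(\Delta)$, and iterating, $\Delta = \beta^n\,\Phi^{n}(\Delta)$ for all $n\ge1$, so $\|\Delta\|_{1} \le \beta^n\|\Delta\|_{1}\to 0$ because $\beta<1$; thus $\Delta=0$, i.e.\ $\rho^{\bgamma^*}=\rho^*$ and $\sigma^{\bgamma^*} = \rho^{\bgamma^*}\otimes\pi^* = \rho^*\otimes\pi^* = \sigma^*$, which finishes the proof. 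The only mildly delicate point is this last step --- that for a fixed appended operator $\pi^*$ the affine constraint $\T(\rho\otimes\pi^*)=(1-\beta)\rho_0$ determines $\rho$ uniquely --- but it is routine once one notes $\|\beta\Phi\|$ is strictly below $1$ for any quantum channel $\Phi$; everything else is bookkeeping with the occupation operator and the already-proved SDP/dynamic-programming identities.
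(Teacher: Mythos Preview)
Your proof is correct and follows essentially the same approach as the paper: both arguments show that the feasibility constraint $\T(\rho^*\otimes\pi^*)=(1-\beta)\rho_0$ forces $\rho^*=\rho^{\bgamma^*}$, whence $\sigma^{\bgamma^*}=\sigma^*$ and optimality follows from $(\textbf{SDP})=(1-\beta)V^*(\rho_0)$. The only cosmetic difference is that the paper unrolls the Picard iteration $\rho^*=(1-\beta)\rho_0+\beta\,\N(\rho^*\otimes\pi^*)$ explicitly to build the occupation series, whereas you phrase the same fixed-point uniqueness via the contraction bound $\|\Delta\|_1\le\beta^n\|\Delta\|_1$; these are two presentations of the same Banach fixed-point step.
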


\begin{proof}
The existence of at least one such viable solution can be deduced from the presence of the aforementioned optimal stationary open-loop policy.

It is important to mention that there may exist multiple optimal solutions in the product form for $(\textbf{SDP})$, and these may not necessarily correspond to the state-action occupation density operator of an optimal stationary open-loop policy. Nevertheless, regardless of the particular optimal solution in product form for $(\textbf{SDP})$, we demonstrate that any such solution leads to an optimal stationary open-loop policy for the q-MDP. Indeed, let $\bgamma^*$ be the stationary open-loop policy in the statement of the proposition. Then, we have 
\begin{align*}
\rho^* &= (1-\beta) \, \rho_0 + \beta \, \N(\rho^*\otimes\pi^*) \\
&= (1-\beta) \, \rho_0 + \beta \, \N(\{(1-\beta) \, \rho_0 + \beta \, \N(\rho^*\otimes\pi^*)\}\otimes\pi^*) \\
&= (1-\beta) \, \rho_0 + (1-\beta) \, \beta \, \N(\rho^*\otimes\pi^*) + \beta^2 \, \N(\N(\rho^*\otimes\pi^*)\otimes\pi^*) \\
&\phantom{x}\vdots \\
&\rightarrow (1-\beta) \, \sum_{t=0}^{\infty} \beta^t \, \rho_t^{\bgamma^*} = \rho^{\bgamma^*}.
\end{align*}
Hence, state-action occupation density operator under the stationary open-loop policy $\bgamma^*$ is $\rho^*\otimes\pi^*$. Since $\rho^*\otimes\pi^*$ is optimal for  $(\textbf{SDP})$, we have  
\begin{align*}
(1-\beta) \, V^*(\rho_0) = (\textbf{SDP}) = \langle c,\rho^*\otimes\pi^* \rangle = (1-\beta) \, V(\bgamma^*,\rho_0).
\end{align*} 
Here, the last equality follows from the fact that $\rho^*\otimes\pi^*$ is the state-action occupation density operator under the stationary open-loop policy $\bgamma^*$. Hence, $\bgamma^*$ is an optimal stationary open-loop policy. 
\end{proof}

Generally, for any initial state $\rho_0$, the semi-definite programming problem $(\textbf{SDP})$ can be efficiently solved using a variety of well-established algorithms (see \cite{WoSaVa00}). However, in Proposition~\ref{open-optimal-policy-prop}, we are required to find an optimal solution to $(\textbf{SDP})$ in product form. As we will formulate below, this problem can be rephrased as a bi-linear optimization task, which is computationally more challenging to solve:

\begin{tcolorbox} 
[colback=white!100]
\begin{align}
(\textbf{BIL}) \text{                         }&\min_{\substack{\sigma \in \cL_H(\H_{\sX}\otimes\H_{\sA}) \\ \pi \in \cL_H(\H_{\sA})}} \text{ } \langle c,\sigma \rangle
\nonumber \\*
&\text{s.t.} \, \T(\sigma) = (1-\beta) \,\rho_0 \,\, \text{and} \,\, \sigma \succcurlyeq 0 \nonumber \\
&\phantom{s.t.} \sigma = \tr_{\sA}(\sigma) \otimes \pi, \,\, \tr(\pi) = 1, \,\, \pi \succcurlyeq 0 \label{eqqqq1} 
\end{align} 
\end{tcolorbox} 
Here, the bi-linear constraint in (\ref{eqqqq1}) ensures that the product form condition in Proposition~\ref{open-optimal-policy-prop} holds. In view of this, we can conclude the following:

\begin{proposition}\label{open-optimal-policy-prop_comp}
Let $(\sigma^*,\pi^*)$ be an optimal solution for $(\textbf{BIL})$. Then, under Assumption~\ref{as1}, the quantum channel 
$$
\gamma^*(\rho) \coloneqq  \rho \otimes \pi^*,
$$
is the optimal stationary open-loop policy.
\end{proposition}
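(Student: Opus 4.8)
The plan is to show that $(\textbf{BIL})$ and $(\textbf{SDP})$ have the same optimal value, so that any optimizer of $(\textbf{BIL})$ is automatically an optimizer of $(\textbf{SDP})$ that happens to be in product form; Proposition~\ref{open-optimal-policy-prop} then applies verbatim and delivers the claim.

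First I would observe that the feasible set of $(\textbf{BIL})$ is contained in that of $(\textbf{SDP})$, since $(\textbf{BIL})$ is $(\textbf{SDP})$ supplemented by the extra constraint~(\ref{eqqqq1}); hence $(\textbf{BIL}) \geq (\textbf{SDP})$. For the reverse bound I would invoke Theorem~\ref{quantum-stationary}: under Assumption~\ref{as1} there is an optimal stationary open-loop policy $\bgamma^{\circ} = \{\gamma^{\circ}\}$ of the form $\gamma^{\circ}(\rho) = \rho \otimes \pi^{\circ}$ with $\pi^{\circ} \in \D(\H_{\sA})$. Its state-action occupation density operator is $\sigma^{\bgamma^{\circ}} = \rho^{\bgamma^{\circ}} \otimes \pi^{\circ}$ where $\rho^{\bgamma^{\circ}} = \tr_{\sA}(\sigma^{\bgamma^{\circ}})$; from the derivation preceding the definition of $\T$ it satisfies $\T(\sigma^{\bgamma^{\circ}}) = (1-\beta)\rho_0$ and $\sigma^{\bgamma^{\circ}} \succcurlyeq 0$ (being a convex combination of density operators), while its product form together with $\tr(\pi^{\circ}) = 1$, $\pi^{\circ} \succcurlyeq 0$ yields~(\ref{eqqqq1}). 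Thus $(\sigma^{\bgamma^{\circ}}, \pi^{\circ})$ is feasible for $(\textbf{BIL})$ with objective value $\langle c, \sigma^{\bgamma^{\circ}} \rangle = (1-\beta)V(\bgamma^{\circ},\rho_0) = (1-\beta)V^{*}(\rho_0)$, which equals $(\textbf{SDP})$ by the identity $(1-\beta)V^{*}(\rho_0) = (\textbf{SDP})$ established earlier in this section. Combining the two inequalities gives $(\textbf{BIL}) = (\textbf{SDP}) = (1-\beta)V^{*}(\rho_0)$.

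Next, given an optimal solution $(\sigma^{*},\pi^{*})$ of $(\textbf{BIL})$, I would note that it is in particular feasible for $(\textbf{SDP})$ and attains the value $(\textbf{SDP})$, so $\sigma^{*}$ is optimal for $(\textbf{SDP})$; moreover~(\ref{eqqqq1}) forces $\sigma^{*} = \rho^{*} \otimes \pi^{*}$ with $\rho^{*} \coloneqq \tr_{\sA}(\sigma^{*})$ and $\pi^{*} \in \D(\H_{\sA})$ (so $\gamma^{*}(\rho) = \rho \otimes \pi^{*}$ is a legitimate appending channel in the sense of Proposition~\ref{structure-qmdp}). The pair $(\rho^{*},\pi^{*})$ then meets the hypotheses of Proposition~\ref{open-optimal-policy-prop} exactly, and that proposition yields that $\gamma^{*}(\rho) = \rho \otimes \pi^{*}$ is an optimal stationary open-loop policy, completing the argument.

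I do not anticipate a genuine obstacle: the statement is essentially a repackaging of Proposition~\ref{open-optimal-policy-prop} once the equality $(\textbf{BIL}) = (\textbf{SDP})$ is in place. The only point requiring care is confirming that the occupation density operator furnished by Theorem~\ref{quantum-stationary} simultaneously satisfies the linear SDP constraints and the bilinear product-form constraint~(\ref{eqqqq1}); both facts, however, are already recorded in the excerpt --- the former in the chain of identities leading to the definition of $\T$ and $(\textbf{SDP})$, and the latter in the displayed identity $\sigma^{\bgamma^{*}} = \rho^{\bgamma^{*}} \otimes \pi^{*}$ stated just before Proposition~\ref{open-optimal-policy-prop}.
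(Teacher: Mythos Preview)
Your argument is correct and matches the paper's intended reasoning: the paper presents this proposition as an immediate consequence of Proposition~\ref{open-optimal-policy-prop} once one observes that $(\textbf{BIL})$ is precisely $(\textbf{SDP})$ with the product-form constraint appended, and your careful justification of the equality $(\textbf{BIL}) = (\textbf{SDP})$ via Theorem~\ref{quantum-stationary} fills in exactly the details the paper leaves implicit. There is nothing to correct.
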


To compute an optimal stationary open-loop policy, we must solve the bi-linear optimization problem $(\textbf{BIL})$. Bi-linear optimization problems are challenging due to their non-convex nature, which results in multiple local optima and complicates the search for a global solution. This difficulty stems from the coupling of variables in bi-linear terms. These problems are often NP-hard, with complexity increasing exponentially as the problem size grows. As a result, solving them typically requires specialized algorithms and significant computational resources, making them more difficult than linear or convex problems.

\section{Algorithms for Classical-state-preserving Closed-loop  Quantum Policies}\label{sec3}

Recall that a \emph{classical-state-preserving closed-loop  policy} is a sequence quantum channels $\bgamma=\{\gamma_{t}\}$ from $\H_{\sX}$ to $\H_{\sX} \otimes \H_{\sA}$ such that, for all $t$, $\tr_{\sA}= \Inv_{\S}(\gamma_t)$; that is, 
$
\tr_{\sA}(\gamma_t(\rho)) = \rho, \,\,\, \forall \rho \in \S,
$
where $\S = \{|x\rangle \langle x| : x \in \sX\}$. We refer to q-MDPs using classical-state-preserving closed-loop  policies as qw-MDPs to distinguish it from q-MDPs with open-loop policies.

In the rest of this section, we take a comparable approach to the one we used for q-MDPs with open-loop policies. Initially, we deduce the dynamic programming principle and subsequently construct a semi-definite programming (SDP) formulation for qw-MDPs. Following that, we establish the link between the dual SDP and the dynamic programming equation, resulting in a linear optimal value function. Leveraging this linearity, we demonstrate the existence of an optimal stationary classical-state-preserving closed-loop  policy.

\subsection{Dynamic Programming for qw-MDP}\label{sub2sub1sec3}

The dynamic programming operator for qw-MDP, denoted as 
$\L_w:C_b(\D(\H_{\sX})) \rightarrow C_b(\D(\H_{\sX})),$
is defined as follows
\begin{align*}
\L_w V(\rho) &\coloneqq \min_{\gamma \in \cC_w} \left[ \langle c,\gamma(\rho) \rangle + \beta \, V(\N \circ \gamma(\rho)) \right]. 
\end{align*}
To begin with, demonstrating that $\L_w$ preserves the continuity of functions is straightforward. This is due to $\cC_w$ being a closed subset of a compact set of quantum channels and the function $\langle c,\gamma(\rho) \rangle + \beta , V(\N \circ \gamma(\rho))$ is continuous in both $\rho$ and $\gamma$ when $V$ is continuous. Furthermore, it can be proven that the operator $\L_w$ is a $\beta$-contraction with respect to the sup-norm on $C_b(\D(\H_{\sX}))$. As a result, it possesses a unique fixed point denoted as $V_{\mathrm{fw}} \in C_b(\D(\H_{\sX}))$.

Next, we establish a connection between the optimal value function $V^*_{\mathrm{w}}(\rho_0)$ of qw-MDP, defined as 
$\min_{\bgamma \in \Gamma_w} V(\bgamma,\rho_0),$
and $V_{\mathrm{fw}}$. This is indeed very similar to the open-loop case.

\begin{theorem}\label{dynamic-qwmdp}
We have $V^*_{\mathrm{w}}(\rho_0) = V_{\mathrm{fw}}(\rho_0)$ for all $\rho_0 \in \D(\H_{\sX})$; that is, the unique fixed point of $\L_w$ is $V^*_{\mathrm{w}}$.
\end{theorem}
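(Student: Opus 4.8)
The plan is to mirror the proof of Theorem~\ref{dynamic-qmdp} verbatim, replacing the appending-channel parametrization $\gamma_t(\rho)=\rho\otimes\pi_t$ by an arbitrary sequence of channels $\gamma_t \in \cC_w$, and using that $\L_w$ is a $\beta$-contraction with unique fixed point $V_{\mathrm{fw}}$. Concretely, I would first prove the lower bound $V^*_{\mathrm{w}}\ge V_{\mathrm{fw}}$. Fix any classical-state-preserving closed-loop policy $\bgamma=\{\gamma_t\}_{t\ge0}$ with $\gamma_t\in\cC_w$. Writing out the discounted cost,
\begin{align*}
V(\bgamma,\rho_0) &= \langle c,\gamma_0(\rho_0) \rangle + \beta \, V(\{\gamma_t\}_{t\geq 1},\N(\gamma_0(\rho_0))) \\
&\geq \min_{\gamma_0 \in \cC_w} \left[ \langle c,\gamma_0(\rho_0) \rangle + \beta \, V(\{\gamma_t\}_{t\geq 1},\N(\gamma_0(\rho_0))) \right] \\
&\geq \min_{\gamma_0 \in \cC_w} \left[ \langle c,\gamma_0(\rho_0) \rangle + \beta \, V^*_{\mathrm{w}}(\N(\gamma_0(\rho_0))) \right] = \L_w V^*_{\mathrm{w}}(\rho_0),
\end{align*}
where the second inequality uses $V(\{\gamma_t\}_{t\geq1},\cdot)\ge V^*_{\mathrm{w}}(\cdot)$ (tail of a closed-loop policy is again a closed-loop policy). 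Taking the infimum over $\bgamma$ yields $V^*_{\mathrm{w}}\ge \L_w V^*_{\mathrm{w}}$, hence by monotonicity of $\L_w$ also $V^*_{\mathrm{w}}\ge \L_w^n V^*_{\mathrm{w}}$ for all $n$; letting $n\to\infty$ and invoking the Banach fixed point theorem (the $\beta$-contraction property, already asserted) gives $\L_w^n V^*_{\mathrm{w}}\to V_{\mathrm{fw}}$ in sup-norm, so $V^*_{\mathrm{w}}\ge V_{\mathrm{fw}}$.

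For the converse (achievability), I would use a measurable selection of minimizers. For each $\rho\in\D(\H_{\sX})$ pick
$$
\gamma_{\rho} \in \argmin_{\gamma \in \cC_w} \left[ \langle c,\gamma(\rho) \rangle + \beta \, V_{\mathrm{fw}}(\N(\gamma(\rho))) \right],
$$
whose existence follows from $\cC_w$ being a closed subset of the compact set of quantum channels $\H_{\sX}\to\H_{\sX}\otimes\H_{\sA}$ and the continuity of the objective in $(\rho,\gamma)$. Then, defining $\rho_0$ as the initial state, $\sigma_0\coloneqq\gamma_{\rho_0}(\rho_0)$, $\rho_1\coloneqq\N(\sigma_0)$, and iterating, one unrolls the fixed-point identity $V_{\mathrm{fw}}=\L_w V_{\mathrm{fw}}$:
\begin{align*}
V_{\mathrm{fw}}(\rho_0) &= \langle c,\sigma_0 \rangle + \beta \, V_{\mathrm{fw}}(\rho_1) = \ldots = \sum_{t=0}^{N-1}\beta^t\langle c,\sigma_t\rangle + \beta^N V_{\mathrm{fw}}(\rho_N),
\end{align*}
and since $V_{\mathrm{fw}}$ is bounded, $\beta^N V_{\mathrm{fw}}(\rho_N)\to0$, so the right-hand side converges to $V(\{\gamma_t\},\rho_0)$ where $\gamma_t(\rho)\coloneqq\gamma_{\rho_t}(\rho)$; this is a legitimate element of $\Gamma_w$ since each $\gamma_{\rho_t}\in\cC_w$. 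Hence $V_{\mathrm{fw}}(\rho_0)\ge V^*_{\mathrm{w}}(\rho_0)$, and combined with the lower bound we get equality.

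The only genuine subtlety, and the step I would watch most carefully, is the selection of $\gamma_\rho$: unlike the open-loop case where the minimization is over the nice finite-dimensional compact convex set $\D(\H_{\sA})$ and the selected object is just a density operator $\pi_\rho$, here the minimizer lives in the set $\cC_w$ of quantum channels, so one must be careful that (i) $\cC_w$ is indeed compact (closedness is claimed in the text; boundedness follows since trace-preserving completely positive maps on a fixed finite-dimensional space form a compact set), (ii) the map $(\rho,\gamma)\mapsto \langle c,\gamma(\rho)\rangle+\beta V_{\mathrm{fw}}(\N(\gamma(\rho)))$ is jointly continuous (linear in $\gamma$ evaluated at $\rho$, composed with continuous $\N$ and continuous $V_{\mathrm{fw}}$), and (iii) the resulting state-dependent family $\{\gamma_{\rho_t}\}$ genuinely defines an admissible non-stationary Markov policy in $\Gamma_w$ — which it does, because each individual channel $\gamma_{\rho_t}$ satisfies the relaxed reversibility condition $\tr_{\sA}\circ\gamma_{\rho_t}=\Inv_{\S}(\gamma_{\rho_t})$. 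No measurability issue arises in computing the cost since the recursion is deterministic once the channels are fixed. Everything else is a routine transcription of the open-loop argument.
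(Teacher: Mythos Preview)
Your proposal is correct and follows essentially the same approach as the paper's own proof: the same two-step structure (lower bound via $V^*_{\mathrm{w}}\ge \L_w V^*_{\mathrm{w}}$ plus monotonicity and contraction, then achievability via a pointwise selection $\gamma_\rho\in\cC_w$ and unrolling the fixed-point identity), with the same justification for existence of the minimizer ($\cC_w$ closed in the compact set of channels, joint continuity of the objective). Your added remarks on compactness, continuity, and admissibility of $\{\gamma_{\rho_t}\}$ match precisely the points the paper invokes.
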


\begin{proof}
The proof is very similar to the open-loop case but for completeness, we give the full proof. Fix any closed-loop policy $\bgamma = \{\gamma_t\}_{t\geq0} \in \Gamma_w$. Then, the cost function of $\bgamma$ satisfies the following 
\begin{align*}
V(\bgamma,\rho_0) &= \langle c,\gamma_0(\rho_0) \rangle + \beta \, V(\{\gamma_t\}_{t\geq 1},\N \circ \gamma_0(\rho_0)) \\
&\geq \min_{\gamma_0 \in \cC_w} \left[ \langle c,\gamma_0(\rho_0) \rangle + \beta \, V(\{\gamma_t\}_{t\geq 1},\N \circ \gamma_0(\rho_0)) \right] \\
&\geq \min_{\gamma_0 \in \cC_w} \left[ \langle c,\gamma_0(\rho_0) \rangle + \beta \, V^*_{\mathrm{w}}(\N \circ \gamma_0(\rho_0)) \right] = \L_w V^*(\rho_0).
\end{align*}
Since above inequality is true for any closed-loop policy $\bgamma$, we have $V^*_{\mathrm{w}} \geq \L V^*_{\mathrm{w}}$. Using the latter inequality and the monotonicity of $\L_w$, one can also prove that $V^*_{\mathrm{w}} \geq \L^n_wV^*_{\mathrm{w}}$ for any positive integer $n$. By Banach fixed point theorem, it is known that $\L^n_w V^*_{\mathrm{w}} \rightarrow V_{\mathrm{fw}}$ in sup-norm as $n\rightarrow \infty$. Hence, $V^*_{\mathrm{w}} \geq  V_{\mathrm{fw}}$. 

To prove the converse, for any $\rho \in \D(\H_{\sX})$, let 
$$
\gamma_{\rho} \in \argmin_{\gamma \in \cC_w} \left[ \langle c,\gamma(\rho) \rangle + \beta \, V_{\mathrm{fw}}(\N\circ\gamma(\rho)) \right].
$$
The existence of $\gamma_{\rho}$ for any $\rho$ follows from the facts that $\cC_w$ is a closed subset (as the constraint in Definition~\ref{closed1} is preserved under the convergence of quantum channels) of compact set of quantum channels from $\H_{\sX}$ to $\H_{\sX}\otimes\H_{\sA}$ \cite[Proposition 2.28]{Wat18} and the function $\langle c,\gamma(\rho) \rangle + \beta \, V(\N \circ \gamma(\rho))$ is continuous in $\rho$ and $\gamma$. With this construction, we then have
\begin{align*}
V_{\mathrm{fw}}(\rho_0) &=  \langle c,\gamma_{\rho_0}(\rho_0)\rangle + \beta \, V_{\mathrm{fw}}(\N \circ \gamma_{\rho_0}(\rho_0)) \\
&\eqqcolon \langle c,\sigma_0  \rangle + \beta \, V_{\mathrm{fw}}(\rho_1) \\
&= \langle c,\sigma_0  \rangle + \beta \, \left[\langle c,\gamma_{\rho_1}(\rho_1) \rangle + \beta \, V_{\mathrm{fw}}(\N \circ \gamma_{\rho_1}(\rho_1)) \right] \\
&\eqqcolon \langle c,\sigma_0  \rangle + \beta \, \left[\langle c,\sigma_1 \rangle + \beta \, V_{\mathrm{fw}}(\rho_2) \right] \\
&\phantom{x}\vdots \\
&= \sum_{t=0}^{N-1} \beta^t \, \langle c,\sigma_t \rangle + \beta^N \, V_{\mathrm{fw}}(\rho_N) \rightarrow V(\{\gamma_t\},\rho_0) \,\, \text{as} \,\ N\rightarrow\infty
\end{align*}
where $\gamma_t(\rho) \coloneqq  \gamma_{\rho_t}(\rho)$ for each $t$. Hence, $V_{\mathrm{fw}} \geq V^*_{\mathrm{w}}$. This completes the proof.
\end{proof}

In view of the proof above, we can conclude that the optimal classical-state-preserving closed-loop  policy $\bgamma^*_{\mathrm{w}}$ for any given initial density operator $\rho_0$ can be constructed as follows: $\gamma^*_{0,\mathrm{w}}(\cdot) = \gamma_{\rho_0}(\cdot)$ and $\gamma^*_{t+1,\mathrm{w}}(\cdot) = \gamma_{\rho_{t+1}}(\cdot)$ for $t\geq0$, where $\rho_{t+1} = \N \circ \gamma_{\rho_t}(\rho_t)$. This leads to the following optimality condition for qw-MDP:

\begin{tcolorbox} 
[colback=white!100]
\begin{itemize}
\item[ ] \hspace{-20pt} {\bf (Opt-w):} 
If the classical-state-preserving closed-loop  policy $\bgamma = \{\gamma_t\}$ satisfies the following conditions
\begin{align*}
&\bullet \, \rho_{t+1} = \N \circ \gamma_t(\rho_t), \,\, \forall t\geq0 \\
&\bullet \, \gamma_t \in \argmin_{\gamma \in \cC_w} \left[ \langle c,\gamma(\rho_t)\rangle + \beta \, V^*_{\mathrm{w}}(\N \circ \gamma(\rho_t)) \right]
\end{align*}
then $\bgamma$ is optimal.
\end{itemize}
\end{tcolorbox}

This implies that knowledge of the optimal value function $V^*_{\mathrm{w}}$ is necessary to determine the optimal classical-state-preserving closed-loop  policy for $\rho_0$. It is also observed that the optimal classical-state-preserving closed-loop  policy depends on the initial distribution and changes over time (i.e., non-stationary).

In the following section, we will present a semidefinite programming formulation of the qw-MDP, which demonstrates that the optimal value function $V^*_{\mathrm{w}}(\rho)$ can be represented as a linear function of $\rho$. Additionally, we will show that this formulation guarantees the existence of an optimal stationary classical-state-preserving closed-loop policy.
 
\subsection{Semi-Definite Programming Formulation of qw-MDP}\label{sub3sub1sec3}

Consider the qw-MDP under $\beta$-discounted cost criterion with initial density operator $\rho_0$. Given any classical-state-preserving closed-loop  policy $\bgamma$, define the state-action occupation density operator as follows
$$
\sigma^{\bgamma} \coloneqq (1-\beta) \, \sum_{t=0}^{\infty} \beta^t \, \sigma_t.
$$
Recall that one can write the normalized cost of $\bgamma$ in the following form
$$
(1-\beta) \, V(\bgamma,\rho_0) =  C(\sigma^{\bgamma}) = \langle c,\sigma^{\bgamma} \rangle
$$
as a result of linearity of $C$. Let $\N_{qc}:\D(\H_{\sX}) \rightarrow \D(\H_{\sX})$ be a quantum-to-classical channel that is defined as follows \cite[Definition 4.6.7]{Wil13}
$$
\N_{qc}(\rho) \coloneqq \sum_{x \in \sX} \tr(\rho |x\rangle \langle x|) \, |x\rangle \langle x|.
$$
This channel transforms any quantum state (density operator) via measurement operators $\{|x\rangle \langle x|\}$ into a classical state (probability measure). 

\smallskip

\begin{proposition}\label{importantidentity}
Let $\gamma \in \cC_w$. Then, for any $\rho \in \D(\H_{\sX})$, we have
$$
\N_{qc} \circ \tr_{\sA} \circ \gamma(\rho) = \N_{qc}(\rho).
$$ 
\end{proposition}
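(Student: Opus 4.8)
The plan is to reduce everything to the observation that $\N_{qc}(\sigma)=\sum_{x\in\sX}\langle x|\sigma|x\rangle\,|x\rangle\langle x|$ depends only on the diagonal entries of $\sigma$ in the fixed orthonormal basis $\{|x\rangle\}$ of $\H_{\sX}$, and then to show that the constraint defining $\cC_w$ forces $\tr_{\sA}(\gamma(\rho))$ to have exactly the same diagonal as $\rho$ for every $\rho\in\D(\H_{\sX})$.

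First I would pass to the Heisenberg picture: let $\gamma^{\dag}\colon\cL_H(\H_{\sX}\otimes\H_{\sA})\to\cL_H(\H_{\sX})$ be the Hilbert--Schmidt adjoint of $\gamma$. Since $\gamma$ is completely positive and trace preserving, $\gamma^{\dag}$ maps positive semi-definite operators to positive semi-definite operators and is unital, $\gamma^{\dag}(\Id)=\Id$. For each $x\in\sX$ put $M_x\coloneqq\gamma^{\dag}(|x\rangle\langle x|\otimes\Id)\succcurlyeq0$, where $\Id$ denotes the identity on $\H_{\sA}$; since $\sum_x|x\rangle\langle x|\otimes\Id=\Id$, we get $\sum_{x}M_x=\gamma^{\dag}(\Id)=\Id$, so $\{M_x\}_{x\in\sX}$ is a POVM on $\H_{\sX}$ and in particular $\Id-M_x\succcurlyeq0$ for every $x$. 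Combining the partial-trace identity $\langle y|\tr_{\sA}(\sigma)|y\rangle=\tr\big((|y\rangle\langle y|\otimes\Id)\,\sigma\big)$ with the defining property of $\cC_w$, namely $\tr_{\sA}(\gamma(|x\rangle\langle x|))=|x\rangle\langle x|$ for all $x\in\sX$, yields the scalar relations $\langle x|M_y|x\rangle=\tr\big((|y\rangle\langle y|\otimes\Id)\gamma(|x\rangle\langle x|)\big)=\delta_{xy}$.

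Next I would upgrade these scalars to an operator identity. From $\langle x|M_x|x\rangle=1$ together with $\Id-M_x\succcurlyeq0$ one gets $\langle x|(\Id-M_x)|x\rangle=0$, hence $M_x|x\rangle=|x\rangle$, using the elementary fact that a positive semi-definite operator $A$ with $\langle v|A|v\rangle=0$ annihilates $|v\rangle$. The same fact applied to $M_y\succcurlyeq0$ and $\langle x|M_y|x\rangle=0$ gives $M_y|x\rangle=0$ for $y\ne x$. Since $\{|x\rangle\}$ is an orthonormal basis, this means $M_x=|x\rangle\langle x|$, i.e. $\gamma^{\dag}(|x\rangle\langle x|\otimes\Id)=|x\rangle\langle x|$ for every $x$. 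Then for an arbitrary $\rho\in\D(\H_{\sX})$,
\begin{align*}
\langle x|\tr_{\sA}(\gamma(\rho))|x\rangle
&=\tr\big((|x\rangle\langle x|\otimes\Id)\gamma(\rho)\big)
=\tr\big(\gamma^{\dag}(|x\rangle\langle x|\otimes\Id)\,\rho\big)\\
&=\tr(|x\rangle\langle x|\,\rho)=\langle x|\rho|x\rangle ,
\end{align*}
so that $\N_{qc}\circ\tr_{\sA}\circ\gamma(\rho)=\sum_{x}\langle x|\rho|x\rangle\,|x\rangle\langle x|=\N_{qc}(\rho)$, as claimed.

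I expect the only genuinely non-mechanical step to be the promotion of the scalar conditions $\langle x|M_y|x\rangle=\delta_{xy}$ to the operator identity $M_x=|x\rangle\langle x|$; it hinges on the POVM normalization $\sum_x M_x=\Id$ (so that $\Id-M_x\succcurlyeq0$) together with the lemma $A\succcurlyeq0,\ \langle v|A|v\rangle=0\Rightarrow A|v\rangle=0$. Everything else---the adjoint bookkeeping, the partial-trace/trace formula, and the linearity of $\N_{qc}$---is routine. As an alternative that bypasses the adjoint computation, one could instead invoke Proposition~\ref{structure-qwmdp}: writing $\rho=\sum_i p_i\,|\psi_i\rangle\langle\psi_i|$ and using the explicit form of $\gamma(|\psi_i\rangle\langle\psi_i|)$, the diagonal of $\tr_{\sA}(\gamma(|\psi_i\rangle\langle\psi_i|))$ equals $\sum_x |\langle x|\psi_i\rangle|^2\big(\sum_a\langle\phi_{x,a},\phi_{x,a}\rangle\big)\,|x\rangle\langle x|$, which reduces to $\sum_x|\langle x|\psi_i\rangle|^2\,|x\rangle\langle x|$ by the normalization $\sum_a\langle\phi_{x,a},\phi_{x,a}\rangle=1$; summing over $i$ and applying $\N_{qc}$ gives the same conclusion.
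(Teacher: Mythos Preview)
Your argument is correct, but it proceeds along a genuinely different line from the paper's proof. The paper works in the Schr\"odinger picture and simply plugs in the explicit structural description of $\gamma\in\cC_w$ from Proposition~\ref{structure-qwmdp}: using $\gamma(\rho)=\sum_{x,y}\tr(\rho|y\rangle\langle x|)\,|x\rangle\langle y|\otimes\big(\sum_{a,b}\langle\phi_{y,b},\phi_{x,a}\rangle\,|a\rangle\langle b|\big)$, it applies $\tr_{\sA}$ (which kills the $a\ne b$ terms), then $\N_{qc}$ (which kills the $x\ne y$ terms), and finishes with the normalization $\sum_a\langle\phi_{x,a},\phi_{x,a}\rangle=1$. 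This is exactly the alternative you sketch at the end of your proposal.

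Your main route, by contrast, works in the Heisenberg picture and avoids the structural proposition altogether: from the defining constraint $\tr_{\sA}\circ\gamma(|x\rangle\langle x|)=|x\rangle\langle x|$ and the fact that $\{M_x\coloneqq\gamma^{\dag}(|x\rangle\langle x|\otimes\Id)\}$ is a POVM, you promote the scalar relations $\langle x|M_y|x\rangle=\delta_{xy}$ to the operator identity $M_x=|x\rangle\langle x|$ via the standard positivity lemma, and then the diagonal of $\tr_{\sA}(\gamma(\rho))$ coincides with that of $\rho$ for every $\rho$. This is a cleaner, more self-contained argument that uses only the definition of $\cC_w$ and elementary operator inequalities; the paper's version is shorter only because the heavy lifting has already been done in Proposition~\ref{structure-qwmdp}.
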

\begin{proof}
By Proposition~\ref{structure-qwmdp}, we have 
\begin{align*}
\gamma(\rho) = \sum_{x,y \in \sX} \tr(\rho |y\rangle \langle x|) \, |x\rangle \langle y| \otimes \left(\sum_{a,b \in \sA} \langle \phi_{y,b},\phi_{x,a} \rangle \, |a\rangle \langle b | \right)
\end{align*}
for some collection of vectors $\{|\phi_{x,a} \rangle\}_{(x,a)\in\sX\times\sA}$ in some Hilbert space $\H_L$ with $\dim(\H_L) \leq |\sX|^2|\sA|$ such that $\sum_{a \in \sA} \langle \phi_{x,a},\phi_{x,a} \rangle=1$ for each $x \in \sX$.
Then we have 
\begin{align*}
\N_{qc} \circ \tr_{\sA} \circ \gamma(\rho) &= \N_{qc} \left( \sum_{x,y \in \sX} \tr(\rho |y\rangle \langle x|) \, |x\rangle \langle y| \otimes \left(\sum_{a,b \in \sA} \langle \phi_{y,b},\phi_{x,a} \rangle \, \tr(|a\rangle \langle b |) \right) \right) \\
&= \N_{qc} \left( \sum_{x,y \in \sX} \tr(\rho |y\rangle \langle x|) \, |x\rangle \langle y| \cdot \left(\sum_{a \in \sA} \langle \phi_{y,a},\phi_{x,a} \rangle \right) \right) \\
&=  \sum_{x,y \in \sX} \tr(\rho |y\rangle \langle x|) \, \N_{qc}(|x\rangle \langle y|) \cdot \left(\sum_{a \in \sA} \langle \phi_{y,a},\phi_{x,a} \rangle \right) \\
&=  \sum_{x \in \sX} \tr(\rho |x\rangle \langle x|) \, |x\rangle \langle x| \cdot \underbrace{\left(\sum_{a \in \sA} \langle \phi_{x,a},\phi_{x,a} \rangle \right)}_{=1} \\
&= \N_{qc}(\rho).
\end{align*}
This completes the proof. 
\end{proof}

The identity in the proposition above is crucial and will be used frequently throughout this section. Using this identity, we can proceed with the following computation:
\begin{align}
&\N_{qc} \circ \tr_{\sA}(\sigma^{\bgamma}) = (1-\beta) \, \sum_{t=0}^{\infty} \beta^t \, \N_{qc} \circ \tr_{\sA}(\sigma_t) \,\, (\text{by linearity of $\N_{qc} \circ \tr_{\sA}$})\nonumber \\
&= (1-\beta) \, \sum_{t=0}^{\infty} \beta^t \, \N_{qc} \circ \tr_{\sA}\circ \gamma_t(\rho_t) \nonumber \\
&= (1-\beta) \, \sum_{t=0}^{\infty} \beta^t \N_{qc}(\rho_t) \,\, \text{(by above identity)} \nonumber \\
&= (1-\beta) \,\N_{qc}(\rho_0) + (1-\beta) \, \beta \sum_{t=1}^{\infty} \beta^{t-1} \, \N_{qc} \circ \N(\sigma_{t-1}) \nonumber \\
&= (1-\beta) \, \N_{qc}(\rho_0) + \beta \, \N_{qc} \circ \N\left((1-\beta) \,\sum_{t=1}^{\infty} \beta^{t-1} \, \sigma_{t-1}\right)  \,\, \text{(by linearity of $\N_{qc} \circ \N$)}\nonumber \\
&= (1-\beta) \,\N_{qc}(\rho_0) + \beta \, \N_{qc} \circ \N(\sigma^{\bgamma}). \nonumber 
\end{align}
The only distinction in this calculation compared to the one we performed for q-MDPs with open-loop policies is the inclusion of the quantum-to-classical channel $\N_{qc}$. This is due to the fact that the classical-state-preserving closed-loop  policies satisfy only the relaxed reversibility condition.

Above computation suggests the definition of the Hermicity preserving super-operator $\T_w: \cL_H(\H_{\sX}\otimes\H_{\sA}) \rightarrow \cL_H(\H_{\sX})$ as follows
$$
\T_w(\sigma) \coloneqq \N_{qc} \circ \tr_{\sA}(\sigma) - \beta \, \N_{qc} \circ \N(\sigma).
$$
Therefore, $\T_w(\sigma^{\bgamma}) = (1-\beta) \,\N_{qc}(\rho_0)$ for any classical-state-preserving closed-loop  policy $\bgamma$. This observation leads to the following semi-definite programming formulation: 
\begin{tcolorbox} 
[colback=white!100]
\begin{align}
(\textbf{SDP-w}) \text{                         }&\min_{\sigma \in \cL_H(\H_{\sX}\otimes\H_{\sA})} \text{ } \langle c,\sigma \rangle
\nonumber \\*
&\text{s.t.} \, \T_w(\sigma) = (1-\beta) \,\N_{qc}(\rho_0) \,\, \text{and} \,\, \sigma \succcurlyeq 0. \nonumber 
\end{align} 
\end{tcolorbox} 
Note that if $\T_w(\sigma) = (1-\beta) \,\N_{qc}(\rho_0)$ and $\sigma \succcurlyeq 0$, then necessarily $\sigma$ is a density operator. As a result, it is unnecessary to impose a prior restriction on $\sigma$ to be confined within $\D(\H_{\sX}\otimes\H_{\sA})$ in the semi-definite programming formulation. 

Similar to the open-loop case, it is not possible to directly establish the equivalence between $(\textbf{SDP-w})$ and qw-MDP. The key reason is that Proposition~\ref{importantidentity} can be viewed as a relaxation of Proposition~\ref{structure-qwmdp}. Consequently, for any classical-state-preserving closed-loop policy $\bgamma$, there exists a feasible $\sigma^{\bgamma} \in \cL_H(\H_{\sX} \otimes \H_{\sA})$ for $(\textbf{SDP-w})$ such that  
\[
(1-\beta) \, V(\bgamma, \rho_0) = \langle c, \sigma^{\bgamma} \rangle.
\]  
However, the converse does not necessarily hold: if $\sigma \in \cL_H(\H_{\sX} \otimes \H_{\sA})$ is feasible for $(\textbf{SDP-w})$, it is not guaranteed that there exists a classical-state-preserving closed-loop policy $\bgamma^{\sigma}$ satisfying  
\[
(1-\beta) \, V(\bgamma^{\sigma}, \rho_0) = \langle c, \sigma \rangle.
\]  
This follows from the fact that Proposition~\ref{importantidentity} provides a strict relaxation of Proposition~\ref{structure-qwmdp}.

Hence, at present, $(\textbf{SDP-w})$ only acts as a lower bound for qw-MDP:
\begin{align*}
\inf_{\bgamma \in \Gamma_w} (1-\beta) \, V(\bgamma,\rho_0) \geq (\textbf{SDP-w}) \text{                         }&\min_{\sigma \in \cL_H(\H_{\sX}\otimes\H_{\sA})} \text{ } \langle c,\sigma \rangle
\nonumber \\*
&\text{s.t.} \, \T_w(\sigma) = (1-\beta) \,\N_{qc}(\rho_0) \,\, \text{and} \,\, \sigma \succcurlyeq 0.
\end{align*} 
However, later in this section, by employing the dual of $(\textbf{SDP-w})$ and introducing an additional assumption, we can establish their equality. Additionally, we can demonstrate that the optimal value function $V^*_w$ is linear and that a stationary optimal classical-state-preserving closed-loop  policy exists.

To this end, let us obtain the dual of the above SDP. Indeed, we again use Sion min-max theorem:
\begin{align*} 
(\textbf{SDP-w})
& = \min_{\sigma \in \cL_H(\H_{\sX}\otimes\H_{\sA})} \text{ } \langle c,\sigma \rangle
\,\, \text{s.t.} \, \T_w(\sigma) = (1-\beta) \,\N_{qc}(\rho_0) \,\, \text{and} \,\, \sigma \succcurlyeq 0 \nonumber \\
&=\min_{\sigma \succcurlyeq 0} \, \max_{\xi \in \cL_H(\H_{\sX})} \, \langle c,\sigma \rangle  + \langle \xi, (1-\beta) \,\N_{qc}(\rho_0) -\T_w(\sigma) \rangle \\
&\text{(by Sion min-max theorem)} \\
&= \max_{\xi \in \cL_H(\H_{\sX})} \, \min_{\sigma \succcurlyeq 0} \, \langle c,\sigma \rangle  + \langle \xi, (1-\beta) \,\N_{qc}(\rho_0) -\T_w(\sigma) \rangle \\
&= \max_{\xi \in \cL_H(\H_{\sX})} \, \min_{\sigma \succcurlyeq 0} \, \langle \xi,(1-\beta) \, \N_{qc}(\rho_0) \rangle  + \langle c-{\T}^{\dag}_w(\xi), \sigma \rangle \\
&=\max_{\xi \in \cL_H(\H_{\sX})} \, \langle \xi,(1-\beta) \, \N_{qc}(\rho_0) \rangle 
\,\, \text{s.t.} \, c-{\T}^{\dag}_w(\xi)  \succcurlyeq  0 \nonumber \\
&\eqqcolon (\textbf{SDP-w}^{\dag}), 
\end{align*} 
where ${\T}^{\dag}_w:\cL_H(\H_{\sX}) \rightarrow \cL_H(\H_{\sX}\otimes\H_{\sA})$ is the adjoint of $\T_w$ and is given by
$$
{\T}^{\dag}_w(\xi) = \N_{qc}(\xi) \otimes \Id - \beta \, \N^{\dag} \circ \N_{qc}(\xi).
$$
For the dual SDP, we have the following equivalent formulations:
\small
\begin{tcolorbox} 
[colback=white!100]
\begin{align*}
&(\textbf{SDP-w}^{\dag}) \\
&=\max_{\xi \in \cL_H(\H_{\sX})} \text{ }  \langle \xi,(1-\beta) \, \N_{qc}(\rho_0) \rangle 
\,\, \text{s.t.}  \, c + \beta \, \N^{\dag} \circ \N_{qc}(\xi) \succcurlyeq \N_{qc}(\xi) \otimes \Id \\
&=\max_{\xi \in \cL_H(\H_{\sX})} \langle \xi,(1-\beta) \, \N_{qc}(\rho_0) \rangle 
\,\, \text{s.t.}  \, \langle c + \beta \, \N^{\dag} \circ \N_{qc}(\xi), |\psi\rangle \langle \psi | \rangle \geq \langle \N_{qc}(\xi)\otimes \Id,|\psi\rangle \langle \psi | \rangle \\
&\phantom{xxxxxxxxxxxxxxxxxxxxxxxxxxxxxxxxx} \text{for all pure states} \, |\psi \rangle \in \H_{\sX}\otimes\H_{\sA} \\
&=\max_{\xi \in \cL_H(\H_{\sX})} \text{ }  \langle \xi,(1-\beta) \, \N_{qc}(\rho_0) \rangle 
\,\, \text{s.t.}  \, \langle c + \beta \, \N^{\dag} \circ \N_{qc}(\xi), \sigma \rangle \geq \langle \N_{qc}(\xi)\otimes \Id,\sigma \rangle \\
&\phantom{xxxxxxxxxxxxxxxxxxxxxxxxxxxxxxxxx} \text{for all states} \, \sigma \in \D(\H_{\sX}\otimes\H_{\sA}).
\end{align*} 
\end{tcolorbox} 
\normalsize
We now introduce our additional assumption. In the appendix, we establish criteria that are necessary and sufficient to confirm the validity of this assumption.

\smallskip

\begin{assumption}\label{as2}
There exists a feasible $\xi^* \in \cL_H(\H_{\sX})$ for $(\textbf{SDP-w}^{\dag})$ such that for any $\rho \in \D(\H_{\sX})$, we have 
$
\min_{\gamma \in \cC_w} \langle c + \beta \, \N^{\dag} \circ \N_{qc}(\xi^*), \gamma(\rho) \rangle = \langle \N_{qc}(\xi^*),\rho \rangle.
$
\end{assumption}

\smallskip

\begin{remark}
Assumption~\ref{as2} is not overly restrictive. Consider any $\rho \in \D(\H_{\sX})$. If $\xi$ is a feasible solution for $(\textbf{SDP-w}^{\dag})$, it automatically satisfies
\[
\langle c + \beta \, \N^{\dag} \circ \N_{qc}(\xi), \gamma(\rho) \rangle \geq \langle \N_{qc}(\xi)\otimes \Id,\gamma(\rho) \rangle
\]
for any $\gamma \in \cC_w$. Note that
\begin{align*}
\langle \N_{qc}(\xi)\otimes \Id,\gamma(\rho) \rangle &= \langle \N_{qc}(\xi),\tr_{\sA} \circ \gamma(\rho) \rangle \\
&= \langle \xi,\N_{qc} \circ \tr_{\sA} \circ \gamma(\rho) \rangle \quad \text{(since $\N_{qc}^{\dag} = \N_{qc}$)} \\
&= \langle \xi,\N_{qc}(\rho) \rangle \quad \text{(by Proposition~\ref{importantidentity})} \\
&= \langle \N_{qc}(\xi),\rho \rangle \quad \text{(since $\N_{qc}^{\dag} = \N_{qc}$).}
\end{align*}
Therefore, the feasibility condition for $(\textbf{SDP-w}^{\dag})$ can be rewritten as:
\[
\langle c + \beta \, \N^{\dag} \circ \N_{qc}(\xi), \gamma(\rho) \rangle \geq \langle \N_{qc}(\xi),\rho \rangle.
\]
In Assumption~\ref{as2}, we require the existence of a feasible solution $\xi^*$ for $(\textbf{SDP-w}^{\dag})$ that achieves equality in this inequality for any $\rho$, when the left side is minimized over $\gamma \in \cC_w$. 
\end{remark}

\smallskip

\begin{remark}
Under Assumption~\ref{as2}, the optimal value function is provably linear in $\rho$, and the semidefinite programming formulation (\textbf{SDP-w}) is equivalent to the qw-MDP. This equivalence ensures the existence of a stationary optimal classical-state-preserving closed-loop policy. In the absence of Assumption~\ref{as2}, the (\textbf{SDP-w}) formulation remains valuable, providing a lower bound for the qw-MDP problem.
\end{remark}

Let us define a linear function $\hat{V}_w(\rho) \coloneqq \langle \N_{qc}(\xi^*),\rho \rangle$. Then, by Assumption~\ref{as2}, for any $\rho \in \D(\H_{\sX})$, we have 
\begin{align*}
\hat{V}_w(\rho) &= \min_{\gamma \in \cC_w} \langle c + \beta \, \N^{\dag} \circ \N_{qc}(\xi^*), \gamma(\rho) \rangle \\
&= \min_{\gamma \in \cC_w} \left[\langle c,\gamma(\rho) \rangle + \beta \, \langle \N_{qc}(\xi^*), \N \circ \gamma(\rho) \rangle\right] \\
&= \min_{\gamma \in \cC_w} \left[\langle c,\gamma(\rho) \rangle + \beta \, \hat{V}_w(\N \circ \gamma(\rho)) \right] \eqqcolon \L_w \hat{V}_w(\rho).
\end{align*}
Hence $\hat{V}_w$ is a fixed point of the operator $\L_w$. Since $\L_w$ has a unique fixed point $V^*_w$, which is the optimal value function of qw-MDP, the optimal value function $V^*_w(\rho) = \langle \N_{qc}(\xi^*),\rho \rangle$ is linear in $\rho$.

Note that for any $\rho_0$, we have  
\begin{align*}
(1-\beta) \, V^*_w(\rho_0) &= \langle \N_{qc}(\xi^*),(1-\beta) \, \rho_0 \rangle = \langle \xi^*,(1-\beta) \, \N_{qc}(\rho_0) \rangle \\
&\geq (\textbf{SDP-w}) \geq (\textbf{SDP-w}^{\dag}),
\end{align*} 
where recall that the adjoint of $\N_{qc}$ is itself. Since $\xi^*$ is a feasible point for $(\textbf{SDP-w}^{\dag})$, in view of above inequality, $\xi^*$ is an optimal solution for $(\textbf{SDP-w}^{\dag})$. It is important to emphasize that $\xi^*$ serves as the optimal solution for $(\textbf{SDP-w}^{\dag})$, irrespective of the initial density operator $\rho_0$. Hence, for any $\rho_0$, we have
\begin{align*}
(1-\beta) \, V^*_w(\rho_0) = \langle \xi^*,(1-\beta) \, \N_{qc}(\rho_0) \rangle = (\textbf{SDP-w}) = (\textbf{SDP-w}^{\dag}).
\end{align*} 

With the above discussion in mind, we can now prove the existence of an optimal stationary classical-state-preserving closed-loop policy for any initial density operator $\rho_0$.

\begin{theorem}\label{closed-quantum-stationary}
The qw-MDP has a stationary optimal classical-state-preserving closed-loop  policy for any initial density operator $\rho_0$.
\end{theorem}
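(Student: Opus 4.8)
The plan is to transcribe the argument used for Theorem~\ref{quantum-stationary}, with the action simplex $\D(\H_{\sA})$ replaced by the channel set $\cC_w$ and the linear value $\langle \xi^*,\cdot\rangle$ replaced by $V^*_w(\cdot)=\langle \N_{qc}(\xi^*),\cdot\rangle$, where $\xi^*$ is the common optimal dual solution furnished by Assumption~\ref{as2}. Concretely, I would define a set-valued map $\rR_w:\cC_w\rightarrow 2^{\cC_w}$ as follows: for $\gamma\in\cC_w$, let $\{\rho_t^{\gamma}\}$ be the trajectory of the stationary policy $\bgamma=\{\gamma\}$, i.e.\ $\rho_0^{\gamma}=\rho_0$ and $\rho_{t+1}^{\gamma}=\N\circ\gamma(\rho_t^{\gamma})$, form the state occupation operator $\rho^{\gamma}\coloneqq\sum_{t\ge0}(1-\beta)\beta^t\rho_t^{\gamma}\eqqcolon\sum_{t\ge0}\lambda_t\rho_t^{\gamma}$, and set
\[
\rR_w(\gamma)\coloneqq\argmin_{\hat\gamma\in\cC_w}\left[\langle c,\hat\gamma(\rho^{\gamma})\rangle+\beta\,\langle\N_{qc}(\xi^*),\N\circ\hat\gamma(\rho^{\gamma})\rangle\right].
\]
A fixed point of $\rR_w$ will turn out to be an optimal stationary policy.

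First I would record the easy structural facts. The set $\cC_w$ is a nonempty, convex, compact subset of the finite-dimensional space of super-operators: convexity because the relaxed reversibility constraint $\tr_{\sA}(\gamma(|x\rangle\langle x|))=|x\rangle\langle x|$ for all $x\in\sX$ is linear in $\gamma$ and CPTP maps form a convex set; compactness because it is a closed subset of the compact set of channels from $\H_{\sX}$ to $\H_{\sX}\otimes\H_{\sA}$ (cf.\ \cite[Proposition 2.28]{Wat18}). Since $\hat\gamma\mapsto\langle c,\hat\gamma(\rho^{\gamma})\rangle+\beta\langle\N_{qc}(\xi^*),\N\circ\hat\gamma(\rho^{\gamma})\rangle$ is affine and continuous in $\hat\gamma$, the minimum is attained and $\rR_w(\gamma)$ is nonempty, closed, and convex for every $\gamma$. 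Next, exactly as in the proof of Theorem~\ref{quantum-stationary}, using linearity of $V^*_w$, the dynamic programming identity (Theorem~\ref{dynamic-qwmdp}), and $\lambda_t>0$, I would show that any $\hat\gamma^*\in\rR_w(\gamma)$ in fact minimizes the per-stage objective at every $\rho_t^{\gamma}$ simultaneously, i.e.\ $\hat\gamma^*\in\argmin_{\hat\gamma\in\cC_w}\bigl[\langle c,\hat\gamma(\rho_t^{\gamma})\rangle+\beta\langle\N_{qc}(\xi^*),\N\circ\hat\gamma(\rho_t^{\gamma})\rangle\bigr]$ for all $t\ge0$; this follows by expanding $\langle\N_{qc}(\xi^*),\rho^{\gamma}\rangle=\sum_t\lambda_t V^*_w(\rho_t^{\gamma})=\sum_t\lambda_t\min_{\hat\gamma}[\cdots]$ and comparing with the value obtained by substituting $\hat\gamma^*$ term by term.

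The substantive step is to verify that $\rR_w$ has a closed graph, so that Kakutani's fixed point theorem \cite[Corollary 17.55]{AlBo06} applies. Given $\gamma_n\to\gamma$ in $\cC_w$ with $\hat\gamma_n\in\rR_w(\gamma_n)$ and $\hat\gamma_n\to\hat\gamma$, I would first prove by induction on $t$ that $\rho_t^{\gamma_n}\to\rho_t^{\gamma}$: in finite dimensions convergence of channels implies pointwise convergence, the operator norms of the $\gamma_n$ are uniformly bounded, and $\N$ is continuous, so the Kraus-representation estimate used in Proposition~\ref{stationary-qmdp} applies verbatim to $\rho_{t+1}^{\gamma_n}=\N\circ\gamma_n(\rho_t^{\gamma_n})$. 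Summing the geometric series then gives $\rho^{\gamma_n}\to\rho^{\gamma}$. Finally, setting $g(\rho)\coloneqq\min_{\hat\gamma\in\cC_w}[\langle c,\hat\gamma(\rho)\rangle+\beta\langle\N_{qc}(\xi^*),\N\circ\hat\gamma(\rho)\rangle]$, which is continuous by compactness of $\cC_w$ and joint continuity of the integrand \cite[Proposition D.6]{HeLa96}, I would pass to the limit in the identity $g(\rho^{\gamma_n})=\langle c,\hat\gamma_n(\rho^{\gamma_n})\rangle+\beta\langle\N_{qc}(\xi^*),\N\circ\hat\gamma_n(\rho^{\gamma_n})\rangle$ --- the right-hand side converging by joint continuity in $(\hat\gamma,\rho)$ --- to conclude $\hat\gamma\in\rR_w(\gamma)$. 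Kakutani then yields a fixed point $\gamma^*\in\rR_w(\gamma^*)$.

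To finish, observe that the trajectory generated by $\gamma^*$ satisfies $\rho_{t+1}=\N\circ\gamma^*(\rho_t)$ with $\gamma^*\in\argmin_{\gamma\in\cC_w}\bigl[\langle c,\gamma(\rho_t)\rangle+\beta\,V^*_w(\N\circ\gamma(\rho_t))\bigr]$ for every $t\ge0$, by the termwise-minimality established above together with $V^*_w=\langle\N_{qc}(\xi^*),\cdot\rangle$. Hence the stationary classical-state-preserving closed-loop policy $\bgamma^*=\{\gamma^*\}$ meets the optimality condition (Opt-w) and is therefore optimal, which proves the theorem. I expect the closed-graph verification --- in particular the inductive continuity $\rho_t^{\gamma_n}\to\rho_t^{\gamma}$ and the limit interchange inside the $\argmin$ --- to be the main technical obstacle, although it is essentially a transcription of the open-loop argument with $\cC_w$ in place of $\D(\H_{\sA})$.
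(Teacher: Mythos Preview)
Your proposal is correct and follows essentially the same approach as the paper: define the set-valued map $\rR_w$ on $\cC_w$ via the occupation operator, show each image is nonempty, closed, convex and that any minimizer is termwise optimal, verify the closed graph via the inductive continuity $\rho_t^{\gamma_n}\to\rho_t^{\gamma}$ and continuity of the value function, apply Kakutani, and conclude by (Opt-w). The only cosmetic difference is that the paper packages the closed-graph and Kakutani step as a separate Proposition~\ref{stationary-qwmdp}, but the content and technical route are identical.
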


\begin{proof}
The proof closely follows the proof of Theorem~\ref{quantum-stationary}. However, there are some differences in notation, so we present the complete proof here.

Define the following set-valued map $\rR_w:\cC_w \rightarrow 2^{\cC_w}$ as follows. If $\gamma \in \cC_w$ is the input, then $\{\rho_t^{\bgamma}\}$ is the state sequence under the stationary classical-state-preserving closed-loop  policy $\bgamma = \{\gamma\}$; that is
$
\rho_{t+1}^{\bgamma} = \N \circ \gamma(\rho_t^{\bgamma}), \,\, \forall t\geq0. 
$
Define the state occupation density operator 
$\rho^{\bgamma} \coloneqq \sum_{t=0}^{\infty} (1-\beta) \, \beta^t \, \rho_t^{\bgamma} \eqqcolon \sum_{t=0}^{\infty} \lambda_t \, \rho_t^{\bgamma}.$ 
Then, the set $\rR_w(\gamma)$ is defined as 
$$
\rR_w(\gamma) \coloneqq \argmin_{\tilde{\gamma} \in \cC_w} \left[\langle c,\tilde{\gamma}(\rho^{\bgamma}) \rangle + \beta \, \langle \N_{qc}(\xi^*), \N \circ \tilde{\gamma}(\rho^{\bgamma}) \rangle\right]
$$
First of all, it is straightforward to prove that $\rR_w(\gamma)$ is non-empty, closed, and convex for any $\gamma \in \cC_w$. Moreover, if $\hat{\gamma} \in \rR_w(\gamma)$, then we have 
\small
\begin{align*}
&\langle \N_{qc}(\xi^*),\rho^{\bgamma} \rangle = \left[\langle c,\hat{\gamma}(\rho^{\bgamma}) \rangle + \beta \, \langle \N_{qc}(\xi^*), \N \circ \hat{\gamma}(\rho^{\bgamma}) \rangle\right] \\
&\sum_{t=0}^{\infty} \lambda_t \, \langle \N_{qc}(\xi^*),\rho_t^{\bgamma} \rangle = \sum_{t=0}^{\infty} \lambda_t \, \left[\langle c,\hat{\gamma}(\rho^{\bgamma}_t) \rangle + \beta \, \langle \N_{qc}(\xi^*), \N \circ \hat{\gamma}(\rho^{\bgamma}_t) \rangle\right]   \\
&\sum_{t=0}^{\infty} \lambda_t \, \min_{\tilde{\gamma} \in \cC_w} \left[\langle c,\tilde{\gamma}(\rho_t^{\bgamma}) \rangle + \beta \, \langle \N_{qc}(\xi^*), \N \circ \tilde{\gamma}(\rho_t^{\bgamma}) \rangle\right]  = \sum_{t=0}^{\infty} \lambda_t \, \left[\langle c,\hat{\gamma}(\rho^{\bgamma}_t) \rangle + \beta \, \langle \N_{qc}(\xi^*), \N \circ \hat{\gamma}(\rho^{\bgamma}_t) \rangle\right] \\
&\geq  \sum_{t=0}^{\infty} \lambda_t \, \min_{\tilde{\gamma} \in \cC_w} \left[\langle c,\tilde{\gamma}(\rho_t^{\bgamma}) \rangle + \beta \, \langle \N_{qc}(\xi^*), \N \circ \tilde{\gamma}(\rho_t^{\bgamma}) \rangle\right] \,\, \text{(since $\hat{\gamma}$ is arbitrary)}. 
\end{align*}
\normalsize
Hence, the last inequality is indeed equality. Since $\lambda_t >0$ for all $t$, we have
$$
\hat{\gamma} \in \argmin_{\tilde{\gamma} \in \cC_w} \left[\langle c,\tilde{\gamma}(\rho^{\bgamma}_t) \rangle + \beta \, \langle \N_{qc}(\xi^*), \N \circ \tilde{\gamma}(\rho^{\bgamma}_t) \rangle\right]
$$
for all $t$. 
Hence, 
$$
\rR_w(\gamma) \subset \argmin_{\tilde{\gamma} \in \cC_w} \left[\langle c,\tilde{\gamma}(\rho^{\bgamma}_t) \rangle + \beta \, \langle \N_{qc}(\xi^*), \N \circ \tilde{\gamma}(\rho^{\bgamma}_t) \rangle\right]
$$
for all $t$. 

The following result is the very similar to Proposition~\ref{stationary-qmdp} and will be needed to complete the proof. 

\begin{proposition}\label{stationary-qwmdp}
There exists a fixed point $\gamma^*$ of $\rR_w$.
\end{proposition}

\begin{proof}
The proof of this result is almost the same with the proof of Proposition~\ref{stationary-qmdp}. Firstly, it is known that $\rR_w(\gamma)$ is non-empty, closed, and convex for any $\gamma \in \cC_w$. Hence it is sufficient to prove that $\rR_w$ has a closed graph. Suppose that $(\gamma_n,\xi_n) \rightarrow (\gamma,\xi)$ as $n \rightarrow \infty$, where $\gamma_n \in \cC_w$ and $\xi_n \in \rR_w(\gamma_n)$; that is
$
\|\gamma_n-\gamma\|_{HS} + \|\xi_n-\xi\|_{HS} \rightarrow 0.  
$ 
Here $\|\cdot\|_{HS}$ is the operator norm on super-operators induced by Hilbert-Schmidt norm on Hermitian operators. We want to prove that $\xi \in \rR_w(\gamma)$, which implies the closedness of the graph of $\rR_w$ and completes the proof via Kakutani's fixed point theorem. 

Let us first consider state occupation density operators under stationary classical-state-preserving closed-loop  policies induced by $\gamma_n$ and $\gamma$
\begin{align*}
\rho^{\gamma_n} = (1-\beta) \, \sum_{t=0}^{\infty} \beta^t \, \rho_t^{\gamma_n}, \,\,
\rho^{\gamma} = (1-\beta) \, \sum_{t=0}^{\infty} \beta^t \, \rho_t^{\gamma}.
\end{align*}
We first prove that $\rho_t^{\gamma_n} \rightarrow \rho_t^{\gamma}$ for all $t$, which implies that $\rho^{\gamma_n} \rightarrow \rho^{\gamma}$. Indeed, the claim is true for $t=0$ as $\rho_0^{\gamma_n} = \rho_0^{\gamma} = \rho_0$. Suppose it is true for some $t\geq0$ and consider $t+1$. We have 
$
\rho_{t+1}^{\gamma_n} = \N \circ \gamma_n(\rho_{t}^{\gamma_n}) \,\, \text{and} \,\, \rho_{t+1}^{\pi} = \N \circ \gamma(\rho_{t}^{\gamma}).
$
But if $\N$ has the following Kraus representation $\N(\sigma) = \sum_{l \in L} K_l \sigma K_l^{\dag}$, then we have
\begin{align*}
\|\N \circ \gamma_n(\rho_{t}^{\gamma_n}) -\N \circ \gamma(\rho_{t}^{\gamma}) \|_{HS} &=  \|\sum_{l \in L} K_l (\gamma_n(\rho_{t}^{\gamma_n}) - \gamma(\rho_{t}^{\gamma})) K_l^{\dag} \|_{HS} \\
&\leq  \sum_{l \in L} \| K_l (\gamma_n(\rho_{t}^{\gamma_n}) - \gamma(\rho_{t}^{\gamma})) K_l^{\dag}  \|_{HS} \\
&\leq  \sum_{l \in L} \| K_l \|_{HS} \|\gamma_n(\rho_{t}^{\gamma_n}) - \gamma(\rho_{t}^{\gamma})\|_{HS} \| K_l^{\dag} \|_{HS} \\
%&\leq  \sum_{l \in L} \| K_l \|_{HS} \|\rho_{t}^{\gamma_n} -\rho_{t}^{\gamma} \|_{HS} \|\gamma_n - \gamma\|_{HS} \| K_l^{\dag} \|_{HS} \\
&\rightarrow 0 \,\, \text{as} \,\, n \rightarrow \infty.
\end{align*}
Hence, by induction, $\rho_t^{\gamma_n} \rightarrow \rho_t^{\gamma}$ for all $t$, and so, $\rho^{\gamma_n} \rightarrow \rho^{\gamma}$. 

Now define the function $f:\D(\H_{\sX}) \rightarrow \R$ as follows
$$
f(\rho) \coloneqq \min_{\hat{\gamma} \in \cC_w} \left[ \langle c,\hat{\gamma}(\rho) \rangle + \beta \, \langle \N_{qc}(\xi^*),\N \circ \hat{\gamma}(\rho) \right]
$$
One can prove that the function $f$ is continuous. Hence, $f(\rho^{\gamma_n}) \rightarrow f(\rho^{\gamma})$ as  $\rho^{\gamma_n} \rightarrow \rho^{\gamma}$. But one can also prove using Kraus representation of the channel $\N$ that
$$
\hspace{-8pt} f(\rho^{\gamma_n}) = \left[ \langle c,\xi_n(\rho^{\gamma_n}) \rangle + \beta \, \langle \N_{qc}(\xi^*),\N \circ \xi_n(\rho^{\pi_n}) \right] \rightarrow \left[ \langle c,\xi(\rho^{\gamma}) \rangle + \beta \, \langle \N_{qc}(\xi^*),\N \circ \xi(\rho^{\pi}) \right]
$$
Hence $f(\rho^{\gamma}) = \left[ \langle c,\xi(\rho^{\gamma}) \rangle + \beta \, \langle \N_{qc}(\xi^*),\N \circ \xi(\rho^{\gamma}) \right]$; that is, $\xi \in \rR(\gamma)$, which completes the proof.
\end{proof}

Note that if $\gamma^*$ is the fixed point of $\rR_w(\gamma^*)$, then we have the following results 
\begin{align*}
&\bullet \, \rho_{t+1} = \N \circ \gamma^*(\rho_t), \,\, \forall t\geq0 \\
&\bullet \, \gamma^* \in \argmin_{\hat{\gamma} \in \cC_w} \left[\langle c,\hat{\gamma}(\rho_t) \rangle + \beta \, \langle \N_{qc}(\xi^*), \N \circ \hat{\gamma}(\rho_t) \rangle\right] \,\, \forall t\geq0. 
\end{align*}
Therefore, based on condition (Opt-w), the stationary classical-state-preserving closed-loop  policy $\bgamma^* = \{\gamma^*\}$ is optimal. This completes the proof.
\end{proof}

\subsection{Computation of Optimal Cost Function and Policy}\label{closed-computation}

In this section, our goal is to present a method for calculating the optimal cost function $V^*_w$ and the stationary optimal classical-state-preserving closed-loop policy, whose existence is established in Theorem~\ref{closed-quantum-stationary}. We start with computation of the optimal cost function $V^*_w$. Recall that $V^*_w(\rho) = \langle \N_{qc}(\xi^*),\rho \rangle$, where $\xi^*$ is the operator in Assumption~\ref{as2}. Hence, to compute $V^*_w(\rho)$, it is sufficient to compute $\N_{qc}(\xi^*)$. We can accomplish this with the following method.

\begin{algorithm}[H] 
\caption{Computing the optimal cost function $V^*_w(\rho)$}
\begin{algorithmic}\label{closed-opt-cost}
\FOR{$|x\rangle \langle x| \in \S$}
\STATE{
Solve $(\textbf{SDP-w}^{\dag})$ for $\rho_0 = |x\rangle \langle x|$ and let $\xi_{|x\rangle \langle x|}$ be an optimal solution (not necessarily $\xi^*$).
}
\ENDFOR
\RETURN{$\sum_{x \in \sX} \tr(\xi_{|x\rangle \langle x|}|x\rangle \langle x|) \, |x\rangle \langle x|$
 and $V^*_w(\cdot) = \left\langle \sum_{x \in \sX} \tr(\xi_{|x\rangle \langle x|}|x\rangle \langle x|) \, |x\rangle \langle x|,\cdot \right \rangle $}
\end{algorithmic}
\end{algorithm}

\begin{proposition}\label{closed-opt-cost-prop}
The method above returns optimal cost function $V^*_w(\cdot)$.
\end{proposition}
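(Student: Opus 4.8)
The plan is to reduce everything to the two facts already established before the proposition: namely that $V^*_w(\rho) = \langle \N_{qc}(\xi^*),\rho \rangle$ for all $\rho \in \D(\H_{\sX})$, where $\xi^*$ is the operator from Assumption~\ref{as2}, and that this $\xi^*$ is an \emph{optimal} solution of $(\textbf{SDP-w}^{\dag})$ for \emph{every} choice of initial density operator $\rho_0$. Given these, it suffices to show that the operator returned by the method, $\eta \coloneqq \sum_{x \in \sX} \tr(\xi_{|x\rangle\langle x|}|x\rangle\langle x|)\,|x\rangle\langle x|$, coincides with $\N_{qc}(\xi^*)$; the claimed identity $V^*_w(\cdot) = \langle \eta, \cdot \rangle$ then follows immediately. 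Along the way I would also note that the \texttt{FOR} loop is well-defined: since $\xi^*$ is feasible and optimal for $(\textbf{SDP-w}^{\dag})$ at $\rho_0 = |x\rangle\langle x|$, an optimal solution $\xi_{|x\rangle\langle x|}$ exists for each $x$.

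\smallskip

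The key computation is the following. First observe that each classical pure state is a fixed point of $\N_{qc}$: for $x \in \sX$,
\[
\N_{qc}(|x\rangle\langle x|) = \sum_{x' \in \sX} \tr\big(|x\rangle\langle x|\,|x'\rangle\langle x'|\big)\,|x'\rangle\langle x'| = \sum_{x' \in \sX} |\langle x | x' \rangle|^2 \, |x'\rangle\langle x'| = |x\rangle\langle x|,
\]
since $\{|x\rangle\}$ is orthonormal. Consequently the optimal value of $(\textbf{SDP-w}^{\dag})$ with $\rho_0 = |x\rangle\langle x|$ equals, evaluated at $\xi^*$,
\[
\langle \xi^*, (1-\beta)\,\N_{qc}(|x\rangle\langle x|) \rangle = (1-\beta)\,\langle \xi^*, |x\rangle\langle x| \rangle = (1-\beta)\,\langle x | \xi^* | x \rangle.
\]
Because $\xi_{|x\rangle\langle x|}$ is also optimal for this instance, its objective value matches: $\langle \xi_{|x\rangle\langle x|}, (1-\beta)\,\N_{qc}(|x\rangle\langle x|)\rangle = (1-\beta)\,\langle x|\xi^*|x\rangle$, and using $\N_{qc}(|x\rangle\langle x|) = |x\rangle\langle x|$ once more this gives $\tr(\xi_{|x\rangle\langle x|}|x\rangle\langle x|) = \langle x|\xi^*|x\rangle$ for every $x \in \sX$.

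\smallskip

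It then remains only to assemble the pieces:
\[
\eta = \sum_{x \in \sX} \tr(\xi_{|x\rangle\langle x|}|x\rangle\langle x|)\,|x\rangle\langle x| = \sum_{x \in \sX} \langle x | \xi^* | x \rangle\,|x\rangle\langle x| = \N_{qc}(\xi^*),
\]
where the last equality is just the definition of $\N_{qc}$ applied to $\xi^*$. Hence $V^*_w(\rho) = \langle \N_{qc}(\xi^*), \rho \rangle = \langle \eta, \rho \rangle$ for all $\rho \in \D(\H_{\sX})$, which is exactly what the method outputs. The only conceptual point — and the nearest thing to an obstacle, though it is mild — is the observation that $\N_{qc}(\xi^*)$ depends on $\xi^*$ only through the diagonal entries $\langle x|\xi^*|x\rangle$ in the fixed classical basis, and that these entries are precisely pinned down by the scalar optimal \emph{values} of $(\textbf{SDP-w}^{\dag})$ at the classical pure states $|x\rangle\langle x|$; this is what makes the procedure insensitive to the choice of optimizer $\xi_{|x\rangle\langle x|}$ (which need not equal $\xi^*$). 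Everything else is the routine facts recalled above.
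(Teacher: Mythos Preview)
Your proof is correct and follows essentially the same approach as the paper's own proof: both use that $\N_{qc}(|x\rangle\langle x|)=|x\rangle\langle x|$ together with the fact that $\xi^*$ is optimal for every instance of $(\textbf{SDP-w}^{\dag})$ to match the diagonal entries $\tr(\xi_{|x\rangle\langle x|}|x\rangle\langle x|)=\tr(\xi^*|x\rangle\langle x|)$ and thereby recover $\N_{qc}(\xi^*)$. Your additional remarks on well-definedness of the loop and insensitivity to the choice of optimizer are sound and make the argument slightly more explicit, but the core reasoning is the same.
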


\begin{proof}
For any $|x\rangle \langle x| \in \S$, let $\xi_{|x\rangle \langle x|}$ be an optimal solution for $(\textbf{SDP-w}^{\dag})$ for $\rho_0 = |x\rangle \langle x|$. Since $\xi^*$ in Assumption~\ref{as2} is also an optimal solution of  $(\textbf{SDP-w}^{\dag})$ regardless of the initial state, we have 
\begin{align*}
\langle \xi^*,(1-\beta) \N_{qc}(|x\rangle \langle x|) \rangle = \langle \xi_{|x\rangle \langle x|},(1-\beta) \N_{qc}(|x\rangle \langle x|) \rangle.
\end{align*}
Since $\N_{qc}(|x\rangle \langle x|) = |x\rangle \langle x|$ for any $|x\rangle \langle x| \in \S$, we have 
$$
\tr(\xi^* |x\rangle \langle x|) = \langle \xi^*,|x\rangle \langle x| \rangle = \langle \xi_{|x\rangle \langle x|},|x\rangle \langle x| \rangle = \tr(\xi_{|x\rangle \langle x|} |x\rangle \langle x| ). 
$$
This implies that 
$$
\N_{qc}(\xi^*) = \sum_{x \in \sX} \tr(\xi_{|x\rangle \langle x|}|x\rangle \langle x|) \, |x\rangle \langle x|,
$$
and so 
$$V^*_w(\cdot) = \left\langle \sum_{x \in \sX} \tr(\xi_{|x\rangle \langle x|}|x\rangle \langle x|) \, |x\rangle \langle x|,\cdot \right \rangle. $$
\end{proof}

In the procedure above, we need to solve $(\textbf{SDP-w}^{\dag})$ $\dim(\H_{\sX})$-times. Since $(\textbf{SDP-w}^{\dag})$ is a semi-definite programming problem, it can be solved efficiently by various well-known algorithms (see \cite{WoSaVa00}). Hence, this approach is computationally feasible if the dimension of $\H_{\sX}$ is not too large 

Now it is time to present a method for computing stationary optimal classical-state-preserving closed-loop policy. Note that the state-action occupation density operator of the optimal stationary classical-state-preserving closed-loop policy $\bgamma^* = \{\gamma^*\}$ can be written in the following form
$$
\sigma^{\bgamma^*} = (1-\beta) \, \sum_{t=0}^{\infty} \beta^t \, \gamma^*(\rho_t^{\bgamma^*}) = \gamma^*(\rho^{\bgamma^*}).
$$
Hence, $\sigma^{\bgamma^*}$ is an optimal solution of $(\textbf{SDP-w})$, which also satisfies the following condition
\begin{align*}
\sigma^{\bgamma^*} &= \gamma^*(\rho^{\bgamma^*}) \\
&= \gamma^*\left((1-\beta) \, \rho_0 + (1-\beta) \, \sum_{t=1}^{\infty} \beta^t \, \N(\sigma_{t-1}^{\bgamma^*}) \right) \\
&= (1-\beta) \, \gamma^*(\rho_0) + \beta \, \gamma^* \circ \N(\sigma^{\bgamma^*}) \\
&= \gamma^*\left((1-\beta) \, \rho_0 + \beta \, \N(\sigma^{\bgamma^*}) \right)
\end{align*}
The last identity suggests the following characterization of the stationary optimal classical-state-preserving closed-loop policy. 

\begin{proposition}\label{closed-optimal-policy-prop}
Let $\sigma^*$ be an optimal solution for $(\textbf{SDP-w})$, which also satisfies the additional condition: there exists $\gamma^* \in \cC_w$ such that 
$$
\gamma^*\left((1-\beta) \, \rho_0 + \beta \, \N(\sigma^{*}) \right) = \sigma^{*},
$$
that is; the reverse of the quantum channel $\D(\H_{\sX}\otimes\H_{\sA}) \ni \sigma \rightarrow (1-\beta) \, \rho_0 + \beta \, \N(\sigma) \in \D(\H_{\sX})$ is $\gamma^*$ at $\sigma^*$. Then, $\bgamma^* = \{\gamma^*\}$ is an optimal stationary classical-state-preserving closed-loop policy. 
\end{proposition}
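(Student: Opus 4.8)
The plan is to run the argument used for Proposition~\ref{open-optimal-policy-prop}, with the appending channel $\rho\mapsto\rho\otimes\pi^*$ replaced by the channel $\gamma^*\in\cC_w$ supplied in the hypothesis. First I would fix the stationary policy $\bgamma^*=\{\gamma^*\}$ and record its trajectory: $\rho_t^{\bgamma^*}=(\N\circ\gamma^*)^t(\rho_0)$ and $\sigma_t^{\bgamma^*}=\gamma^*(\rho_t^{\bgamma^*})$, so that the state-action occupation density operator $\sigma^{\bgamma^*}=(1-\beta)\sum_{t=0}^{\infty}\beta^t\sigma_t^{\bgamma^*}$ (which converges in Hilbert-Schmidt norm, as already noted) equals $\gamma^*(\rho^{\bgamma^*})$ by linearity and continuity of $\gamma^*$, where $\rho^{\bgamma^*}=(1-\beta)\sum_{t=0}^{\infty}\beta^t\rho_t^{\bgamma^*}$. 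The same telescoping computation displayed just before the proposition statement, applied to $\gamma^*$, then gives $\rho^{\bgamma^*}=(1-\beta)\rho_0+\beta\N(\sigma^{\bgamma^*})$ and hence $\sigma^{\bgamma^*}=\gamma^*\bigl((1-\beta)\rho_0+\beta\N(\sigma^{\bgamma^*})\bigr)$.

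The crux is to identify $\sigma^*$ with $\sigma^{\bgamma^*}$. I would set $\rho^*:=(1-\beta)\rho_0+\beta\N(\sigma^*)$, which is a density operator since $\sigma^*$ is (as it is feasible for $(\textbf{SDP-w})$); the hypothesis reads $\gamma^*(\rho^*)=\sigma^*$, so substituting yields the fixed-point relation $\rho^*=(1-\beta)\rho_0+\beta(\N\circ\gamma^*)(\rho^*)$. Unrolling this $n$ times gives $\rho^*=(1-\beta)\sum_{t=0}^{n-1}\beta^t(\N\circ\gamma^*)^t(\rho_0)+\beta^n(\N\circ\gamma^*)^n(\rho^*)$, and since $(\N\circ\gamma^*)^n(\rho^*)$ is again a density operator, the remainder vanishes in Hilbert-Schmidt norm as $n\to\infty$; therefore $\rho^*=(1-\beta)\sum_{t=0}^{\infty}\beta^t(\N\circ\gamma^*)^t(\rho_0)=\rho^{\bgamma^*}$, whence $\sigma^*=\gamma^*(\rho^*)=\gamma^*(\rho^{\bgamma^*})=\sigma^{\bgamma^*}$. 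Equivalently, one may observe that $\sigma\mapsto\gamma^*\bigl((1-\beta)\rho_0+\beta\N(\sigma)\bigr)$ is a $\beta$-contraction in trace norm, because quantum channels are trace-norm contractive, so it has a unique fixed point which both $\sigma^*$ and $\sigma^{\bgamma^*}$ must equal.

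Finally I would chain the value identities: since $\sigma^{\bgamma^*}=\sigma^*$ is the state-action occupation density operator of $\bgamma^*$, the normalized-cost identity gives $(1-\beta)V(\bgamma^*,\rho_0)=\langle c,\sigma^{\bgamma^*}\rangle=\langle c,\sigma^*\rangle=(\textbf{SDP-w})$. Combining this with the inequality $(\textbf{SDP-w})\le(1-\beta)\inf_{\bgamma\in\Gamma_w}V(\bgamma,\rho_0)=(1-\beta)V^*_w(\rho_0)$ established earlier in the section (or, under Assumption~\ref{as2}, with the equality $(1-\beta)V^*_w(\rho_0)=(\textbf{SDP-w})$), together with the trivial bound $V^*_w(\rho_0)\le V(\bgamma^*,\rho_0)$, forces all four quantities to coincide; hence $\bgamma^*$ attains $V^*_w(\rho_0)$ and is an optimal stationary classical-state-preserving closed-loop policy. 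The main obstacle is the middle step --- confirming that the hypothesized $\sigma^*$ is genuinely the occupation operator of $\bgamma^*$ and not a spurious feasible point --- but the contraction (or boundedness-of-density-operators) argument dispatches it cleanly, and the remaining steps are bookkeeping that runs parallel to the open-loop case.
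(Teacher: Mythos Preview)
Your proposal is correct and follows essentially the same route as the paper: both unroll the fixed-point relation $\sigma^*=\gamma^*\bigl((1-\beta)\rho_0+\beta\N(\sigma^*)\bigr)$ to identify $\sigma^*$ with the occupation operator $\sigma^{\bgamma^*}$, and then conclude via the relation between $(\textbf{SDP-w})$ and $(1-\beta)V^*_w(\rho_0)$. Your version is simply more explicit about why the unrolling converges (boundedness of density operators, or the $\beta$-contraction alternative) and observes that the bare inequality $(\textbf{SDP-w})\le(1-\beta)V^*_w(\rho_0)$ already suffices, whereas the paper writes ``$\to$'' without comment and invokes the equality established under Assumption~\ref{as2}.
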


\begin{proof}
The existence of optimal stationary classical-state-preserving closed-loop  policy implies that there is at least one such optimal solution available for $(\textbf{SDP-w})$ satisfying this additional condition. 

It is worth noting that there can be multiple optimal solutions in this form for $(\textbf{SDP-w})$, and they may not necessarily corresponds to a state-action occupation density operator of an optimal stationary classical-state-preserving closed-loop  policy. However, we will demonstrate that this is not the case. Indeed, let $\bgamma^*$ be the stationary classical-state-preserving closed-loop policy in the statement of the proposition. Then, we have 
\begin{align*}
\sigma^{*} &= \gamma^*\left((1-\beta) \, \rho_0 + \beta \, \N(\sigma^{*}) \right)  \\
&= \gamma^*\left((1-\beta) \, \rho_0 + \beta \, \N(\gamma^*\left((1-\beta) \, \rho_0 + \beta \, \N(\sigma^{*}) \right)) \right)\\
&\phantom{x}\vdots \\
&\rightarrow (1-\beta) \, \sum_{t=0}^{\infty} \beta^t \, \sigma_t^{\bgamma^*} 
\end{align*}
Hence, state-action occupation density operator under the stationary policy $\bgamma^*$ is $\sigma^{*}$. Since $\sigma^*$ is optimal for  $(\textbf{SDP-w})$, we have  
\begin{align*}
(1-\beta) \, V^*_w(\rho_0) = (\textbf{SDP-w}) = \langle c,\sigma^* \rangle = (1-\beta) \, V(\bgamma^*,\rho_0).
\end{align*} 
Hence, $\bgamma^*$ is the optimal stationary classical-state-preserving closed-loop policy. 
\end{proof}

Typically, for any initial state $\rho_0$, the semi-definite programming problem $(\textbf{SDP-w})$ can be efficiently solved using various well-known algorithms (see \cite{WoSaVa00}). However, in Proposition~\ref{closed-optimal-policy-prop}, we need to find an optimal solution to $(\textbf{SDP-w})$ that meets an additional condition. As we will discuss below, this task can be formulated as a bi-linear optimization problem, which can be computationally demanding to solve.

To derive the bi-linear optimization problem, we begin by introducing the Choi matrix representation of quantum channels. Let $\theta$ be a quantum channel from $\H_1$ to $\H_2$. Let $\{|e_i\rangle\}_{i=1}^n$ be an orthonormal basis for $\H_1$. Then, Choi matrix corresponding to the super-operator $\theta$ from $\H_1$ to $\H_2$ is defined as 
$$
\mathfrak{C_{\theta}} \coloneqq \sum_{i,j=1}^n |e_i\rangle \langle e_j| \otimes \theta(|e_i\rangle \langle e_j|).
$$ 
Therefore, $\mathfrak{C_{\theta}} \in \cL(\H_1 \otimes \H_2)$. It is known that (see \cite[Theorem 4.3]{KhWi20}) the super-operator $\theta$ is quantum channel if and only if the Choi matrix  $\mathfrak{C_{\theta}}$ is positive semi-definite (i.e. $\mathfrak{C_{\theta}} \succcurlyeq 0$) and $\tr_{\H_2}(\mathfrak{C_{\theta}}) = \Id_{\H_1}$. The action of the quantum channel $\theta$ can be described using its Choi representation $\mathfrak{C_{\theta}}$. Specifically, for any $\rho \in \H_1$, we have (see \cite[Proposition 4.2]{KhWi20}) 
$$
\theta(\rho) = \langle \psi_{\ent}| \left(\rho \otimes \mathfrak{C_{\theta}} \right) |\psi_{\ent} \rangle \eqqcolon \Lambda(\mathfrak{C_{\theta}},\rho),
$$
where $|\psi_{\ent} \rangle \coloneqq \sum_{i=1}^n |e_i\rangle \otimes |e_i\rangle$ is the unnormalized maximally entangled state. Here, $\langle \psi_{\ent}| \, \cdot \, |\psi_{\ent}\rangle$ acts as follows: $\rho_{1,1} \otimes \rho_2 \mapsto \langle \psi_{\ent}| \rho_{1,1 } |\psi_{\ent}\rangle \, \rho_2$.
Observe that $\Lambda(\mathfrak{C}_{\theta}, \rho)$ is clearly bi-linear in its arguments. Using the Choi representation of quantum channels, we can now formulate the bi-linear optimization problem that yields the optimal stationary classical-state-preserving closed-loop policy:
\begin{tcolorbox} 
[colback=white!100]
\begin{align}
(\textbf{BIL-w}) \text{                         }&\min_{\substack{\sigma \in \cL_H(\H_{\sX}\otimes\H_{\sA}) \\ \mathfrak{C} \in \cL_H(\H_{\sX}\otimes\H_{\sX}\otimes\H_{\sA})}} \text{ } \langle c,\sigma \rangle
\nonumber \\*
&\text{s.t.} \, \T_w(\sigma) = (1-\beta) \,\N_{qc}(\rho_0) \,\, \text{and} \,\, \sigma \succcurlyeq 0 \nonumber \\
&\phantom{s.t.} \tr_{\sX\times\sA}(\mathfrak{C}) = \Id_{\sX}, \mathfrak{C} \succcurlyeq 0 \label{eqqq1} \\
&\phantom{s.t.}  \tr_{\sA} \circ \Lambda(\mathfrak{C},|x\rangle \langle x|) = |x\rangle \langle x|, \,\, \forall \, |x\rangle \langle x| \in \S \label{eqqq2} \\ 
&\phantom{s.t.} \Lambda(\mathfrak{C},(1-\beta) \, \rho_0 + \beta \, \N(\sigma)) = \sigma \label{eqqq3}
\end{align} 
\end{tcolorbox} 
Here, constraint in (\ref{eqqq1}) ensures that $\mathfrak{C}$ is a Choi representation of some quantum channel from $\H_{\sX}$ to $\H_{\sX} \otimes \H_{\sA}$, constraint in (\ref{eqqq2}) ensures that the quantum channel corresponding to $\mathfrak{C}$ is in $\cC_w$, and finally, constraint in (\ref{eqqq3}) ensures that the condition in Proposition~\ref{closed-optimal-policy-prop} holds. In $(\textbf{BIL-w})$, only the last constraint is non-linear, and indeed, it is bi-linear in $(\sigma,\mathfrak{C})$. With these observations, we can conclude the following:

\begin{proposition}\label{closed-optimal-policy-prop_comp}
Let $(\sigma^*,\mathfrak{C}^*)$ be an optimal solution for $(\textbf{BIL-w})$. Then, the quantum channel 
$$
\gamma^*(\rho) \coloneqq  \Lambda(\mathfrak{C^*},\rho),
$$
is the optimal stationary classical-state-preserving closed-loop policy.
\end{proposition}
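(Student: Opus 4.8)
The plan is to peel off the three constraint blocks of $(\textbf{BIL-w})$, match them with the hypotheses of Proposition~\ref{closed-optimal-policy-prop}, and then invoke that proposition. Throughout I would use the Choi calculus already quoted from \cite{KhWi20}: the assignment $\mathfrak{C} \mapsto \Lambda(\mathfrak{C},\cdot)$ is a bijection between Hermitian operators $\mathfrak{C}$ with $\mathfrak{C} \succcurlyeq 0$ and $\tr_{\sX\times\sA}(\mathfrak{C}) = \Id_{\sX}$ on one side and quantum channels from $\H_{\sX}$ to $\H_{\sX}\otimes\H_{\sA}$ on the other, with $\Lambda(\mathfrak{C}_\theta,\rho) = \theta(\rho)$. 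So constraint (\ref{eqqq1}) says exactly that $\gamma^*(\cdot) \coloneqq \Lambda(\mathfrak{C}^*,\cdot)$ is a genuine quantum channel; constraint (\ref{eqqq2}) then reads $\tr_{\sA}(\gamma^*(|x\rangle\langle x|)) = |x\rangle\langle x|$ for every $|x\rangle\langle x| \in \S$, which is the defining property of $\cC_w$ (Definition~\ref{closed1}), so $\gamma^* \in \cC_w$; and constraint (\ref{eqqq3}) reads $\gamma^*\bigl((1-\beta)\,\rho_0 + \beta\,\N(\sigma^*)\bigr) = \sigma^*$, i.e. $\gamma^*$ is the reverse at $\sigma^*$ of the affine channel $\sigma \mapsto (1-\beta)\,\rho_0 + \beta\,\N(\sigma)$.

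Next I would check that $\sigma^*$ is optimal for $(\textbf{SDP-w})$. Since $(\textbf{BIL-w})$ carries the same objective and the same $\T_w$- and positivity-constraints as $(\textbf{SDP-w})$ together with the additional constraints (\ref{eqqq1})--(\ref{eqqq3}), we immediately get $(\textbf{BIL-w}) \geq (\textbf{SDP-w})$. For the reverse inequality I would produce a feasible point of $(\textbf{BIL-w})$ whose cost equals $(\textbf{SDP-w})$: by Theorem~\ref{closed-quantum-stationary} there is an optimal stationary classical-state-preserving closed-loop policy $\bar\bgamma = \{\bar\gamma\}$; let $\bar\sigma \coloneqq \sigma^{\bar\bgamma}$ be its state-action occupation density operator and $\bar{\mathfrak{C}}$ the Choi matrix of $\bar\gamma$. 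Then $(\bar\sigma,\bar{\mathfrak{C}})$ is feasible for $(\textbf{BIL-w})$: the $\T_w$-constraint and $\bar\sigma \succcurlyeq 0$ hold because $\bar\sigma$ is a state-action occupation operator; (\ref{eqqq1})--(\ref{eqqq2}) hold because $\bar\gamma \in \cC_w$; and (\ref{eqqq3}) holds by the exact telescoping identity $\bar\sigma = \bar\gamma\bigl((1-\beta)\,\rho_0 + \beta\,\N(\bar\sigma)\bigr)$ derived just before Proposition~\ref{closed-optimal-policy-prop}. Its objective value is $\langle c,\bar\sigma\rangle = (1-\beta)\,V(\bar\bgamma,\rho_0) = (1-\beta)\,V^*_w(\rho_0) = (\textbf{SDP-w})$. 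Hence $(\textbf{BIL-w}) = (\textbf{SDP-w})$, and since $\sigma^*$ is $(\textbf{SDP-w})$-feasible with $\langle c,\sigma^*\rangle = (\textbf{BIL-w}) = (\textbf{SDP-w})$, it is an optimal solution of $(\textbf{SDP-w})$.

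Finally I would apply Proposition~\ref{closed-optimal-policy-prop} with this $\sigma^*$ and $\gamma^*$: its hypotheses --- $\sigma^*$ optimal for $(\textbf{SDP-w})$, and $\gamma^* \in \cC_w$ with $\gamma^*\bigl((1-\beta)\,\rho_0 + \beta\,\N(\sigma^*)\bigr) = \sigma^*$ --- are exactly what the two previous paragraphs established, so $\bgamma^* = \{\gamma^*\}$ is an optimal stationary classical-state-preserving closed-loop policy. The only step requiring genuine care is the equality $(\textbf{BIL-w}) = (\textbf{SDP-w})$, and within it the verification that the point built from $\bar\bgamma$ satisfies constraint (\ref{eqqq3}) verbatim rather than only in a limiting sense; this is why I would lean on the closed-form identity preceding Proposition~\ref{closed-optimal-policy-prop} instead of re-deriving it. Everything else is routine bookkeeping with the Choi-matrix identities quoted from \cite{KhWi20}.
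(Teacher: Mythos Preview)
Your proposal is correct and follows the same approach as the paper, which does not give a standalone proof but treats the proposition as an immediate consequence of the observations preceding it (that constraints (\ref{eqqq1})--(\ref{eqqq3}) encode exactly the hypotheses of Proposition~\ref{closed-optimal-policy-prop}). You make explicit the one step the paper leaves to the reader, namely that $(\textbf{BIL-w}) = (\textbf{SDP-w})$ so that the $\sigma^*$-component of an optimal $(\textbf{BIL-w})$-solution is in fact optimal for $(\textbf{SDP-w})$; this is implicitly covered by the paper's derivation before Proposition~\ref{closed-optimal-policy-prop} showing that the occupation operator of the stationary optimal policy is $(\textbf{SDP-w})$-optimal and satisfies the extra condition.
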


To compute an optimal stationary classical-state-preserving closed-loop policy,  we need to solve the bi-linear optimization problem $(\textbf{BIL-w})$. 
As explained in open-loop case, bi-linear optimization problems are difficult to solve because of the coupling between variables in bi-linear terms, which prevents separate optimization and makes effective convex relaxations challenging. These problems are often NP-hard and solving them typically requires specialized algorithms. Therefore, we propose the development of effective algorithms for solving these bi-linear programs ($( \textbf{BIL})$ and $( \textbf{BIL-w})$) as a future research direction.

\section{Conclusion}

In this work, we study two classes of policies for q-MDPs: open-loop and classical-state-preserving closed-loop policies. For both policy classes, we show that, through duality between dynamic programming and semi-definite programming, the optimal value function is linear and a stationary optimal policy exists. We also develop methods for computing the optimal value functions and stationary optimal policies as bi-linear programs. These results offer practical tools for solving quantum MDP problems.

\subsection{Some Future Research Directions}

To compute an optimal stationary open-loop policy and a classical-state-preserving closed-loop policy, we need to solve the bi-linear optimization problems $(\textbf{BIL})$ and $(\textbf{BIL-w})$, respectively. As mentioned earlier, these problems are difficult to tackle due to their non-convex nature, which results from the coupling between variables in the bi-linear terms. This makes it hard to optimize them separately or apply effective convex relaxations. Since these problems are often NP-hard, their complexity grows rapidly with the size of the problem, requiring specialized algorithms and significant computational effort. Therefore, the first research goal is to develop efficient algorithms for solving these bi-linear optimization problems by leveraging their unique structure.

The second area of exploration involves the potential benefits of quantum policies in the quantum version of classical MDPs. Since classical-state-preserving closed-loop policies include classical policies as a subset, we aim to explore how these policies might improve the optimal cost. This line of research relates to nonlocal games and Bell inequalities in quantum information theory \cite{Sca19}, which explore cost structures where quantum policies outperform classical ones in multi-agent decision scenarios.

Finally, we are interested in extending q-MDPs to a mean-field framework. In this version, both the state dynamics and the cost function would be influenced by the state density operator, introducing non-linear effects. A key challenge here is figuring out how to integrate these effects into the quantum channel governing state dynamics and the one-stage cost without disrupting the underlying quantum structure. This generalization offers a promising and challenging direction for future research.

\section{Appendix}\label{appendix}

\subsection{Necessary and Sufficient Conditions for Assumption~\ref{as1}}\label{nec-as1}

In this section, we derive some necessary and sufficient conditions for Assumption~\ref{as1}. To this end, let us first define the following set
$$
\C \coloneqq \{X \in \cL_H(\H_{\sX}\otimes\H_{\sA}): c \succcurlyeq X \}.
$$
Define also the following subset of $\C$
$$
\C_o \coloneqq \left\{X \in \C: \min_{\pi \in \D(\H_{\sA})} \langle c-X,\rho\otimes\pi \rangle = 0, \,\, \forall \, \rho \in \D(\H_{\sX}) \right\}.
$$
First of all, $\C_o$ is not empty as $c \in \C_o$. Moreover, if $c = Y \otimes R$, where $Y$ is positive semi-definite and $R$ is positive definite, then $X = Y \otimes \lambda_{\min} \Id$ is in $\C$. Here $\lambda_{\min}>0$ is the minimum eigenvalue of $R$. Indeed, $c \succcurlyeq X$ is obviously true, and moreover, let $|\xi\rangle$ be the eigenvector of $R$ corresponding to $\lambda_{\min}$. Then, we have
$$
\langle c-X, \rho\otimes |\xi\rangle \langle \xi| \rangle = \langle Y \otimes (R-\lambda_{\min} \Id), \rho\otimes |\xi\rangle \langle \xi| \rangle = 0.
$$
Hence $X \in \C_o$. Therefore, depending on the structure of $c$, there exist elements in $\C_o$ different than $c$. Note that $\xi^*$ in Assumption~\ref{as1} is supposed to satisfy the following condition: 
$\xi^*\otimes\Id - \beta \, \N^{\dag}(\xi^*) \in \C_o.$
However, with the present definition of $\C_o$, verifying this condition can be somewhat challenging. Therefore, we must seek an alternative representation of $\C_o$ and possibly a subset thereof.

Suppose that $M \succcurlyeq 0$. Then for any $\rho \in \D(\H_{\sX})$ and $\pi \in \D(\H_{\sA})$, we have 
\begin{align*}
\langle M,\rho \otimes \pi \rangle &= \tr(M \, (\rho \otimes \pi)) \\
&= \tr(M \, (\rho \otimes \Id) \, (\Id \otimes \pi)) \\
&= \tr(\tr_{\sX}(M \, (\rho \otimes \Id) \, (\Id \otimes \pi))) \\
&= \tr(\tr_{\sX}(M \, (\rho \otimes \Id)) \, \pi). 
\end{align*}
The last expression is $0$ for some $\pi \in \D(\H_{\sA})$ if and only if $\ker(\tr_{\sX}(M \, (\rho \otimes \Id))) \neq \emptyset$. In view of this, define 
$$
K \coloneqq \left\{M \succcurlyeq 0: \ker(\tr_{\sX}(M \, (\rho \otimes \Id))) \neq \emptyset, \,\, \forall \rho \in \D(\H_{\sX}) \right\}.
$$
For any $M \succcurlyeq 0$, let us consider its Schmidt decomposition with respect to Hermitian operators \cite[page 131]{PaLu08}: there exist $r \leq \min\{\dim(\H_{\sX})^2,\dim(\H_{\sA})^2\}$, orhonormal Hermitian operators $\{E_{\sX}^k\}_{k=1}^r \subset \cL_{H}(\H_{\sX})$, $\{E_{\sA}^k\}_{k=1}^r \subset \cL_{H}(\H_{\sA})$ with respect to Hilbert-Schmidt inner product, and positive real numbers $\{\lambda_k\}_{k=1}^r$ such that 
$
M = \sum_{k=1}^r \lambda_k \, E_{\sX}^k \otimes E_{\sA}^k.
$
If $\bigcap_{k=1}^r \ker(E_{\sA}^k) \neq \emptyset$, then $M \in K$. Hence the following set is a subset of $K$
$$
K_o \coloneqq \left\{M \succcurlyeq 0: \bigcap_{k=1}^r \ker(E_{\sA}^k) \neq \emptyset, \,\, M = \sum_{k=1}^r \lambda_k \, E_{\sX}^k \otimes E_{\sA}^k \,\, \text{is Schmidt decomposition} \right\}.
$$
%Furthermore, when $\tr(E_{\sA}^k) \neq 0$ for all $k=1,\ldots,r$ in the Schmidt decomposition of $M$, which may not be the case all the time, the following conditions are equivalent:
%$$
%\left\{\ker(\tr_{\sX}(M \, (\rho \otimes \Id))) \neq \emptyset, \,\, \forall \rho \in \D(\H_{\sX}) \right\}  \,\, \equiv \,\, \left\{\bigcap_{k=1}^r \ker(E_{\sA}^k) \neq \emptyset \right\}
%$$
With these insights, we can now derive an alternative description of $\C_o$ and identify a subset of it as follows:
$$
\C_o = \left\{X \in \C: c-X \in K\right\} \supset \left\{X \in \C: c-X \in K_o\right\} \eqqcolon \C_u.
$$
Remember that Assumption~\ref{as1} posits the existence of $\xi^*$ such that $\xi^*\otimes\Id - \beta \, \N^{\dag}(\xi^*) \in \C_o$. In other words, the intersection of the range of the Hermicity preserving super-operator $\T^{\dag}:\xi \mapsto \xi \otimes\Id - \beta \, \N^{\dag}(\xi)$ and $\C_o$ is non-empty. Note that the adjoint of $\T^{\dag}$ is the Hermicity preserving super-operator $\T: \sigma \mapsto \tr_{\sA}(\sigma)-\beta \, \N(\sigma)$ that was used in the formulation of the $(\textbf{SDP})$. Thus, $\range(\T^{\dag}) = \ker(\T)^{\perp}$. Hence, we can alternatively state Assumption~\ref{as1} as follows.

\begin{assumption}
We have $\range(\T^{\dag}) \bigcap \C_o = \ker(\T)^{\perp} \bigcap \C_o \neq \emptyset$. 
\end{assumption}

Note also that since $\C_u \subset \C_o$, the following assumption implies Assumption~\ref{as1}.

\begin{assumption}
We have $\range(\T^{\dag}) \bigcap \C_u = \ker(\T)^{\perp} \bigcap \C_u \neq \emptyset$. 
\end{assumption}

\subsection{Necessary and Sufficient Conditions for Assumption~\ref{as2}}\label{sub4sub1sec3}

In this section we derive some necessary and sufficient conditions for Assumption~\ref{as2}. To this end, recall the following sets $\C$, $\C_o$, and $\C_u$ from Section~\ref{nec-as1}.
%$$
%\C \coloneqq \{X \in \cL_H(\H_{\sX}\otimes\H_{\sA}): c \succcurlyeq X \}
%$$
%$$
%\C_o \coloneqq \left\{X \in \C: \min_{\pi \in \D(\H_{\sA})} \langle c-X,\rho\otimes\pi \rangle = 0, \,\, \forall \, \rho \in \D(\H_{\sX}) \right\}
%$$
Let us also define the following set
$$
\C_w \coloneqq \left\{X \in \C: \min_{\gamma \in \cC_w} \langle c-X,\gamma(\rho)\rangle = 0, \,\, \forall \, \rho \in \D(\H_{\sX}) \right\}.
$$
First of all, $\C_o \subset \C_w$. Secondly, Assumption~\ref{as2} states there exists $\xi^*$ such that $\N_{qc}(\xi^*)\otimes\Id - \beta \, \N^{\dag} \circ \N_{qc}(\xi^*) \in \C_w$. In other words, the intersection of the range of the Hermicity preserving super-operator $\T^{\dag}_w:\xi \mapsto \N_{qc}(\xi) \otimes\Id - \beta \, \N^{\dag}\circ \N_{qc}(\xi)$ and $\C_w$ is non-empty. Note that the adjoint of $\T^{\dag}_w$ is the Hermicity preserving super-operator $\T_w: \sigma \mapsto \N_{qc} \circ \tr_{\sA}(\sigma)-\beta \, \N_{qc} \circ \N(\sigma)$. Thus $\range(\T^{\dag}_w) = \ker(\T_w)^{\perp}$. Hence, we can alternatively state Assumption~\ref{as2} as follows.

\begin{assumption}
We have $\range(\T^{\dag}_w) \bigcap \C_w = \ker(\T_w)^{\perp} \bigcap \C_w \neq \emptyset$. 
\end{assumption}

Note also that since $\C_u \subset \C_o \subset \C_w$, the following assumptions imply Assumption~\ref{as2}.

\begin{assumption}
We have $\range(\T^{\dag}_w) \bigcap \C_o = \ker(\T_w)^{\perp} \bigcap \C_o \neq \emptyset$. 
\end{assumption}

\begin{assumption}
We have $\range(\T^{\dag}_w) \bigcap \C_u = \ker(\T_w)^{\perp} \bigcap \C_u \neq \emptyset$. 
\end{assumption}

\subsection{Revisiting Classical MDPs: Deterministic Reduction and LP Formulation}\label{sec1}

We revisit the exposition of classical MDPs as briefly mentioned in Section~\ref{sec2new-1}. Here, we gather well-known results concerning MDPs, including deterministic reduction, dynamic programming, and the linear programming formulation of deterministic reduction. Although these findings might not present novel insights, their setup plays a crucial role in enabling a parallel examination within the quantum domain. While similar analyses apply to finite horizon cost and average cost criteria, we focus on the discounted cost structure.

Recall that the deterministic reduction of a MDP, denoted as d-MDP, can be described by a tuple 
$$
\left( \P(\sX), \P(\sX \times \sA),  \{\sA(\mu): \mu \in \P(\sX)\}, P, C \right),
$$ 
where $\P(\sX)$ is the state space and $\P(\sX \times \sA)$ is the action space, each endowed with weak convergence topology. The state transition function $P: \P(\sX \times \sA) \rightarrow \P(\sX)$ is given by
\begin{align}
P(\nu)(\,\cdot\,) = \sum_{(x,a) \in \sX \times \sA} p(\,\cdot\,|x,a) \, \nu(x,a). \label{eq1}
\end{align}
The \emph{one-stage cost} function $C$ is a linear function from $\P(\sX \times \sA)$ to $\R$ and is given by
\begin{align}
C(\nu) = \sum_{(x,a) \in \sX \times \sA} c(x,a) \, \nu(x,a) \eqqcolon \langle c,\nu \rangle. \label{eq2}
\end{align}
In this model, a policy is a sequence of classical channels $\bgamma=\{\gamma_{t}\}$ from $\sX$ to $\sX\times\sA$ where for all $t$
\begin{align*} 
\gamma_t(\mu) \in \sA(\mu), \,\, \forall \,\mu \in \P(\sX). %\label{inv_classical}
\end{align*}
The set of all policies is denoted by $\Gamma$. A \emph{stationary} policy is a constant sequence of channels $\bgamma=\{\gamma\}$ from $\sX$ to $\sX\times\sA$, where $\gamma(\mu) \in \sA(\mu)$ for all $\mu \in \P(\sX)$.

Using above definition for policies in d-MDPs, we can establish that for any  $\gamma_t:\P(\sX) \rightarrow \P(\sX\times\sA)$ satisfying $\gamma_t(\mu) \in \sA(\mu)$ for all $\mu \in \P(\sX)$, there exists a stochastic kernel $\pi_t$ from $\sX$ to $\sA$ such that $\gamma_t(\mu) = \mu \otimes \pi_t$, where $\mu \otimes \pi_t(x,a) \coloneqq \mu(x) \, \pi_t(a|x)$ (see Section 2.3 in \cite{SaSaYu24}). Let us write this relation as $\pi_t \coloneqq \Lambda(\gamma_t)$. This gives a bijective relation between the set of policies $\Pi$ of the original MDP and the set of policies $\Gamma$ of the d-MDP. By an abuse of notation, let us also denote this bijective relation by $\Lambda:\Pi \rightarrow \Gamma$. This bijective relation indeed establishes the equivalence of the original MDP and the deterministic reduction d-MDP.

\subsubsection{Dynamic Programming for d-MDP}\label{sub2sec1}

In this section, we introduce the dynamic programming equation for obtaining the optimal value function and policy, a well-established result in the literature (see \cite{BeSh78,HeLa96}).

The dynamic programming operator that gives the optimal value function as well as the optimal policy for d-MDP is the operator $\L:C_b(\P(\sX)) \rightarrow C_b(\P(\sX))$, which is defined as
$$
\L V(\mu) \coloneqq \min_{\nu \in \sA(\mu)} \left[ \langle c,\nu \rangle + \beta \, V(P(\nu)) \right].
$$
Note that we can alternatively define $\L$ using stochastic kernels as follows
$$
\L V(\mu) \coloneqq \min_{\pi \in \P(\sA|\sX)} \left[ \langle c,\mu \otimes \pi \rangle + \beta \, V(P(\mu \otimes \pi )) \right],
$$
where $\P(\sA|\sX)$ denotes the set of stochastic kernels from $\sX$ to $\sA$. From classical MDP theory, it is known that the operator $\L$ is $\beta$-contraction with respect sup-norm on $C_b(\P(\sX))$, and so, has a unique fixed point $V_{\mathrm{f}} \in C_b(\P(\sX))$. In the following, we establish a connection between optimal value function $V^*$ of d-MDP and $V_{\mathrm{f}}$. 

\begin{theorem}\label{dyn-eq-dmdp}
We have $V^*(\mu_0) = V_{\mathrm{f}}(\mu_0)$ for all $\mu_0 \in \P(\sX)$; that is, the unique fixed point of $\L$ is $V^*$.
\end{theorem}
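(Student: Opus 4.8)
The plan is to prove the theorem by the standard two-inequality argument for discounted-cost dynamic programming, exactly mirroring the structure already used in the proof of Theorem~\ref{dynamic-qmdp} for the quantum open-loop case. We want to show $V^* = V_{\mathrm{f}}$ where $V_{\mathrm{f}}$ is the unique fixed point of $\L$ (its existence following from the $\beta$-contraction property already quoted). The key observation is that for d-MDPs the policy at time $t$ is an appending-type channel $\gamma_t(\mu) = \mu \otimes \pi_t$, so the dynamics and cost decompose cleanly stage by stage, just as in the quantum open-loop setting.

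First I would establish the inequality $V^* \geq V_{\mathrm{f}}$. Fix any policy $\bgamma = \{\gamma_t\}$ with associated stochastic kernels $\{\pi_t\}$. Peeling off the first stage, the discounted cost satisfies
\begin{align*}
V(\bgamma,\mu_0) &= \langle c, \mu_0 \otimes \pi_0 \rangle + \beta \, V(\{\gamma_t\}_{t\geq 1}, P(\mu_0 \otimes \pi_0)) \\
&\geq \min_{\pi_0 \in \P(\sA|\sX)} \left[ \langle c, \mu_0 \otimes \pi_0 \rangle + \beta \, V^*(P(\mu_0 \otimes \pi_0)) \right] = \L V^*(\mu_0),
\end{align*}
using that $V(\{\gamma_t\}_{t\geq 1}, \cdot) \geq V^*(\cdot)$ by definition of the optimal value. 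Taking the infimum over $\bgamma$ gives $V^* \geq \L V^*$. Then, by monotonicity of $\L$ (which follows from its definition as a minimum of increasing-in-$V$ expressions) and induction, $V^* \geq \L^n V^*$ for every $n$; since $\L$ is a $\beta$-contraction, $\L^n V^* \to V_{\mathrm{f}}$ in sup-norm, hence $V^* \geq V_{\mathrm{f}}$.

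For the reverse inequality (achievability), I would select for each $\mu \in \P(\sX)$ a minimizer $\pi_\mu \in \argmin_{\pi} [\langle c, \mu \otimes \pi \rangle + \beta \, V_{\mathrm{f}}(P(\mu \otimes \pi))]$, which exists because $\P(\sA|\sX)$ is compact in the appropriate topology and the bracketed expression is continuous in $\pi$ when $V_{\mathrm{f}}$ is continuous. Define the (generally non-stationary, state-dependent) policy $\gamma_t(\mu) \coloneqq \mu \otimes \pi_{\mu_t}$, where $\mu_0 = \mu_0$ and $\mu_{t+1} = P(\mu_t \otimes \pi_{\mu_t})$. Unrolling the fixed-point identity $V_{\mathrm{f}}(\mu_0) = \langle c, \nu_0 \rangle + \beta \, V_{\mathrm{f}}(\mu_1) = \cdots$ gives, after $N$ steps,
\[
V_{\mathrm{f}}(\mu_0) = \sum_{t=0}^{N-1} \beta^t \langle c, \nu_t \rangle + \beta^N V_{\mathrm{f}}(\mu_N),
\]
and since $V_{\mathrm{f}}$ is bounded and $\beta < 1$, letting $N \to \infty$ yields $V_{\mathrm{f}}(\mu_0) = V(\{\gamma_t\}, \mu_0) \geq V^*(\mu_0)$. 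Combining the two inequalities completes the proof.

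The main obstacle — though it is genuinely routine in the classical finite-state setting — is handling the measurability/selection of the minimizing kernels $\pi_\mu$ and confirming continuity of $\L V_{\mathrm{f}}$, i.e., that $\L$ indeed maps $C_b(\P(\sX))$ into itself; this relies on compactness of $\sA(\mu)$ (equivalently, of the stochastic kernels, as $|\sX|$ and $|\sA|$ are finite) together with joint continuity of $(\mu,\pi) \mapsto \langle c, \mu \otimes \pi \rangle + \beta \, V_{\mathrm{f}}(P(\mu \otimes \pi))$, a standard measurable-selection argument (e.g.\ \cite[Proposition D.6]{HeLa96}). Everything else is bookkeeping that exactly parallels the quantum argument already presented.
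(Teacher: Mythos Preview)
Your proposal is correct and follows essentially the same approach as the paper's own proof: the two-inequality argument via one-stage peeling to get $V^* \geq \L V^*$ (hence $V^* \geq V_{\mathrm{f}}$ by monotonicity and contraction), followed by achievability via a greedy selector $\pi_\mu$ whose existence is justified by compactness of $\sA(\mu)$ and continuity, then unrolling the fixed-point identity. The paper's write-up is nearly identical in structure and detail, including the reference to \cite[Proposition D.6]{HeLa96}-style compactness/continuity reasoning.
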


\begin{proof}
Fix any policy $\bgamma = \{\gamma_t\}_{t\geq0}$ with the corresponding stochastic kernels $\{\pi_t\}_{t\geq0}$; that is, $\pi_t = \Lambda(\gamma_t)$. Then, the cost function of $\bgamma$ satisfies the following 
\begin{align*}
V(\bgamma,\mu_0) &= \langle c,\mu_0 \otimes \pi_0 \rangle + \beta \, V(\{\gamma_t\}_{t\geq 1},P(\mu_0\otimes\pi_0)) \\
&\geq \min_{\pi_0 \in \P(\sA|\sX)} \left[ \langle c,\mu_0 \otimes \pi_0 \rangle + \beta \, V(\{\gamma_t\}_{t\geq 1},P(\mu_0\otimes\pi_0)) \right] \\
&\geq \min_{\pi_0 \in \P(\sA|\sX)} \left[ \langle c,\mu_0 \otimes \pi_0 \rangle + \beta \, V^*(P(\mu_0\otimes\pi_0)) \right] = \L V^*(\mu_0).
\end{align*}
Since above inequality is true for any policy $\bgamma$, we have $V^* \geq \L V^*$. Using the latter inequality and the monotonicity of $\L$, one can also prove that $V^* \geq \L^nV^*$ for any positive integer $n$. By Banach fixed point theorem, it is known that $\L^n V^* \rightarrow V_{\mathrm{f}}$ in sup-norm as $n\rightarrow \infty$. Hence, $V^* \geq  V_{\mathrm{f}}$. 

To prove the converse inequality, for any $\mu$, let 
$$
\pi_{\mu} \in \argmin_{\pi \in \P(\sA|\sX)} \left[ \langle c,\mu \otimes \pi \rangle + \beta \, V_{\mathrm{f}}(P(\mu \otimes \pi )) \right].
$$
The existence of $\pi_{\mu}$ for any $\mu$ follows from the facts that $\sA(\mu)$ is compact and  $\langle c,\nu  \rangle + \beta \, V_{\mathrm{f}}(P(\nu))$ is continuous in $\nu$. We then have
\begin{align*}
V_{\mathrm{f}}(\mu_0) &=  \langle c,\mu_0 \otimes \pi_{\mu_0} \rangle + \beta \, V_{\mathrm{f}}(P(\mu_0 \otimes \pi_{\mu_0} )) \\
&\eqqcolon \langle c,\nu_0  \rangle + \beta \, V_{\mathrm{f}}(\mu_1) \\
&= \langle c,\nu_0  \rangle + \beta \, \left[\langle c,\mu_1 \otimes \pi_{\mu_1} \rangle + \beta \, V_{\mathrm{f}}(P(\mu_1 \otimes \pi_{\mu_1} )) \right] \\
&\eqqcolon \langle c,\nu_0  \rangle + \beta \, \left[\langle c,\nu_1 \rangle + \beta \, V_{\mathrm{f}}(\mu_2) \right] \\
&\phantom{x}\vdots \\
&= \sum_{t=0}^{N-1} \beta^t \, \langle c,\nu_t \rangle + \beta^N \, V_{\mathrm{f}}(\mu_N) \rightarrow V(\{\gamma_t\}_{t\geq0},\mu_0) \,\, \text{as} \,\ N\rightarrow\infty,
\end{align*}
where $\gamma_t(\mu) \coloneqq \mu \otimes \pi_{\mu_t}$ for each $t$. Hence, $V_{\mathrm{f}} \geq V^*$. This completes the proof.
\end{proof}

Upon analyzing the latter part of the above proof, it becomes apparent that the optimal policy $\bgamma^*$ for any given initial point $\mu_0$ can be constructed recursively in the following manner: $\gamma^*_0(\mu) = \mu \otimes \pi_{\mu_0}$ and $\gamma^*_{t+1}(\mu) = \mu \otimes \pi_{\mu_{t+1}}$ for $t\geq0$, where $\mu_{t+1} = P(\mu_t \otimes \pi_{\mu_t})$. As a result, in order to determine the optimal policy for $\mu_0$, it is necessary to know the optimal value function $V^*$. Furthermore, it appears that the optimal policy is contingent upon the initial distribution and varies with time (non-stationary). However, in the subsequent section, by utilizing a linear programming formulation of d-MDP, it is demonstrated that the optimal value function $V^*(\mu)$ is linear with respect to $\mu$, i.e., there exists a vector $\xi^* \in \R^{\sX}$ such that $V^*(\mu) = \langle \xi^*,\mu \rangle$. Subsequently, it is proven that an optimal stationary policy exists\footnote{Here, we are cheating a little bit because the linearity of the value function $V^*$ can also be demonstrated using the dynamic programming equation. Consequently, establishing the stationarity  of the optimal policy is straightforward by considering Dirac-delta measures first. However, we intentionally follow this approach to draw an analogy with the quantum case. In quantum scenarios, it is not possible to establish these facts solely through the dynamic programming equation.}.

\subsubsection{Linear Programming Formulation of d-MDP}\label{sub3sec1}

In this section, we derive the linear programming formulation for d-MDPs, which provides the optimal value function and policy. Our approach closely mirrors the method employed to derive the linear programming formulation for classical stochastic MDPs, as in \cite[Chapter 6]{HeLa96}. 

Consider the d-MDP 
\begin{align}
\left( \P(\sX), \P(\sX \times \sA),  \{\sA(\mu): \mu \in \P(\sX)\}, P, C \right) \nonumber
\end{align}
under $\beta$-discounted cost criterion with initial point $\mu_0$. Given any policy $\bgamma$, define the state-action occupation probability measure as follows
$$
\nu^{\bgamma} \coloneqq (1-\beta) \, \sum_{t=0}^{\infty} \beta^t \, \nu_t,
$$
where the term $(1-\beta)$ is the normalization constant. Then, one can write the normalized cost of $\bgamma$ in the following form
$$
(1-\beta) \, V(\bgamma,\mu_0) =  C(\nu^{\bgamma}) = \langle c,\nu^{\bgamma} \rangle
$$
as a result of linearity of $C$. Let $\mu^{\bgamma} \coloneqq \nu^{\bgamma}(\,\cdot\,\times \sA)$. We then have 
\begin{align}
\mu^{\bgamma}(\,\cdot\,) &= (1-\beta) \, \sum_{t=0}^{\infty} \beta^t \, \nu_t(\,\cdot\, \times \sA) \nonumber \\
&= (1-\beta) \, \sum_{t=0}^{\infty} \beta^t \, \gamma_t(\mu_t)(\,\cdot\, \times \sA) \nonumber \\
&= (1-\beta) \, \sum_{t=0}^{\infty} \beta^t \mu_t \,\, \text{(as $\gamma_t(\mu_t) \in \sA(\mu_t)$)} \nonumber \\
&= (1-\beta) \,\mu_0 + (1-\beta) \, \beta \sum_{t=1}^{\infty} \beta^{t-1} \, P(\nu_{t-1}) \nonumber \\
&= (1-\beta) \, \mu_0 + \beta \, P\left((1-\beta) \,\sum_{t=1}^{\infty} \beta^{t-1} \, \nu_{t-1}\right)  \,\, \text{(by linearity of $P$)}\nonumber \\
&= (1-\beta) \,\mu_0 + \beta \, P(\nu^{\bgamma}). \nonumber 
\end{align}
This computation suggests the definition of the linear operator $\T: \M(\sX\times\sA) \rightarrow \M(\sX)$ as follows
$$
\T(\nu) \coloneqq \mu - \beta \, P(\nu),
$$
where $\mu(\,\cdot\,) \coloneqq \nu(\,\cdot\, \times \sA)$ and $\M(\sE)$ denotes the set of finite signed measures on the set $\sE$. Therefore, $\T(\nu^{\bgamma}) = (1-\beta) \,\mu_0$ for any policy $\bgamma$. This observation leads to the following linear programming formulation of the d-MDP
\begin{align}
(\textbf{LP}) \text{                         }&\min_{\nu \in \M(\sX\times\sA)} \text{ } \langle c,\nu \rangle
\nonumber \\*
&\text{s.t.} \, \T(\nu) = (1-\beta) \,\mu_0 \,\, \text{and} \,\, \nu \geq 0. \nonumber 
\end{align} 
Note that if $\T(\nu) = (1-\beta) \,\mu_0$ and $\nu \geq 0$, then necessarily $\nu \in \P(\sX\times\sA)$. As a result, it is unnecessary to impose a prior restriction on $\nu$ to be confined within $\P(\sX\times\sA)$ in the linear programming formulation. The following theorem establishes the equivalence of $(\textbf{LP})$ and d-MDP. 

\begin{theorem}\label{lpformulation}
For any policy $\bgamma$, there exists a $\nu^{\bgamma} \in \M(\sX\times\sA)$ that is feasible for $(\textbf{LP})$ and $(1-\beta) \, V(\bgamma,\mu_0) = \langle c,\nu^{\bgamma} \rangle$. Conversely, if $\nu \in \M(\sX\times\sA)$ is feasible for $(\textbf{LP})$, then there exists a policy $\bgamma^{\nu}$ such that $(1-\beta) \, V(\bgamma^{\nu},\mu_0) = \langle c,\nu \rangle$. 
\end{theorem}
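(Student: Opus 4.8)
The plan is to establish the two directions separately. The forward direction is essentially the occupation-measure computation carried out immediately before the theorem statement, so I would only need to add convergence and positivity. The converse is the substantive part: I would extract a \emph{stationary} policy from a feasible $\nu$ by disintegration and then identify its occupation measure with $\nu$ via a contraction argument.

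For the forward direction, given a policy $\bgamma$ with induced state-action process $\{\nu_t\}\subset\P(\sX\times\sA)$, I would set $\nu^{\bgamma}\coloneqq(1-\beta)\sum_{t\ge0}\beta^t\nu_t$; since $\sX\times\sA$ is finite and each $\nu_t$ has total mass $1$, the series converges absolutely in $\M(\sX\times\sA)$ and $\nu^{\bgamma}\ge0$. The computation preceding the theorem shows that $\mu^{\bgamma}\coloneqq\nu^{\bgamma}(\,\cdot\,\times\sA)$ obeys $\mu^{\bgamma}=(1-\beta)\mu_0+\beta P(\nu^{\bgamma})$, i.e.\ $\T(\nu^{\bgamma})=(1-\beta)\mu_0$, so $\nu^{\bgamma}$ is feasible for $(\textbf{LP})$, and linearity of $C$ gives $\langle c,\nu^{\bgamma}\rangle=(1-\beta)\sum_t\beta^t\langle c,\nu_t\rangle=(1-\beta)V(\bgamma,\mu_0)$.

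For the converse, let $\nu$ be feasible for $(\textbf{LP})$. First I would note that evaluating $\T(\nu)=(1-\beta)\mu_0$ on all of $\sX$, together with $P(\nu)(\sX)=\sum_{x,a}p(\sX|x,a)\,\nu(x,a)=\nu(\sX\times\sA)$ (each $p(\,\cdot\,|x,a)$ being a probability measure), forces $\nu(\sX\times\sA)=1$, so $\nu\in\P(\sX\times\sA)$. Writing $\mu(\,\cdot\,)\coloneqq\nu(\,\cdot\,\times\sA)$, I would disintegrate $\nu$ as $\nu=\mu\otimes\pi$ for a stochastic kernel $\pi$ from $\sX$ to $\sA$ (setting $\pi(\,\cdot\,|x)$ arbitrarily when $\mu(x)=0$), and define the stationary policy $\bgamma^{\nu}=\{\gamma\}$ by $\gamma(\eta)\coloneqq\eta\otimes\pi$, which is admissible since $(\eta\otimes\pi)(\,\cdot\,\times\sA)=\eta$. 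The remaining step is to show that the state-action occupation measure $\nu^{\bgamma^{\nu}}$ of $\bgamma^{\nu}$ started at $\mu_0$ equals $\nu$: since the control applied at each time $t$ is $\gamma(\mu_t^{\bgamma^{\nu}})=\mu_t^{\bgamma^{\nu}}\otimes\pi$, one has $\nu^{\bgamma^{\nu}}=\mu^{\bgamma^{\nu}}\otimes\pi$ with $\mu^{\bgamma^{\nu}}$ the state occupation measure, and the earlier computation gives $\mu^{\bgamma^{\nu}}=(1-\beta)\mu_0+\beta P(\mu^{\bgamma^{\nu}}\otimes\pi)$, while feasibility of $\nu=\mu\otimes\pi$ reads exactly $\mu=(1-\beta)\mu_0+\beta P(\mu\otimes\pi)$. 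Thus $\mu$ and $\mu^{\bgamma^{\nu}}$ are both fixed points of $F(\eta)\coloneqq(1-\beta)\mu_0+\beta P(\eta\otimes\pi)$ on the complete space $\M(\sX)$; since $\eta\mapsto\eta\otimes\pi$ and $P$ are nonexpansive in total-variation norm, $F$ is a $\beta$-contraction, so its fixed point is unique, giving $\mu^{\bgamma^{\nu}}=\mu$ and hence $\nu^{\bgamma^{\nu}}=\mu\otimes\pi=\nu$. Then $(1-\beta)V(\bgamma^{\nu},\mu_0)=\langle c,\nu^{\bgamma^{\nu}}\rangle=\langle c,\nu\rangle$.

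I expect the main obstacle to be exactly this last identification. Feasibility of $\nu$ only says that its $\sX$-marginal solves the occupation-measure fixed-point equation associated with the extracted kernel $\pi$, not a priori that $\nu$ itself is generated by the policy $\bgamma^{\nu}$; closing that gap requires the contraction/uniqueness argument, and one must be careful that the disintegration produces $\nu=\mu\otimes\pi$ exactly (so that the state occupation measure determines the full state-action occupation measure) and that $F$ is genuinely a contraction on $\M(\sX)$ rather than merely on $\P(\sX)$.
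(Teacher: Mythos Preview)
Your proposal is correct and follows essentially the same approach as the paper: both directions match, including the disintegration $\nu=\mu\otimes\pi$ and the construction of the stationary policy $\gamma(\eta)=\eta\otimes\pi$ in the converse. The only cosmetic difference is that the paper establishes $\mu=\mu^{\bgamma^{\nu}}$ by explicitly iterating the relation $\mu=(1-\beta)\mu_0+\beta P\circ\gamma(\mu)$ and passing to the limit, whereas you package the same computation as uniqueness of the fixed point of the $\beta$-contraction $F(\eta)=(1-\beta)\mu_0+\beta P(\eta\otimes\pi)$; these are the same argument.
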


\begin{proof}
Given any policy $\bgamma$, consider the state-action occupation probability measure $\nu^{\bgamma}$. In view of discussion and computation above,  $\nu^{\bgamma}$ is feasible for $(\textbf{LP})$ and $(1-\beta) \,V(\bgamma,\mu_0) = \langle c,\nu^{\bgamma} \rangle$, which completes the proof of the first part.

Conversely, let $\nu \in \M(\sX\times\sA)$ be feasible for $(\textbf{LP})$. Disintegrate $\nu$ as follows
$$
\nu(x,a) = \mu(x) \, \pi(a|x), 
$$
where $\pi$ is a stochastic kernel from $\sX$ to $\sA$. Consider the stationary policy $\bgamma =\{\gamma\}$, where $\gamma: \P(\sX) \rightarrow \P(\sX\times\sA)$ is given by
$$
\gamma(\xi)(x,a) = \sum_{y \in \sX} \pi(a|x) \, \delta_y(x) \, \xi(y) = \xi(x) \, \pi(a|x), \,\, \xi \in \P(\sX). 
$$
Hence $\gamma(\mu) = \nu$. Since $\T(\nu) = (1-\beta) \, \mu_0$, for any $N\geq 1$, we have 
\begin{align}
\mu &= (1-\beta) \, \mu_0 + \beta \, P(\nu) \nonumber \\
&= (1-\beta) \, \mu_0 + \beta \, P \circ \gamma(\mu) \nonumber \\
&= (1-\beta) \, \mu_0 + \beta \, P \circ \gamma((1-\beta) \,\mu_0 + \beta \, P(\nu)) \nonumber \\
&= (1-\beta) \,\mu_0 + (1-\beta) \, \beta \, P \circ \gamma(\mu_0) + \beta^2 \, P \circ \gamma \circ P(\nu) \nonumber \\
&\eqqcolon (1-\beta) \,\mu_0 + (1-\beta) \, \beta \, \mu_1 + \beta^2 \, P \circ \gamma \circ P \circ \gamma(\mu) \,\, \text{(by linearity of $\gamma$ and $P$)} \nonumber \\
&\phantom{x}\vdots \nonumber \\
&= (1-\beta) \, \sum_{t=0}^{N-1} \beta^t \, \mu_t + \beta^N \, \left(P\circ\gamma\right)^N(\mu).\nonumber 
\end{align}
Then, as $N \rightarrow \infty$, we have
$$
\mu = (1-\beta) \, \sum_{t=0}^{\infty} \beta^t \, \mu_t,
$$
where $\mu_t$ is the state at time $t$ for d-MDP under the stationary policy $\bgamma$.
Now, consider the state-action occupation probability measure $\nu^{\bgamma}$ induced by $\bgamma$. Note that 
\begin{align}
\nu^{\bgamma} &\coloneqq  (1-\beta) \,\sum_{t=0}^{\infty} \beta^t \, \nu_t \nonumber \\
&= (1-\beta) \, \sum_{t=0}^{\infty} \beta^t \, \gamma(\mu_t)  \nonumber \\
&= \gamma\left((1-\beta) \, \sum_{t=0}^{\infty} \beta^t \, \mu_t\right) \,\, \text{(by linearity of $\gamma$)} \nonumber \\
&= \gamma(\mu) = \nu. \nonumber 
\end{align}
Therefore, $(1-\beta) \, V(\bgamma,\mu_0) = \langle c,\nu^{\gamma} \rangle = \langle c,\nu \rangle$. This completes the proof of the second part.
\end{proof}

Above theorem establishes that 
\begin{align}
\inf_{\bgamma \in \Gamma} (1-\beta) \, V(\bgamma,\mu_0) \equiv (\textbf{LP}) \text{                         }&\min_{\nu \in \M(\sX\times\sA)} \text{ } \langle c,\nu \rangle
\nonumber \\*
&\text{s.t.} \, \T(\nu) = (1-\beta) \,\mu_0 \,\, \text{and} \,\, \nu \geq 0. \nonumber 
\end{align} 
We note that when we analyze the latter part of the proof of the theorem above, it becomes evident that we can, without losing generality, limit our search for an optimal policy to stationary policies. However, in the following part of this section, in order to establish a connection with the quantum counterpart of the problem, we demonstrate the same result using a different method. To be precise, we first illustrate that the optimal value function $V^*$ is linear with respect to $\mu$, and based on this observation, we establish the existence of an optimal stationary policy. Hence, the subsequent part of this section is not essential for the analysis of d-MDPs; rather, it is included here to establish a link with the quantum version of the problem.

First of all, let us obtain the dual of the above LP. Indeed, we have
\begin{align*} 
(\textbf{LP})
& = \min_{\nu \in \M(\sX\times\sA)} \text{ } \langle c,\nu \rangle
\,\, \text{s.t.} \, \T(\nu) = (1-\beta) \,\mu_0 \,\, \text{and} \,\, \nu \geq 0 \nonumber \\
&=\min_{\nu \geq 0} \, \max_{\xi \in \R^{\sX}} \, \langle c,\nu \rangle  + \langle \xi, (1-\beta) \,\mu_0 -\T(\nu) \rangle \\
&\text{(by Sion min-max theorem \cite{Sio58})} \\
&= \max_{\xi \in \R^{\sX}} \, \min_{\nu \geq 0} \, \langle c,\nu \rangle  + \langle \xi, (1-\beta) \,\mu_0 -\T(\nu) \rangle \\
&= \max_{\xi \in \R^{\sX}} \, \min_{\nu \geq 0} \, \langle \xi,(1-\beta) \, \mu_0 \rangle  + \langle c-{\T}^{\dag}(\xi), \nu \rangle \\
&=\max_{\xi \in \R^{\sX}} \text{ }  \langle \xi,(1-\beta) \, \mu_0 \rangle 
\,\, \text{s.t.} \, c-{\T}^{\dag}(\xi) \geq  0 \nonumber \\
&\eqqcolon (\textbf{LP}^{\dag}), 
\end{align*} 
where ${\T}^{\dag}:\R^{\sX} \rightarrow \R^{\sX\times\sA}$ is the adjoint of $\T$ and is given by
$$
{\T}^{\dag}(\xi)(x,a) = \xi(x) - \beta \, P^{\dag}(\xi)(x,a) \coloneqq \xi(x) - \beta \, \sum_{y \in \sX} \xi(y) \, p(y|x,a).
$$
For the dual LP, we have the following equivalent formulations\footnote{In the quantum version of the problem, regrettably, we cannot demonstrate the equivalence of these formulations for the dual semi-definite program. Consequently, we utilize an alternative approach to establish the existence of an optimal stationary policy in the quantum case.}
\begin{align*}
&(\textbf{LP}^{\dag}) \\
&=\max_{\xi \in \R^{\sX}} \text{ }  \langle \xi,(1-\beta) \, \mu_0 \rangle 
\,\, \text{s.t.}  \, c + \beta \, P^{\dag}(\xi) \geq \xi \\
&=\max_{\xi \in \R^{\sX}} \text{ }  \langle \xi,(1-\beta) \, \mu_0 \rangle 
\,\, \text{s.t.}  \, \langle c + \beta \, P^{\dag}(\xi), \delta_{x,a} \rangle \geq \langle \xi,\delta_{x,a} \rangle \,\, \forall (x,a) \in \sX\times\sA \\
&=\max_{\xi \in \R^{\sX}} \text{ }  \langle \xi,(1-\beta) \, \mu_0 \rangle 
\,\, \text{s.t.}  \, \langle c + \beta \, P^{\dag}(\xi), \delta_{x} \otimes \pi \rangle \geq \langle \xi,\delta_{x} \otimes \pi \rangle \,\, \forall x \in \sX, \,\, \forall \pi \in \P(\sA) \\
&=\max_{\xi \in \R^{\sX}} \text{ }  \langle \xi,(1-\beta) \, \mu_0 \rangle 
\,\, \text{s.t.}  \, \langle c + \beta \, P^{\dag}(\xi), \delta_{x} \otimes \pi \rangle \geq \langle \xi,\delta_{x} \otimes \pi \rangle \,\, \forall x \in \sX, \,\, \forall \pi \in \P(\sA|\sX) \\
&=\max_{\xi \in \R^{\sX}} \text{ }  \langle \xi,(1-\beta) \, \mu_0 \rangle 
\,\, \text{s.t.}  \, \langle c + \beta \, P^{\dag}(\xi), \mu \otimes \pi \rangle \geq \langle \xi,\mu \otimes \pi \rangle \,\, \forall \mu \in \P(\sX), \,\, \forall \pi \in \P(\sA|\sX)\\ 
&=\max_{\xi \in \R^{\sX}} \text{ }  \langle \xi,(1-\beta) \, \mu_0 \rangle 
\,\, \text{s.t.}  \, \langle c + \beta \, P^{\dag}(\xi), \nu \rangle \geq \langle \xi,\nu \rangle \,\, \forall \nu \in \P(\sX\times\sA).
\end{align*}  
Now for any vector $\xi \in \R^{\sX}$, define $\L_d\xi \in \R^{\sX}$ as 
\begin{align*}
\L_d\xi(x) &\coloneqq \min_{\pi \in \P(\sA|\sX)} \langle c+ \beta \, P^{\dag}(\xi), \delta_{x} \otimes \pi \rangle \\
&=\min_{\pi \in \P(\sA|\sX)} \left[ \langle c,\delta_x \otimes \pi \rangle + \beta \, \langle \xi,P(\delta_x \otimes \pi ) \rangle \right] \\
&= \min_{\pi \in \P(\sA)} \left[ \langle c,\delta_x \otimes \pi \rangle + \beta \, \langle \xi,P(\delta_x \otimes \pi ) \rangle \right]  \\
&=\min_{a \in \sA} \left[ \langle c,\delta_{x,a} \rangle + \beta \, \langle \xi,P(\delta_{x,a}) \rangle \right] \\
&=\min_{a \in \sA} \left[ c(x,a) + \beta \, \sum_{y \in \sX} \xi(y) \, p(y|x,a) \right].
\end{align*}
The final expression suggests that $\L_d$ is indeed the dynamic programming operator for the original MDP. Hence, it is known that $\L_d$ is $\beta$-contraction under sup-norm on $\R^{\sX}$ with a unique fixed point $\xi^* \in \R^{\sX}$. Moreover, since $\L_d$ is also monotone, if $\L_d \xi \geq \xi$, then $\xi^* \geq \xi$. Given these observations, suppose $\xi$ is a feasible point for $(\textbf{LP}^{\dag})$, meaning that $\langle c + \beta \, P^{\dag}(\xi), \delta_{x} \otimes \pi \rangle \geq \langle \xi,\delta_{x} \otimes \pi \rangle$ for all $x \in \sX$ and all $\pi \in \P(\sA|\sX)$. Consequently, $\L_d \xi \geq \xi$, which implies that $\xi^* \geq \xi$. Furthermore, it is evident that $\xi^*$ is also a feasible point for $(\textbf{LP}^{\dag})$. As a result, $\xi^*$ is the optimal solution for $(\textbf{LP}^{\dag})$. It should be noted that $\xi^*$ is indeed the optimal solution for $(\textbf{LP}^{\dag})$ regardless of the initial point $\mu_0$, meaning that for any $\mu_0$, $\xi^*$ is a common optimal solution for $(\textbf{LP}^{\dag})$. Since there is no duality gap between $(\textbf{LP}^{\dag})$ and $(\textbf{LP})$ for any $\mu_0$, it follows that $(1-\beta) \, V^*(\mu_0) = (1-\beta) \, \langle \xi^*,\mu_0 \rangle$ for any $\mu_0$. This establishes that the optimal value function of d-MDP $V^*(\mu) = \langle \xi^*,\mu \rangle$ is linear.

Note that for any $\mu \in \P(\sX)$, since $V^* = \L V^*$ and $\xi^* = \L_d \xi^*$, we have 
\begin{align*}
\langle \xi^*,\mu \rangle &= \min_{\pi \in \P(\sA|\sX)}  \left[ \langle c,\mu \otimes \pi \rangle + \beta \, \langle \xi^*,P(\mu \otimes \pi ) \rangle \right]  \eqqcolon \L \langle \xi^*,\mu \rangle \\
&= \sum_{x \in \sX} \xi^*(x) \, \mu(x) \\
&= \sum_{x \in \sX} \L_d\xi^*(x) \, \mu(x) \\
&= \sum_{x \in \sX} \min_{\pi \in \P(\sA)} \left[ \langle c,\delta_x \otimes \pi \rangle + \beta \, \langle \xi^*,P(\delta_x \otimes \pi ) \rangle \right]  \, \mu(x). 
\end{align*}
Above equalities imply that if for each $x \in \sX$ 
$$\pi_x^* \in \argmin_{\pi \in \P(\sA)} \left[ \langle c,\delta_x \otimes \pi \rangle + \beta \, \langle \xi^*,P(\delta_x \otimes \pi ) \rangle \right],$$ 
then the stochastic kernel 
$$\{\pi^*(\,\cdot\,|x)\}_{x \in \sX} \coloneqq \{\pi_x^*\}_{x \in \sX} \in \argmin_{\pi \in \P(\sA|\sX)}  \left[ \langle c,\mu \otimes \pi \rangle + \beta \, \langle \xi^*,P(\mu \otimes \pi ) \rangle \right]$$ 
Since $\mu$ is arbitrary, we have 
$$\pi^* = \pi_{\mu} \coloneqq \argmin_{\pi \in \P(\sA|\sX)}  \left[ \langle c,\mu \otimes \pi \rangle + \beta \, \langle \xi^*,P(\mu \otimes \pi ) \rangle \right]$$ 
for any $\mu \in \P(\sX)$. In other words, regardless of the value of $\mu$, $\pi^*$ is the common minimizer of the dynamic programming equation. Hence the stationary policy $\bgamma^* = \{\gamma^*\}$ with $\gamma^*(\mu) = \mu \otimes \pi^*$ is optimal by dynamic programming principle.

%\bibliographystyle{siamplain}
%\bibliography{references}

\begin{thebibliography}{15}

\bibitem{AlBo06}
{\sc C.~Aliprantis and K.~Border}, {\em Infinite Dimensional Analysis}, Berlin,
  Springer, 3rd ed., 2006.

\bibitem{Alt03}
{\sc C.~Altafini}, {\em Controllability properties for finite dimensional
  quantum markovian master equations}, Journal of Mathematical Physics, 44
  (2003), pp.~2357--2372.

\bibitem{Alt04}
{\sc C.~Altafini}, {\em Coherent control of open quantum dynamical systems},
  Phys. Rev. A, 70 (2004), p.~062321.

\bibitem{AlTi12}
{\sc C.~Altafini and F.~Ticozzi}, {\em Modeling and control of quantum systems:
  An introduction}, IEEE Transactions on Automatic Control, 57 (2012),
  pp.~1898--1917.
  
\bibitem{PaLu08}
{\sc P.~Aniello and C.~Lupo}, {\em On the relation between schmidt coefficients
  and entanglement}, Open Systems and Information Dynamics, 16 (2008),
  pp.~127--143.

\bibitem{BaBaAa14}
{\sc J.~Barry, D.~T. Barry, and S.~Aaronson}, {\em Quantum partially observable
  {M}arkov decision processes}, Phys. Rev. A, 90 (2014), p.~032311.
  
\bibitem{BeKoSc24}
{\sc J~.Berberich, R.L.~Kosut, T.~Schulte-Herbruggen},
{\em Bringing Quantum Systems under Control: A Tutorial Invitation to Quantum Computing and Its Relation to Bilinear Control Systems}, In IEEE Conference on Decision and Control, Milan, Italy, 2024.

\bibitem{BeSh78}
{\sc D.~P. Bertsekas and S.~E. Shreve}, {\em Stochastic optimal control: The
  discrete time case}, Academic Press New York, 1978.

\bibitem{BoSiSu221}
{\sc U.~Boscain, M.~Sigalotti, and D.~Sugny}, {\em Introduction to the
  pontryagin maximum principle for quantum optimal control}, PRX Quantum, 2
  (2021), p.~030203.

\bibitem{BrPe07}
{\sc H.-P. Breuer and F.~Petruccione}, {\em {The Theory of Open Quantum
  Systems}}, Oxford University Press, 2007.

\bibitem{Dal08}
{\sc D.~D'Alessandro}, {\em Introduction to Quantum Control and Dynamics},
  Chapman \& Hall/CRC Applied Mathematics \& Nonlinear Science, CRC Press,
  2008.

\bibitem{DoPe22}
{\sc D.~Dong and I.~Petersen}, {\em Quantum estimation, control and learning:
  Opportunities and challenges}, Annual Reviews in Control, 54 (2022),
  pp.~243--251.
  
\bibitem{GoBeSm05}
{\sc J.~Gough, V.~P. Belavkin, and O.~G. Smolyanov}, {\em
  Hamilton-jacobi-bellman equations for quantum optimal feedback control},
  Journal of Optics B: Quantum and Semiclassical Optics, 7 (2005), p.~S237.
  
\bibitem{Gri09}
{\sc J.R.~Grice}, {\em Discrete Quantum Control}, University of California, San Diego, PhD Thesis (2009).

\bibitem{HeLa96}
{\sc O.~Hern\'andez-Lerma and J.~Lasserre}, {\em Discrete-Time {M}arkov Control
  Processes: Basic Optimality Criteria}, Springer, 1996.

\bibitem{HoZhKoBrRa16}
{\sc D.~Hocker, Y.~Zheng, R.L.~Kosut, T.~Brun, H.~Rabitz}, {\em Survey of control performance in quantum information processing}, Quantum Inf Process 15, 2016, pp. 4361-4390.

\bibitem{BiGo12}
{\sc B.~Hwang and H.-S. Goan}, {\em Optimal control for non-markovian open
  quantum systems}, Phys. Rev. A, 85 (2012), p.~032321.

\bibitem{JaKo18} 
{\sc James, M. R., Kosut, R. L.}, {\em Quantum Estimation and Control}, In Control System Applications, Second Edition, pp. 31-1-31-42, 2018, CRC Press.
  
\bibitem{KhWi20}
{\sc S.~Khatri and M.~M. Wilde}, {\em Principles of quantum communication
  theory: A modern approach}, 2020, \url{https://arxiv.org/abs/2011.04672}.
  
\bibitem{KoRaWa03}  
{\sc R. L.~Kosut, H.~Rabitz, I. A.~Walmsley}, {\em Indirect Adaptive Control of Quantum Systems}, IFAC Proceedings Volumes, 36(2), 2003, pp. 227-232.

\bibitem{KoRaWa03b}  
{\sc R. L.~Kosut, H.~Rabitz, I. A.~Walmsley}, {\em Maximum likelihood identification of quantum systems for control design}, IFAC Proceedings Volumes, 36(16), 2003, pp. 121-126.

\bibitem{KoRaGr13}  
{\sc R. L.~Kosut, H.~Rabitz, M. D.~Grace}, {\em Adaptive quantum control via direct fidelity estimation and indirect model-based parametric process tomography}, In IEEE Conference on Decision and Control, Firenze, Italy, 2013, pp. 1247-1252.
  
\bibitem{KoGrMaBrRa13}  
{\sc R. L.~Kosut and M.D.~Grace, C.~Brif}, {\em Robust control of quantum gates via sequential convex programming}, Phys. Rev. A, 88(5), 2013, pp. 052326.
  
\bibitem{KoBhRa22}
{\sc R. L.~Kosut, G.~Bhole, H.~Rabitz}, {\em Robust Quantum Control: Analysis \& Synthesis via Averaging}, 2022, \url{https://arxiv.org/abs/2208.14193}.

\bibitem{KoBhRa23}  
{\sc R. L.~Kosut and H.~Rabitz}, {\em Robust Quantum Control for Set-Membership Hamiltonian Uncertainty}, In IEEE International Conference on Quantum Computing and Engineering (QCE), Bellevue, WA, USA, 2023, pp. 1304-1307.

\bibitem{Lin76}
{\sc G.~Lindblad}, {\em {On the generators of quantum dynamical semigroups}},
  Communications in Mathematical Physics, 48 (1976), pp.~119 -- 130.

\bibitem{NiCh02}
{\sc M.~Nielsen and I.~Chuang}, {\em Quantum computation and quantum
  information}, Cambridge University Press, Cambridge, 2010.

\bibitem{Par92}
{\sc K.~Parthasarathy}, {\em An Introduction to Quantum Stochastic Calculus},
  Birk\"ahuser Basel, 1992.

\bibitem{PeDaRa88}
{\sc A.~P. Peirce, M.~A. Dahleh, and H.~Rabitz}, {\em Optimal control of
  quantum-mechanical systems: Existence, numerical approximation, and
  applications}, Phys. Rev. A, 37 (1988), pp.~4950--4964.

\bibitem{SaYu22}
{\sc N.~Saldi and S.~Y{\"u}ksel}, {\em {Geometry of information structures,
  strategic measures and associated stochastic control topologies}},
  Probability Surveys, 19 (2022), pp.~450 -- 532.

\bibitem{SaYuLi16}
{\sc N.~Saldi, S.~Yuksel, and T.~Linder}, {\em Near optimality of quantized
  policies in stochastic control under weak continuity conditions}, Journal of
  Mathematical Analysis and Applications, 435 (2016), pp.~321--337.

\bibitem{SaYuLi17}
{\sc N.~Saldi, S.~Y\"{u}ksel, and T.~Linder}, {\em On the asymptotic optimality
  of finite approximations to markov decision processes with borel spaces},
  Mathematics of Operations Research, 42 (2017), pp.~945--978.
 
 
\bibitem{SaSaYu24}
{\sc N.~Saldi, S.~Sanjari, and S.~Y\"{u}ksel}, {\em Quantum Markov Decision Processes: General Theory, Approximations, and Classes of Policies},
SIAM Journal on Control and Optimization, 2024. 

  
\bibitem{Sca19}
{\sc V.~Scarani}, {\em Bell Nonlocality}, Oxford University Press, 2019.

\bibitem{Shi13}
{\sc M.~E. Shirokov}, {\em Reversibility conditions for quantum channels and
  their applications}, Sbornik: Mathematics, 204 (2013), p.~1215.
  
\bibitem{Sio58}
{\sc M.~Sion}, {\em On general minimax theorems}, Pacific Journal of
  Mathematics, 8 (1958), pp.~171--176.

\bibitem{TiVi09}
{\sc F.~Ticozzi and L.~Viola}, {\em Analysis and synthesis of attractive
  quantum markovian dynamics}, Automatica, 45 (2009), pp.~2002--2009.

\bibitem{Wat18}
{\sc J.~Watrous}, {\em The Theory of Quantum Information}, Cambridge University
  Press, 2018.

\bibitem{Wil13}
{\sc M.~M. Wilde}, {\em Quantum Information Theory}, Cambridge University
  Press, 2013.
  
\bibitem{WoSaVa00}
{\sc H.~Wolkowicz, R.~Saigal, and L.~Vandenberghe}, {\em Handbook of
  Semidefinite Programming: Theory, Algorithms, and Applications},
  International Series in Operations Research \& Management Science, Springer
  US, 2000.

\bibitem{YiFeYi21}
{\sc M.~Ying, Y.~Feng, and S.-G. Ying}, {\em Optimal policies for quantum
  {M}arkov decision processes}, International Journal of Automation and
  Computing, 18 (2021), pp.~410--421.

\bibitem{YiYi18}
{\sc S.~Ying and M.~Ying}, {\em Reachability analysis of quantum {M}arkov
  decision processes}, Information and Computation, 263 (2018), pp.~31--51.

\bibitem{ZhRa98}
{\sc W.~Zhu and H.~Rabitz}, {\em A rapid monotonically convergent iteration
  algorithm for quantum optimal control over the expectation value of a
  positive definite operator}, The Journal of Chemical Physics, 109 (1998),
  pp.~385--391.

\end{thebibliography}

\end{document}